\documentclass[10pt,journal,compsoc]{IEEEtran}
\usepackage[utf8]{inputenc}
\usepackage{amsmath}
\usepackage{harpoon}
\usepackage{amssymb}
\usepackage{amsfonts}
\usepackage{subfig}
\usepackage{amsthm}
\usepackage{color}
\usepackage{autonum}
\usepackage{multicol}
\usepackage[textwidth=4.2cm,textsize=scriptsize]{todonotes}  \setlength{\marginparwidth}{4cm}
\LetLtxMacro{\todom}{\todo}
\renewcommand{\todo}[1]{\todom[inline]{#1}}

\newcommand{\set}[1]{\left\lbrace #1 \right\rbrace}

\renewcommand\vec{\mathbf}

\newcommand{\ignore}[1]{}

\usepackage[normalem]{ulem}

\newcommand{\va}{\vec{a}}
\newcommand{\vc}{\vec{c}}
\newcommand{\vd}{\vec{d}}

\newcommand{\vr}{\vec{r}}

\newcommand{\vo}{\vec{o}}
\newcommand{\vM}{\vec{M}}

\newcommand{\vx}{\vec{x}}

\newcommand{\vu}{\vec{u}}

\newcommand{\vb}{\vec{b}}
\newcommand{\bfr}{\mathbf{r}}
\newcommand{\bfp}{\mathbf{p}}

\newcommand{\Rp}{\mathbb{R}_{\geq 0}}

\def\longrightharpoonup{\relbar\joinrel\rightharpoonup}
\def\longleftharpoondown{\leftharpoondown\joinrel\relbar}
\def\longrightleftharpoons{\mathop{\vcenter{\hbox{\ooalign{\raise1pt\hbox{$\longrightharpoonup\joinrel$}\crcr\lower1pt\hbox{$\longleftharpoondown\joinrel$}}}}}}
\def\rxn{\mathop{\rightarrow}\limits}  

\newcommand{\slto}{\to^1}      

\ProvidesPackage{commands-tam}
\usepackage{amsfonts}

\renewcommand{\emptyset}{\varnothing}

\newcommand{\pair}[2]{\left\langle #1 , #2 \right\rangle}









\newcommand{\N}{\mathbb{N}}

\newcommand{\R}{\mathbb{R}}






\newcommand{\calC}{{\cal C}}
















\newcommand{\dom}{{\rm dom} \;}

\newtheorem{definition}{Definition}
\newtheorem{theorem}{Theorem}
\newtheorem{lemma}{Lemma}

\newtheorem{corollary}{Corollary}




\begin{document}

\title{Composable Rate-Independent Computation in Continuous Chemical Reaction Networks}

\author{Cameron Chalk, Niels Kornerup, Wyatt Reeves, David Soloveichik
\IEEEcompsocitemizethanks{\IEEEcompsocthanksitem The authors were with the University of Texas at Austin.\protect
\IEEEcompsocthanksitem C. Chalk and D. Soloveichik were supported in part by National Science Foundation grants CCF-1618895 and CCF-1652824.\protect\\}}

\IEEEtitleabstractindextext{%
\begin{abstract}
Biological regulatory networks depend upon chemical interactions to process information.  
Engineering such molecular computing systems is a major challenge for synthetic biology and related fields.
The chemical reaction network (CRN) model idealizes chemical interactions, allowing rigorous reasoning about the computational power of chemical kinetics.
Here we focus on function computation with CRNs, where we think of the initial concentrations of some species as the input and the equilibrium concentration of another species as the output. Specifically, we are concerned with CRNs that are rate-independent (the computation must be correct independent of the reaction rate law) and composable ($f\circ g$ can be computed by concatenating the CRNs computing $f$ and $g$).
Rate independence and composability are important engineering desiderata,
permitting implementations that violate mass-action kinetics, or even ``well-mixedness", 
and allowing the systematic construction of complex computation via modular design. 
We show that to construct composable rate-independent CRNs, it is necessary and sufficient to ensure that the output species of a module is not a reactant in any reaction within the module.
We then exactly characterize the functions computable by such CRNs as superadditive, positive-continuous, and piecewise rational linear.
Thus composability severely limits rate-independent computation unless more sophisticated input/output encodings are used.

\end{abstract}

}
\maketitle

\IEEEraisesectionheading{\section{Introduction}}
A ubiquitous form of biological information processing occurs in complex chemical regulatory networks in cells.
The formalism of chemical reaction networks (CRNs) has been widely used for modelling the interactions underlying such natural chemical computation. 
More recently CRNs have also become a useful model for designing synthetic molecular computation.
In particular, DNA strand displacement cascades can in principle realize arbitrary CRNs, thus motivating the study of CRNs as a programming language~\cite{cardelli2011strand,chen2013programmable,soloveichik2010dna,srinivas2017enzyme}.
The applications of synthetic chemical computation include reprogramming biological regulatory networks, as well as embedding control modules in environments that are inherently incompatible with traditional
electronic controllers for biochemical, nanotechnological, or medical applications.

The study of information processing within biological CRNs, as well the engineering of CRN functionality in artificial systems, motivates the exploration of the computational power of CRNs.
In general, CRNs are capable of Turing universal computation~\cite{fages2017strong}; however, we are often interested in restricted classes of CRNs which may have certain desired properties.
Previous work distinguished two programmable features of CRNs: the stoichiometry of the reactions and the rate laws governing the reaction speeds~\cite{rate-indep}.
As an example of computation by stoichiometry alone, 
consider the reaction $2X \rightarrow Y$.
We can think of the concentrations of species $X$ and $Y$ to be the input and output, respectively.
Then this reaction effectively computes $f(x) = \frac{x}{2}$, as in the limit of time going to infinity, the system converges to producing one unit of $Y$ for every two units of $X$ initially present.
The reason we are interested in computation via stoichiometry is that it is fundamentally \emph{rate-independent},
requiring no assumptions on the rate law (e.g., that the reaction occurs at a rate proportional to the product of the concentrations of the reactants).
This allows the computation to be correct independent of experimental conditions such as temperature, chemical background, or whether or not the solution is well-mixed.

Computation does not happen in isolation.
In an embedded chemical controller, inputs would be produced by other chemical systems, and outputs would affect downstream chemical processes.
Composition is easy in some systems (e.g.\ digital electronic circuits can be composed by wiring the outputs of one to the inputs of the other).
However, in other contexts composition presents a host of problems. 
For example, the effect termed retroactivity, which results in insufficient isolation of modules, has been the subject of much research in synthetic biology~\cite{del2008modular}. 
In this paper, we attempt to capture a natural notion of composable rate-independent computation, and study whether composability restricts computational power.

\begin{minipage}{.375\textwidth}
\begin{align}
&\textrm{\textbf{(a)}}\\
X_1 + X_2 &\rightarrow Y
\end{align}
\end{minipage}
\begin{minipage}{.48\textwidth}
\begin{align}
&\textrm{\textbf{(b)}}\\
X_1 &\rightarrow Z_1 + Y \\
X_2 &\rightarrow Z_2 + Y \\
Z_1 + Z_2 &\rightarrow K \\
Y + K &\rightarrow \emptyset.
\end{align}
\end{minipage}
\vspace{.5cm}

Above, we see two examples of rate-independent computation.
Example \textbf{(a)} shows $y = \min(x_1,x_2)$. The amount of $Y$ eventually produced will be the minimum of the initial amounts of $X_1$ and $X_2$, since the reaction will stop as soon as the first reactant runs out. 
Example \textbf{(b)} shows $y = \max(x_1,x_2)$. 
(The last reaction generates waste species that are not used in the computation; thus, we describe the products as an empty set.)
The amount of $Y$ eventually produced in reactions $1$ and $2$ is the sum of the initial amounts of $X_1$ and $X_2$. The amount of $K$ eventually produced in reaction $3$ is the minimum of the initial amounts of $X_1$ and $X_2$. Reaction $4$ subtracts the minimum from the sum, yielding the maximum.

Now consider how rate-independent computation can be naturally composed.
Suppose we want to compute $\min(\min(x_1,x_2),x_3)$.
It is easy to see that simple concatenation of two min modules (with proper renaming of the species) correctly computes this function:
\begin{align}
    X_1 + X_2 &\rxn Y \\
    Y + X_3 &\rxn Y', 
\end{align}
where $Y'$ represents the output of the composed computation.
In contrast, suppose we want to compute $\min(\max(x_1,x_2),x_3)$.
Concatenating the modules yields:
\begin{align}
    X_1 &\rightarrow Z_1 + Y \\
    X_2 &\rightarrow Z_2 + Y \\
    Z_1 + Z_2 &\rightarrow K \\
    Y + K &\rightarrow \emptyset \\
    Y + X_3 &\rightarrow Y',
\end{align}
where $Y'$ represents the output of the composed computation.
Observe that depending on the relative rates of reactions $4$ and $5$, the eventual value of $Y'$ will vary between $\min(\max(x_1,x_2),x_3)$ and $\min(x_1+x_2,x_3)$,
and the composition does not compute in a rate-independent manner.

Why is min composable, but max not? 
The problem arose because the output of the max module ($Y$) is consumed in both the max module and in the downstream min module. 
This creates a competition between the consumption of the output within its own module and the downstream module.

Towards modularity, we assume the two CRNs to be composed do not share any species apart from the interface between them (i.e., a species $Y$ representing the output of the first network is used as the species representing the input to the second network, and otherwise the two sets of species are disjoint).
We prove that to construct composable rate-independent modules in this manner, it is necessary and sufficient to ensure that the output species of a module is not a reactant in any reaction of that module.
We then exactly characterize the computational power of composable rate-independent computation.

Previously it was shown that without the composability restriction, rate-independent CRNs can compute arbitrary positive-continuous, piecewise rational linear functions~\cite{rate-indep}.
Positive-continuity means that the only discontinuities occur when some input goes from 0 to positive, 
and piecewise rational linear means that the function can by defined by a finite number of linear pieces (with rational coefficients).
Note that non-linear continuous functions can be approximated to arbitrary accuracy.\footnote{To approximate arbitrary continuous non-linear functions, piecewise linear functions are not sufficient, but rather we need piecewise affine functions (linear functions with offset). However, affine functions can be computed if we use an additional input fixed at $1$.}
We show that requiring the CRN to be composable 
restricts the class of computable functions to be superadditive functions; i.e., functions that satisfy: for all input vectors $\va, \vb,\, f(\va) + f(\vb) \leq f(\va + \vb)$.
This strongly restricts computational power: for example, subtraction or max cannot be computed or approximated in any reasonable sense.
In the positive direction, we show that any
superadditive, positive-continuous, piecewise rational linear function 
can be computed by composable CRNs in a rate-independent manner. 
Our proof is constructive, and we further show that unimolecular and bimolecular reactions are sufficient.

We note that different input and output encodings can change the computational power of rate-independent, composable CRNs.
For example, in the so-called \emph{dual-rail} convention, input and output values are represented by differences in concentrations of two species (e.g., the output is equal to the concentration of species $Y^+$ minus the concentration of $Y^-$).
Dual-rail simplifies composition---instead of consuming the output species to decrease the output value, a dual-rail CRN can produce $Y^-$---at the cost of greater system complexity.
Dual-rail CRNs can compute the full class of continuous, piecewise rational linear functions while satisfying rate-independence and composability~\cite{rate-indep}.
Note, however, that the dual-rail convention moves the non-superadditive subtraction operation to ``outside'' the system, and converting from a dual-rail output to a direct output must break composability.

\section{Preliminaries}
Let $\N$ and $\R$ denote the set of nonnegative integers and the set of real numbers, respectively. The set of the first $n$ positive integers is denoted by $[n]$.
Let $\R_{\geq 0}$ be the set of nonnegative real numbers, and similarly $\R_{> 0}$ be the set of positive real numbers.
If $\Lambda$ is a finite set (in this paper, of chemical species), we write $\R^\Lambda$ to denote the set of functions $f:\Lambda \to \R$, and similarly for $\Rp^\Lambda$, $\N^\Lambda$, etc.
Equivalently, we view an element $\vc\in A^\Lambda$ as a vector of $|\Lambda|$ elements of $A$, each coordinate ``labeled'' by an element of $\Lambda$.
Given a function $f : A \rightarrow B$, we use $f|_C$ to denote the restriction of $f$ to the domain $C$. We also use the notation $\vc \upharpoonright \Delta$ to represent $\vc$ projected onto $\Rp^\Delta$. Thus, $\vc  \upharpoonright \Delta = \vec{0}$ iff $(\forall S\in\Delta)\ \vc(S)=0$.
If $\Delta \subseteq \Lambda$, we view a vector $\vc \in \Rp^\Delta$ equivalently as a vector $\vc \in \Rp^\Lambda$ by assuming $\vc(S)=0$ for all $S \in \Lambda \setminus \Delta.$

\subsection{Chemical reaction networks}
We will start by defining the notation used to describe chemical reactions.


\begin{definition}
Given a finite set of chemical species $\Lambda$, a \emph{reaction} over $\Lambda$ is a pair $\alpha = \langle \bfr,\bfp \rangle \in \N^\Lambda \times \N^\Lambda$, specifying the stoichiometry of the reactants and products, respectively.\footnote{As we are studying CRNs whose output is independent of the reaction rates, we leave the rate constants out of the definition.} 
\end{definition}

\noindent In this paper, we assume that $\bfr \neq \vec{0}$, i.e., we have no reactions of the form $\emptyset \to \ldots$.
For instance, given $\Lambda=\{A,B,C\}$, the reaction $A+2B \to A+3C$ is the pair $\pair{(1,2,0)}{(1,0,3)}$.

\begin{definition}
A \emph{(finite) chemical reaction network (CRN)} is a pair $\calC=(\Lambda,R)$, where $\Lambda$ is a finite set of chemical \emph{species},
and $R$ is a finite set of reactions over $\Lambda$. 
\end{definition}

\noindent
Next we map language about chemical reaction networks to formal definitions and notation.

\begin{definition}
A \emph{state} of a CRN $\calC=(\Lambda,R)$ is a vector $\vc \in \Rp^\Lambda$.
\end{definition}

\begin{definition}
For any $\vc \in \Rp^\Lambda$ and any $S\in \Lambda$, $\vc(S)$ is the \emph{concentration} of $S$ in $\vc$.
\end{definition}
\begin{definition}
For any $\vc \in \Rp^\Lambda$, the set of species \emph{present} in $\vc$ (denoted by $[\vc]$) is $\{S \in \Lambda \ |\ \vc(S) > 0 \}$.
\end{definition}

\begin{definition}
Given a state $\vc$ and reaction $\alpha=\pair{\bfr}{\bfp}$, we say that $\alpha$ is \emph{applicable} in $\vc$ if $[\bfr] \subseteq [\vc]$ (i.e., $\vc$ contains positive concentration of all of the reactants).
\end{definition}

\begin{definition}
A reaction \emph{produces} (\emph{consumes}) a species $S$ if $S$ appears as a product (reactant).\footnote{Note that typically a catalyst species is not considered consumed nor produced by a catalytic reaction. 
For simplicity, as our definition suggests, we say it is both produced and consumed.}
\end{definition}

\subsection{Reachability and stable computation}
We now follow \cite{rate-indep} in defining rate-independent computation in terms of reachability between states (this treatment is in turn based on the notion of ``stable computation'' in distributed computing~\cite{AngluinAER2007}).
Intuitively, we say a state is ``reachable'' if some rate law can take the system to this state. 
For computation to be rate-independent, 
since unknown rate laws might take the system to any reachable state, the system must be able to reach the correct output from any such reachable state.



To define the notion of reachability,
a key insight of~\cite{rate-indep} allows one to think of reachability via a sequence of straight line segments.
This may be unintuitive, since mass-action\footnote{Although the formal definition of mass-action kinetics is outside the scope of this paper, we remind the reader that a CRN with rate constants on each reaction define a system of ODEs under mass-action kinetics.
For example, the two reactions
$A + B \rightarrow A + C$ and
$C + C \rightarrow B$ correspond to the following ODEs:
    $\dot{a} = 0$,
    $\dot{b} = k_2c^2 - k_1ab$, and
    $\dot{c} = k_1ab - 2k_2c^2$,
where $a, b$, and $c$ are the concentrations of species $A, B$, and $C$ over time and $k_1$, $k_2$ are the rate constants of the reactions.}  and other rate laws trace out smooth curves.
However, a number of properties are shown which support straight-line reachability as an interpretation which includes mass-action 
reachability as well as reachability  under other rate laws. 


\begin{definition}
Let $\calC$ be a CRN defined by $(\Lambda, R)$. The linear transformation $\vM: \R^R \to \R^\Lambda$ that maps from the unit vector representing a reaction to the net change in species caused by that reaction is the \emph{stoichiometry matrix} for $\calC$. 
\end{definition}
\noindent Note that we can intuitively think of $\vM$ being a matrix where the columns represent the net change in species caused by each reaction. Under this representation, observe that entries in $\vM$ will be negative when more of a reactant is consumed than is produced in a reaction. Observe that the image of $\vM$ represents the possible changes in a state that can occur via the reactions in $R$. We will formalize this notion with the next few definitions.
\begin{definition}
For a CRN with the reactions $R$, we say that any vector $\vu \in \Rp^R$ is a \emph{flux vector}. We use $[\vu]$ to denote the set $\{r\ |\  \vu(r) > 0\}$. We say that $\vu$ is \emph{applicable} at a state $\vc$ if every reaction in $[\vu]$ is applicable at $\vc$.
\end{definition}
\begin{definition}
For a CRN with species $\Lambda$ and stoichiometry matrix $\vM$, we say a state $\vd \in \R_{\geq 0}^\Lambda$ is \emph{straight-line reachable} from $\vc$, written $\vc \slto \vd$, or more precisely as $\vc \to_{\vu} \vd$, if there is a applicable flux vector $\vu$ such that $\vc + \vM \vu = \vd$.
\end{definition}
\noindent Intuitively, a single segment means running the reactions applicable at $\vc$ at a constant (possibly 0) rate specified by $\vu$ to get from $\vc$ to $\vd$. Since applying a flux vector can change the set of species present, $\vec{a} \slto \vec{b}$ does not imply that $\vec{a}$ and $\vec{b}$ have the same set of applicable reactions. Therefore there can be a state $\vec{c}$ that is straight-line reachable from $\vec{b}$ but not from $\vec{a}$. This leads us to our next definition.
\begin{definition}
We say state $\vd$ is \emph{1-segment reachable} from $\vc$ if it is straight line reachable. We say a state $\vd$ is \emph{$l$-segment reachable} if there is a state $\vd'$ that is $(l-1)$-segment reachable from $\vc$ such that $\vd' \slto \vd$.
\end{definition}
\noindent Generalizing to an arbitrary number of segments, we obtain our general notion of reachability below.
Note that by the definition of straight-line reachability, only applicable reactions occur in each segment.
The definition of reachability is closely related to exploring the ``stoichiometric compatibility class'' of the initial state~\cite{feinberg1974dynamics}.

\begin{definition}\label{defn-reachable-segment}
A state $\vd$ is \emph{reachable} from $\vc$, written $\vc \rightarrow \vd$, if $\exists l\in\N$ such that $\vd$ is $l$-segment reachable from $\vc$.
We denote the set of states reachable from $\vec{c}$, i.e., $\{\vec{d} \ | \ \vec{c}\rightarrow \vec{d} \}$, as $\mathrm{Post}(\vec{c})$.
\end{definition}



We think of state $\vd$ as being reachable from state $\vc$ if there is a ``reasonable'' rate law that takes the system from $\vc$ to $\vd$.
As desired, previous work showed that if state $\vd$ is reached from $\vc$ via a mass-action trajectory, it is also segment-reachable.

\begin{lemma}[Proven in~\cite{rate-indep}]  \label{lem:mass-action-reachable}
   If $\vd$ is mass-action reachable from $\vc$, then $\vc \rightarrow \vd$.
\end{lemma}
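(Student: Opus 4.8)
The statement is that mass-action reachability implies segment-reachability, so the plan is to take a mass-action trajectory $\vx: [0,T] \to \Rp^\Lambda$ from $\vx(0) = \vc$ to $\vx(T) = \vd$, and approximate it by a finite sequence of straight-line segments, each governed by an applicable flux vector. The first step is to recall that along a mass-action trajectory the total integrated reaction flux $\vu^* = \int_0^T \vr(\vx(t))\,dt \in \Rp^R$ (where $\vr$ is the mass-action rate vector) satisfies $\vc + \vM\vu^* = \vd$ by the fundamental theorem of calculus applied coordinatewise to the mass-action ODEs. So the \emph{net} stoichiometric change is already witnessed by a single flux vector; the only obstacle is \emph{applicability}: the reactions in $[\vu^*]$ need not all be applicable at $\vc$, since some reactant species may be absent initially and only get produced along the way.

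The key structural fact to exploit is that mass-action trajectories respect a monotonicity on the support: once a species is present at some time $t_0 > 0$, it is present for all $t \in (0, t_0]$ pushed forward — more usefully, the set of species present is ``eventually monotone'' in the sense that one can extract finitely many times $0 = t_0 < t_1 < \cdots < t_k = T$ such that on each open interval $(t_{i-1}, t_i)$ the set of present species (and hence the set of applicable reactions) is constant, and this set only grows as $i$ increases except possibly at the final approach to $\vd$. Actually the cleanest route: since $\Lambda$ is finite, the function $t \mapsto [\vx(t)]$ takes finitely many values, and one can pick a partition $0 = t_0 < t_1 < \cdots < t_k = T$ so that $[\vx(t)]$ is constant, say equal to $\Lambda_i$, on each $(t_{i-1}, t_i)$. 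On the $i$-th subinterval, define $\vu_i = \int_{t_{i-1}}^{t_i} \vr(\vx(t))\,dt$; every reaction with positive mass-action rate somewhere on $(t_{i-1},t_i)$ has all its reactants in $\Lambda_i$, hence $[\vu_i] \subseteq \{$reactions applicable throughout $(t_{i-1},t_i)\}$. The subtlety is applicability \emph{at the left endpoint} $\vx(t_{i-1})$: a species in $\Lambda_i$ might be absent exactly at $t_{i-1}$. This is where I expect the main work to go.

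To handle the endpoint issue, the plan is to insert an extra tiny ``priming'' segment, or equivalently to choose the partition times as right-limits: replace $\vx(t_{i-1})$ by $\vx(t_{i-1}+\eps)$ for small $\eps$, note $\vx(t_{i-1}) \to_{\vu} \vx(t_{i-1}+\eps)$ is itself a straight-line step using only reactions applicable at $\vx(t_{i-1})$ (which are a subset of $\Lambda_{i-1}$, already handled), and then run the bulk of the $i$-th segment from $\vx(t_{i-1}+\eps)$. Chaining these $2k$ or so segments and letting $\eps \to 0$ in the stoichiometric bookkeeping (the net changes telescope to $\vd - \vc$ regardless of $\eps$) gives a finite segment-reachability witness. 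The remaining routine verifications are: (i) that mass-action rates are continuous in the state so the integrals $\vu_i$ are well-defined finite vectors; (ii) that $\sum_i \vM\vu_i = \vM\vu^* = \vd - \vc$; and (iii) that each partial sum keeps the state in $\Rp^\Lambda$, which follows because it equals $\vx(t_i) \geq \vec 0$. The main obstacle, to reiterate, is the careful treatment of species that appear or vanish at partition boundaries — getting the partition fine enough and the priming segments right so that every flux vector used is genuinely applicable at the state where it is applied, rather than merely on an open interval to its right.

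(As this lemma is stated as already proven in~\cite{rate-indep}, I would in practice cite that proof; the above is the reconstruction I would carry out if proving it from scratch.)
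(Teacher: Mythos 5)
The paper itself offers no proof of this lemma---it is imported verbatim from~\cite{rate-indep}---so I am comparing your reconstruction against the argument that reference relies on. Your overall strategy is the standard and correct one: integrate the mass-action rate vector over a partition of $[0,T]$ into intervals on which the set of present species is constant, so that each integrated flux vector $\vu_i$ is supported on reactions applicable throughout the corresponding open interval, and the segments telescope to $\vd-\vc$. One minor point first: finiteness of the partition does not follow from ``$t\mapsto[\vx(t)]$ takes finitely many values'' (a finite-valued function can still switch infinitely often). What you actually need is that under mass-action a species, once present, never vanishes in finite time (every consuming term carries a factor of $x_S$, so $\dot x_S\ge -Cx_S$ on a bounded trajectory and Gronwall gives $x_S>0$ thereafter); hence the support is non-decreasing for $t>0$ and changes at most $|\Lambda|$ times. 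You gesture at this monotonicity but should make it the load-bearing step.

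The genuine gap is the ``priming'' step. You assert that $\vx(t_{i-1})\to_{\vu}\vx(t_{i-1}+\eps)$ is a single straight-line step using only reactions applicable at $\vx(t_{i-1})$. That is false in general: the integrated flux over $[t_{i-1},t_{i-1}+\eps]$ charges every reaction whose rate is positive anywhere in that window, and a species appearing at $t_{i-1}^{+}$ may be produced only by reactions whose own reactants also appear only at $t_{i-1}^{+}$. For example, with reactions $C\to A$ and $A+C\to B$ started from a state containing only $C$, both $A$ and $B$ are present for every $t>0$, so the support jumps from $\{C\}$ to $\{A,B,C\}$ at $t=0$, yet $A+C\to B$ is not applicable at $t=0$ and no single segment from $\vx(0)$ can produce $B$. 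Your priming segment therefore has exactly the applicability defect it was introduced to repair. The missing idea is a cascade: from $\vx(t_{i-1})$ run a tiny segment using only the currently applicable reactions, which enlarges the support; repeat, so that after at most $|\Lambda|$ nested segments every species of $\Lambda_i$ is present, and only then traverse the bulk of the $i$-th interval in one segment (this is the role the $\mathcal{P}(\vc)$ machinery of Lemmas~\ref{lem:all-pos-exist} and~\ref{lem:all-pos-prop} plays for segment-reachability in the appendix). With that induction inserted, the rest of your bookkeeping---telescoping of the $\vM\vu_i$ and nonnegativity because every intermediate state is a trajectory point---goes through; note also that no limit $\eps\to 0$ is needed, since any fixed small $\eps$ yields exact trajectory points as segment endpoints.
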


We can now use reachability to  formally define rate-independent computation.
%
%
%
\begin{definition}\label{def:CRC}
A \emph{chemical reaction computer (CRC)} is a tuple $\calC = (\Lambda,R,\Sigma,Y)$, where $(\Lambda,R)$ is a CRN, $\Sigma \subset \Lambda$, written as $\Sigma = \{X_1,\ldots,X_n\}$, is the \emph{set of input species}, and $Y \in \Lambda \setminus \Sigma$ is the \emph{output species}.
\end{definition}
\noindent For simplicity, assume a canonical ordering of $\Sigma=\{X_1,\ldots,X_n\}$ so that a vector $\vx\in\Rp^n$ (i.e., an input to $f$) can be viewed equivalently as a state $\vx\in\Rp^\Sigma$ of $\calC$ (i.e., an input to $\calC$).



\begin{definition}
A state $\vo \in \Rp^\Lambda$ is \emph{output stable} if, for all $\vo'$ such that $\vo \to \vo'$, $\vo(Y) = \vo'(Y)$,  i.e., once $\vo$ is reached, no reactions can change the concentration of the output species $Y$.
\end{definition}

\begin{definition} \label{def:direct-computation}
Let $f:\Rp^n \to \Rp$ be a function and let $\calC$ be a CRC.
We say that $\calC$ \emph{stably computes} $f$ if, for all $\vx \in \Rp^n$ and all $\vc$ such that $\vx \to \vc$, there exists an output stable state $\vo$ such that $\vc \to \vo$ and $\vo(Y) = f(\vx)$.
\end{definition}

We can intuitively justify the above definition of reachability and stable computation as capturing the class of computation that is independent of the rate law.
The output stable states are exactly those in which the output cannot be changed by a rate law chosen by an adversary.
If a chemical reaction network does not stably compute a function, 
then some rate law can take the system to a state from which an output stable state is not reachable (including by mass-action by Lemma~\ref{lem:mass-action-reachable}).

The results herein extend easily to functions $f:\R^n \to \R^l$, i.e., whose output is a vector of $l$ real numbers.
This is because such a function is equivalently $l$ separate functions $f_i:\R^n\to\R$.

Also note that initial states contain only the input species $\Sigma$; other species must have initial concentration 0.
Section~\ref{sec:initial_context} discusses how allowing some initial concentration of non-input species affects computation.

\subsection{Composability}
In this section we define the composition of CRCs and formally relate composability to a CRC not using its output species as a reactant (output-obliviousness).
We show that output-oblivious CRCs are composable, and that any composable CRC can be reduced (simply by removing reactions) to an output-oblivious form.

We define the composition of two CRCs intuitively
as the concatenation of their chemical reactions, such that the output species of the first is the input species of the second:
\begin{definition}
Given two CRCs $\calC_1 = (\Lambda_1,R_1,\Sigma_1,Y_1)$ and $ \calC_2 = (\Lambda_2,R_2,\Sigma_2,Y_2)$, consider $\calC_2' = (\Lambda_2',R_2',\Sigma_2',Y_2')$ constructed by renaming species of $\calC_2$ such that $\Lambda_1\cap\Lambda_2' = \{Y_1\}$ and $Y_1 \in \Sigma_2'$.
The \emph{composition} of $\calC_1$ and $\calC_2$ is the CRC  $\calC_{2\circ1} = (\Lambda_1\cup\Lambda_2', R_1\cup R_2', \Sigma_1 \cup \Sigma_2' \setminus \{Y_1\}, Y_2')$.
In other words, the composition is constructed by concatenating $\calC_1$ and $\calC_2$ such that their only interface is the output species of $\calC_1$, used as the input for $\calC_2$.
\end{definition}

\noindent We say two CRCs are composable if they stably compute the composition of their functions when composed:

\begin{definition}
A CRC $\calC_1$ which stably computes $f_1$ is \emph{composable} if $\forall \calC_2$ stably computing $f_2$, $\calC_{2\circ1}$ stably computes $f_2\circ f_1$.
\end{definition}%
%
\noindent We want to relate composability to the property that a CRC does not use its output species as a reactant:
\begin{definition}
We call a CRC $(\Lambda,R,\Sigma,Y)$ \emph{output-oblivious} if $Y$ does not appear as a reactant in $R$.
\end{definition}

\noindent 
For simplicity, we focus on single-input, single-output CRCs, but these results easily generalize to multiple input and output settings.

For the proof that the output-oblivious condition is sufficient to guarantee composability, we formalize the idea that the composed CRCs act independently, and do not interfere with each other's execution.
In Lemmas~\ref{lem:spliting-reordering}~and~\ref{lem:output-oblivious-splitting} we show how this independence can be used to ``reorder'' the sequence of reactions of a CRC in way that preserves the concentrations in the final state.
In Lemma~\ref{lem:not-using-out-imp-comp}, we take an output-oblivious CRC, compose another CRC downstream, and reorder any sequence of reactions of the composed CRC into a sequence which we can easily argue must have stably computed as expected.



\begin{definition}
A flux vector $\vec{u}$ \emph{produces} (\emph{consumes}) a species $S$ if there is an $\vr \in [\vec{u}]$ such that $S$ is a product (reactant) of $\vr$.
\end{definition}
\begin{definition}
A flux vector $\vec{u}_1$ is \emph{independent of} a flux vector $\vec{u}_2$ if $\vec{u}_1$ does not consume any species that are produced or consumed by $\vec{u}_2$. 
\end{definition}
\begin{lemma}\label{lem:spliting-reordering}
In a CRC $\calC = (\Lambda, R)$ if flux vector $\vec{u}_1$ is independent of flux vector $\vec{u}_2$ then:
\begin{enumerate}
    \item If $\va \to_{\vu_2} \vb \to_{\vu_1} \vc$, then $\va \to_{\vu_1 + \vu_2} \vc$.
    \item If $\va \to_{\vu_1 + \vu_2} \vc$, then there is a state $\vb$ such that $\va \to_{\vu_1} \vb \to_{\vu_2} \vc$.
    \item If $\va \to_{\vu_1} \vb$ and $\va \to_{\vu_2} \vc$, then there is a state $\vd$ such that $\vc \to_{\vu_1} \vd$.
\end{enumerate}
\end{lemma}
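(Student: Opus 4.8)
The plan is to handle the three claims by direct manipulation of flux vectors and the stoichiometry matrix $\vM$, using the hypothesis that $[\vu_1]$ consumes no species that $\vu_2$ produces or consumes. The only nontrivial content is tracking which reactions are \emph{applicable} at which state; the linear-algebra bookkeeping (that $\vM(\vu_1+\vu_2) = \vM\vu_1 + \vM\vu_2$) is immediate from linearity of $\vM$, so in every case the final-state concentrations automatically match and the real work is checking applicability of the intermediate fluxes.

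For part (1), assume $\va \to_{\vu_2} \vb \to_{\vu_1} \vc$. Applicability of $\vu_2$ at $\va$ and of $\vu_1$ at $\vb$ is given; I must show $\vu_1 + \vu_2$ is applicable at $\va$, i.e. every reaction in $[\vu_1 + \vu_2] = [\vu_1] \cup [\vu_2]$ has all reactants present in $\va$. Reactions in $[\vu_2]$ are fine since $\vu_2$ is applicable at $\va$. For a reaction $r \in [\vu_1]$ and a reactant $S$ of $r$: $S$ is present in $\vb$; moving from $\va$ to $\vb$ only changes species that $\vu_2$ produces or consumes; since $\vu_1$ (hence $r$) consumes $S$, independence says $S$ is \emph{not} among the species $\vu_2$ touches, so $\vb(S) = \va(S) > 0$. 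Hence $\vu_1 + \vu_2$ is applicable at $\va$, and $\va + \vM(\vu_1+\vu_2) = \va + \vM\vu_2 + \vM\vu_1 = \vb + \vM\vu_1 = \vc$.

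For part (2), assume $\va \to_{\vu_1+\vu_2} \vc$, so $\vu_1+\vu_2$ is applicable at $\va$; in particular every reaction in $[\vu_1]$ is applicable at $\va$. Set $\vb := \va + \vM\vu_1$, which is a legitimate state provided it is nonnegative — this needs a small argument, e.g. that it lies on the segment from $\va$ to $\vc = \va + \vM\vu_1 + \vM\vu_2$ (running $\vu_1$ only partway, or noting the coordinates of $\vM\vu_1$ that are negative correspond to reactants $\vu_1$ consumes, which $\vu_2$ does not replenish, so their deficit is already ``paid for'' in reaching $\vc$); I'd spell this out carefully as it is the one spot where nonnegativity is not free. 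Then $\va \to_{\vu_1} \vb$, and I must check $\vu_2$ is applicable at $\vb$: for a reactant $S$ of a reaction in $[\vu_2]$, $S$ is present in $\va$, and $\vu_1$ does not consume $S$ (independence, since $\vu_1$ consumes nothing that $\vu_2$ consumes), so $\vb(S) \geq \va(S) > 0$. Finally $\vb + \vM\vu_2 = \va + \vM\vu_1 + \vM\vu_2 = \vc$.

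For part (3), assume $\va \to_{\vu_1} \vb$ and $\va \to_{\vu_2} \vc$; I want $\vd$ with $\vc \to_{\vu_1} \vd$, so it suffices to show $\vu_1$ is applicable at $\vc$ and set $\vd := \vc + \vM\vu_1$. For a reactant $S$ of a reaction in $[\vu_1]$: $S$ is present in $\va$ (since $\vu_1$ is applicable at $\va$), and going from $\va$ to $\vc$ only alters species $\vu_2$ produces or consumes; independence says $\vu_1$ consumes no such species, so $S$ is unchanged, $\vc(S) = \va(S) > 0$. Hence $\vu_1$ is applicable at $\vc$ and $\vd$ exists. I expect part (2)'s nonnegativity check for the intermediate state $\vb$ to be the main obstacle — everywhere else the independence hypothesis plugs in mechanically, but there I need to argue that reconstituting $\vu_1$'s net effect alone does not drive any coordinate negative, which uses that $\vu_2$ cannot have been relying on species that $\vu_1$ consumes.
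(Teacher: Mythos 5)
Your proposal is correct and follows essentially the same route as the paper's proof: each part reduces to checking applicability of the relevant flux at the relevant state plus nonnegativity of any newly constructed state, with the independence hypothesis ruling out all interaction, and your nonnegativity argument for the intermediate state in part (2) (splitting on whether a coordinate of $\vM\vu_1$ is negative) is exactly the bookkeeping the paper does. The one point you gloss over is that part (3) also requires verifying $\vd = \vc + \vM\vu_1 \in \Rp^\Lambda$, not just applicability of $\vu_1$ at $\vc$; this follows by the same mechanical argument you sketched for part (2): on any coordinate $S$ with $(\vM\vu_1)(S) < 0$, the species $S$ is consumed by $\vu_1$, so by independence $\vu_2$ does not touch it, giving $\vc(S) = \va(S)$ and hence $\vd(S) = \vb(S) \geq 0$.
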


\begin{proof}
For $1$, since $\vec{u}_1$ is independent of $\vec{u}_2$, we know that $\vec{u}_2$ cannot produce any of the species necessary to make $\vec{u}_1$ applicable. Since $\vec{u}_1$ was applicable at $\vec{b}$, we know that $\vec{u}_1$ must be applicable at $\vec{a}$, so $\vec{a} \to_{\vec{u}_1 + \vec{u}_2} \vec{c}$.

For $2$, we need to show that $\vb \in \Rp^\Lambda$ (has nonnegative concentrations)
and that $\vu_2$ is applicable at $\vb$.
Consider any species $S \in \Lambda$ such that $S$ is not produced in $\vu_2$. Since $\vb = \vc - \vM\vu_2$ and $\vc(S) \geq 0$, we know that $S$ has nonnegative concentration at $\vb$. Now consider any species $S \in \Lambda$ such that $S$ is produced in $\vu_2$. Since $\vu_2$ produces $S$, we know that $\vu_1$ must not consume it because $\vec{u}_1$ is independent of $\vec{u}_2$. Since $\vb = \va + \vM\vu_1$, we can conclude that $S$ must have nonnegative concentration at $\vb$. 
Therefore we can conclude that $\vb \in \R_{\geq 0}^\Lambda$.
To see that $\vu_2$ is applicable at $\vb$, first observe that 
$\vu_2$ is applicable at $\va$. Then, since $\vu_1$ is independent of $\vu_2$, it follows that  $\vu_2$ must also be applicable at $\vb$. 

For $3$,
we want to show that $\vu_1$ is still applicable at $\vc$ and that $\vd = \vc + M\vu_1$ is in $\Rp^\Lambda$ (has nonnegative concentrations).
$\vu_1$ is still applicable at $\vc$ since $\vu_1$ does not consume species involved in $\vu_2$.
If $\vd$ had negative concentration on species $S$, that species must have been consumed by $\vu_1$ since $\vc$ has nonnegative concentrations.
Since $\vu_2$ does not consume any species consumed by $\vu_1$ and $\vc = \va + \vM\vu_1$, then negative $\vd(S)$ implies negative $\vc(S)$, which contradicts that $\vc \in \Rp^\Lambda$, so $\vd \in \Rp^\Lambda$.

\end{proof}



\begin{lemma}\label{lem:output-oblivious-splitting}
Given two output-oblivious CRCs $\calC_1$ and $\calC_2$,
consider the composition CRC $\calC_{2\circ1}$.
If $\vec{c} \to \vec{d}$ then there is $\vec{b}$ such that $\vec{c} \to \vec{b} \to \vec{d}$, where $\vec{c} \to \vec{b}$ only uses reactions from $\calC_1$ and $\vec{b} \to \vec{d}$ only uses reactions from $\calC_2$.
\end{lemma}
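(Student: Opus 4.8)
The plan is to take an arbitrary reachability witness $\vec c \to \vec d$, which by Definition~\ref{defn-reachable-segment} consists of a finite sequence of straight-line segments, hence of finitely many applicable flux vectors, and to repeatedly push all the $\calC_1$-reactions to the front and all the $\calC_2$-reactions to the back using the reordering machinery of Lemma~\ref{lem:spliting-reordering}. Concretely, I would first decompose each segment's flux vector $\vec u$ into $\vec u^{(1)} + \vec u^{(2)}$, where $\vec u^{(1)}$ is supported on reactions of $\calC_1$ and $\vec u^{(2)}$ on reactions of $\calC_2'$ (this is well-defined since $R_1$ and $R_2'$ are disjoint). Since the two CRCs share only the species $Y_1$, and $Y_1$ is the \emph{output} of $\calC_1$, output-obliviousness of $\calC_1$ says $Y_1$ is never a reactant in $R_1$; thus no reaction of $\calC_2'$ consumes anything produced or consumed by a reaction of $\calC_1$ except possibly $Y_1$, and $Y_1$ is not consumed by $\calC_1$-reactions. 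This is exactly the hypothesis needed to say $\vec u^{(1)}$ is independent of $\vec u^{(2)}$ — I should double-check the direction: Lemma~\ref{lem:spliting-reordering} needs ``$\vec u_1$ does not consume any species produced or consumed by $\vec u_2$,'' so with $\vec u_1 = \vec u^{(1)}$ (the $\calC_1$ part) and $\vec u_2 = \vec u^{(2)}$ (the $\calC_2$ part), I need that no $\calC_1$-reaction consumes a species touched by a $\calC_2$-reaction; the only shared species is $Y_1$, which $\calC_1$ never consumes by output-obliviousness, so this holds.

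Next, within a single segment $\va \to_{\vec u^{(1)} + \vec u^{(2)}} \vb$, part (2) of Lemma~\ref{lem:spliting-reordering} gives an intermediate state $\va \to_{\vec u^{(1)}} \va' \to_{\vec u^{(2)}} \vb$, so each segment splits into a $\calC_1$-only step followed by a $\calC_2$-only step. Now I have an alternating sequence $\calC_1, \calC_2, \calC_1, \calC_2, \dots$ and I want to bubble all the $\calC_1$ steps left past the intervening $\calC_2$ steps. The swap I need is: from $\cdots \to_{\vec u^{(2)}} \cdot \to_{\vec u'^{(1)}} \cdots$ produce $\cdots \to_{\vec u'^{(1)}} \cdot \to_{\vec u^{(2)}} \cdots$; this is part (1) of Lemma~\ref{lem:spliting-reordering} (combine into $\vec u'^{(1)} + \vec u^{(2)}$) followed by part (2) (re-split in the other order), again using independence of the $\calC_1$-flux from the $\calC_2$-flux. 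Here I also use that output-obliviousness of $\calC_2$ is \emph{not} actually needed for this particular swap, but it is harmless; what matters is that the $\calC_1$ reactions can always move forward. After finitely many such adjacent transpositions (a standard bubble-sort argument, with a clean termination measure such as the number of inversions, i.e. pairs $(i,j)$ with $i$ a $\calC_2$-step occurring before $j$ a $\calC_1$-step), all $\calC_1$ segments precede all $\calC_2$ segments. Concatenating the leading run of $\calC_1$-segments via part (1) repeatedly yields a single composite reach $\vec c \to \vec b$ using only $R_1$, and likewise the trailing run gives $\vec b \to \vec d$ using only $R_2'$.

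The main obstacle I anticipate is purely bookkeeping rather than conceptual: I must be careful that every intermediate state produced along the way genuinely lies in $\Rp^\Lambda$ and that the relevant flux vectors remain applicable at the shifted states — but this is precisely what the three parts of Lemma~\ref{lem:spliting-reordering} are engineered to guarantee, so each elementary move is licensed by that lemma provided the independence hypothesis is verified, which reduces in every case to the single fact that $Y_1$ (the unique shared species) is never a reactant in $\calC_1$. A secondary point to get right is the induction/termination argument for the reordering: I would phrase it as an induction on the number of segments, or equivalently on the inversion count, showing that one pass of adjacent swaps strictly decreases the number of $\calC_2$-before-$\calC_1$ adjacencies, so the process terminates with the desired sorted sequence. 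No non-superadditivity or function-computation content is needed here; it is entirely a statement about the reachability relation of the composed CRN.
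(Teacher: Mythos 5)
Your proposal is correct and follows essentially the same route as the paper's proof: decompose each segment's flux vector into its $R_1$- and $R_2'$-supported parts, observe that output-obliviousness of $\calC_1$ (together with $Y_1$ being the only shared species) makes every $\calC_1$-flux independent of every $\calC_2$-flux, split each segment via item~(2) of Lemma~\ref{lem:spliting-reordering}, and then bubble the $\calC_1$ steps to the front by repeated application of items~(1) and~(2). Your added observations --- the explicit check of the direction of the independence definition, the inversion-count termination argument, and the remark that $\calC_2$'s output-obliviousness is not actually used --- are all accurate and merely make explicit what the paper leaves implicit.
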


\begin{proof}
Let $\{\vec{v}_1, \ldots, \vec{v}_n\}$ be the flux vectors such that $\vec{c} \to_{\vec{v}_1} \vec{c}_1 \to_{\vec{v}_2} \ldots \to_{\vec{v}_n} \vec{d}$. We can write $\vec{v}_i = \vec{u}_{1, i} + \vec{u}_{2, i}$, where $\vec{u}_{1,i}$ corresponds to the reactions in $\calC_1$ and $\vec{u}_{2, i}$ corresponds to the reactions in $\calC_2$. Since $\calC_1$ is output-oblivious, we know that every $\vu_{1,i}$ is independent of every $\vu_{2,j}$ and thus we can apply Lemma~\ref{lem:spliting-reordering} item $2$ to see that $\vec{c} \to_{\vec{u}_{1, 1}} \vec{b}_1 \to_{\vec{u}_{2,1}} \vec{c}_1 \to_{\vec{u}_{1, 2}} \vec{b}_2 \ldots \to_{\vec{u}_{2, n}} \vec{d}$. By repeatedly applying Lemma~\ref{lem:spliting-reordering} items $1$ and $2$, we can then rearrange the sequence of reactions so that each $\vec{u}_{1, i}$ precedes each $\vec{u}_{2, j}$ to get $\vec{c} \to_{\vec{u}_{1, 1}} \vec{b}_1 \to_{\vec{u}_{1,2}} \ldots \to_{\vec{u}_{1,n}} \vec{b} \to_{\vec{u}_{2, 1}}  \ldots \to_{\vec{u}_{2, n}} \vec{d}$. 
\end{proof}

\begin{lemma}\label{lem:not-using-out-imp-comp}
Output-oblivious
CRCs are composable.
\end{lemma}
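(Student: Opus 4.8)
The plan is to show that if $\calC_1$ is output-oblivious and stably computes $f_1$, then for any CRC $\calC_2$ stably computing $f_2$, the composition $\calC_{2\circ1}$ stably computes $f_2\circ f_1$. Fix an input $\vx$ to $\calC_{2\circ1}$; it lives in $\Rp^{\Sigma_1\cup\Sigma_2'\setminus\{Y_1\}}$, and since initial states have zero concentration on non-input species, $\vx$ restricted to $\calC_1$-species is an input to $\calC_1$ and $\vx$ restricted to $\Sigma_2'\setminus\{Y_1\}$ is a partial input to $\calC_2$ (with $Y_1$-coordinate $0$). Let $\vx\to\vc$ be arbitrary; I must produce an output-stable $\vo$ of $\calC_{2\circ1}$ with $\vc\to\vo$ and $\vo(Y_2')=f_2(f_1(\vx\restr\Sigma_1))$.

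First I would apply Lemma~\ref{lem:output-oblivious-splitting}: wait — that lemma as stated needs \emph{both} CRCs output-oblivious, so I would first invoke Lemma~\ref{lem:not-using-out-imp-comp}'s own hypothesis only on $\calC_1$, and re-prove the needed splitting directly. The key structural fact is that because $\calC_1$ is output-oblivious, $Y_1$ (the only shared species) is never consumed by $\calC_1$-reactions, so every $\calC_1$-flux vector is independent of every $\calC_2$-flux vector (a $\calC_2$-flux vector only consumes $\calC_2$-species, and among those only $Y_1$ is shared, but $\calC_1$ doesn't consume $Y_1$). Hence by repeated application of Lemma~\ref{lem:spliting-reordering}(1)--(2) I can rewrite the trajectory $\vc\to\vec d$ (for any $\vec d$) as $\vc\to_{\text{$\calC_1$ only}}\vec b\to_{\text{$\calC_2$ only}}\vec d$. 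So WLOG $\vc$ itself decomposes: running only $\calC_1$-reactions from $\vx$ reaches some $\vc_1$ (which, projected to $\calC_1$-species, is reachable in $\calC_1$ from the $\calC_1$-input, and whose $\calC_2$-only species except $Y_1$ are still at their initial input values), then running only $\calC_2$-reactions reaches $\vc$.

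Next, since $\calC_1$ is output-oblivious it is an ordinary CRC that stably computes $f_1$: from $\vc_1\restr\Lambda_1$ there is an output-stable-in-$\calC_1$ state $\vo_1$ with $\vc_1\restr\Lambda_1\to\vo_1$ and $\vo_1(Y_1)=f_1(\vx\restr\Sigma_1)$. Running that $\calC_1$-trajectory inside $\calC_{2\circ1}$ (it doesn't touch $\calC_2$-species other than possibly producing $Y_1$) carries $\vc$ to a state $\vc'$ whose $\calC_2$-part is: the original $\calC_2$-inputs (from $\vx$), plus $Y_1 = f_1(\vx\restr\Sigma_1)$, plus whatever $\calC_2$ already did — i.e.\ $\vc'\restr\Lambda_2'$ is a state of $\calC_2$ reachable from the input $(f_1(\vx\restr\Sigma_1),\, \vx\restr(\Sigma_2'\setminus\{Y_1\}))$. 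Here I use that from $\calC_1$'s side no \emph{more} $Y_1$ can be produced after $\vo_1$: output-stability of $\vo_1$ in $\calC_1$ combined with output-obliviousness means $\calC_1$ can never again change $Y_1$, so the $Y_1$ supply handed to $\calC_2$ is exactly $f_1(\vx\restr\Sigma_1)$ and stays fixed under further $\calC_1$-activity. Then apply stable computation of $\calC_2$: there is $\vo_2$, output-stable in $\calC_2$, with $\vc'\restr\Lambda_2'\to\vo_2$ and $\vo_2(Y_2')=f_2(f_1(\vx\restr\Sigma_1),\dots)=f_2(f_1(\vx\restr\Sigma_1))$ in the single-input case. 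Lift this to $\vo$ in $\calC_{2\circ1}$. Finally I check $\vo$ is output-stable in the composition: any trajectory out of $\vo$ splits (by the independence argument again) into a $\calC_1$-part followed by a $\calC_2$-part; the $\calC_1$-part cannot change $Y_1$ (obliviousness + $\vo_1$ stable) hence cannot change the effective $\calC_2$-input, and the $\calC_2$-part cannot change $Y_2'$ since $\vo_2$ is output-stable in $\calC_2$; so $Y_2'$ is frozen.

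The main obstacle I anticipate is the bookkeeping around $Y_1$: making rigorous that once $\calC_1$ has reached an output-stable state, the amount of $Y_1$ available as "input" to $\calC_2$ is exactly $f_1(\text{input})$ and is never subsequently altered — this is precisely where output-obliviousness (not just output-stability) is essential, since a composed downstream module could otherwise create states from which $\calC_1$ restarts and changes $Y_1$. A secondary subtlety is that Lemma~\ref{lem:output-oblivious-splitting} is stated for two output-oblivious CRCs, so either I weaken its hypothesis (only $\calC_1$ need be output-oblivious for the independence of $\calC_1$-fluxes from $\calC_2$-fluxes) or I reprove the one-sided splitting inline; I expect the former, noting that its proof only used output-obliviousness of $\calC_1$. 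The reachability and output-stability within each $\calC_i$ transfer to the composition because a flux vector using only $\calC_i$-reactions is applicable and has the same net effect whether viewed in $\calC_i$ or in $\calC_{2\circ1}$ (the extra coordinates are untouched), which I would state as a small preliminary observation.
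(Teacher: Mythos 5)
Your proof is correct and follows essentially the same route as the paper's: split any trajectory of the composition into a $\calC_1$-only prefix and a $\calC_2$-only suffix using independence of $\calC_1$-fluxes from $\calC_2$-fluxes (Lemma~\ref{lem:spliting-reordering}), complete $\calC_1$'s computation to an output-stable state with $Y_1 = f_1(\vec{x})$, hand off to $\calC_2$, and verify output-stability of the final state via obliviousness plus the stability of each module. Your side observation that Lemma~\ref{lem:output-oblivious-splitting} is stated for two output-oblivious CRCs while its proof uses only $\calC_1$'s obliviousness is accurate --- the paper itself invokes that lemma with an arbitrary $\calC_2$, so your one-sided reading is the intended one.
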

\begin{proof}
Consider the composition $\calC_{2\circ1} =(\Lambda,R,\Sigma,Y)$ of two CRCs $\calC_1=(\Lambda_1,R_1,\Sigma_1,Y_1)$ and $\calC_2=(\Lambda_2,R_2,\Sigma_2,Y_2)$ that stably compute $f_1$ and $f_2$ respectively, and consider an input $x \in \R_{\geq0}^{\Sigma}$. Consider some state $\vec{c}$ reached from $\vec{x}$ in $\calC_{2 \circ 1}$. Let $\vec{b}$ be as in Lemma~\ref{lem:output-oblivious-splitting}, so $\vx$ $\to_{\vec{u}_{1,1}} \ldots \to_{\vec{u}_{1,n}} \vec{b} \to_{\vec{u}_{2,1}} \ldots \to_{\vec{u}_{2,n}}$ $\vc$, where $\vec{r}_i = (\vu_{i, 1}, \ldots, \vu_{i, n})$ is a sequence of flux vectors with $\bigcup_j [\vu_{i,j}] \subseteq R_i$. Since $\calC_1$ stably computes $f_1$, we know that there is some $\calC_1$-output stable state $\vec{o}_1$ reachable from $\vec{b}$ using a series of flux vectors $\vec{r} = (\vu_1, \vu_2 \ldots \vu_k)$ such that $\bigcup_{i}[\vu_i] \subseteq R_1$. Since $\vec{r}_2$ only uses reactions from $\calC_2$ and $\calC_1$ is output-oblivious, every flux vector in $\vec{r}$ is independent of every flux vector in $\vec{r}_2$, so by Lemma~\ref{lem:spliting-reordering} item $3$ we know the sequence of flux vectors $\vr$ is applicable starting at $\vec{c}$. Let $\vec{a}$ be such that $\vec{c} \to_{\vec{u}_{1}} \ldots \to_{\vec{u}_{k}} \vec{a}$. Then applying Lemma~\ref{lem:spliting-reordering} items $1$ and $2$ repeatedly to the flux vectors in $\vec{r}$ and $\vec{r}_2$, we see that $\vec{o}_1 \to_{\vec{u}_{2,1}} \ldots \to_{\vec{u}_{2,n}} \vec{a}$. Since $\calC_2$ stably computes $f_2$, since $\vec{o}_1(Y_1) = f(x)$, and since $\vec{a}$ is reachable from $\vec{o}_1$ only using reactions in $\calC_2$, there must be some $\vec{o}_2$ that is $\calC_2$-output stable such that $\vec{a} \to \vec{o}_2$ and $\vec{o}_2(Y_2) = f_2 \circ f_1(x)$. We know that $\vec{o}_2$ is reachable from $\vec{c}$ since $\vec{c} \to \vec{a} \to \vec{o}_2$. Finally, since $\vec{o}_1$ is $\calC_1$-output stable, reactions from $\calC_1$ cannot change the concentrations of species in $\vec{o}_2\upharpoonright{\Lambda_2}$, so if $\vec{o}_2 \to \vec{y}$, then restricting to $\calC_2$ we find $\vec{o}_2\upharpoonright {\Lambda_2} \to_{\calC_2} \vec{y} \upharpoonright {\Lambda_2}$. Since $\vec{o}_2$ is $\calC_2$-output stable we see that $\vec{y}(Y_2) = \vec{o}_2(Y_2)$, so $\vec{o}_2$ is $\calC_{2 \circ 1}$-output stable.
\end{proof}

Next we show that the output-oblivious condition is effectively necessary for composition.
Technically, there are CRCs which are not output-oblivious but are composable. However, we show that for such CRCs, we can remove reactions until they are output-oblivious, resulting in a CRC which is still composable and computes the same function.
Thus, characterizing what is computable by output-oblivious CRCs does characterize the class of functions computable by composable CRCs.

To prove this, we will want to reach a state that has used up its capacity to produce more of some species $S$. Intuitively this can be done by producing the maximum amount of $S$ possible by the CRC. 
However, in general this notion is ill-defined.\footnote{In the CRN $X \to Z, X + Z \to S + Z$, from an initial state $\vx$ consisting of only species $X$, note that $\max_{\vc\ \mid\ \vx \to \vc}\vc(S)$ is not well-defined.
To see this, note that some $\epsilon$ amount of $X$ must convert to $Z$ to catalyze the second reaction; we can always choose a smaller $\epsilon'$ to produce more $S$.
}
Lemma~\ref{lem:max} proves that there is a state with maximal amount of $S$. In other words, 
from any state we can always reach a state where afterwards it is impossible to increase the amount of $S$. The proof of Lemma~\ref{lem:max} is left to the appendix.

\begin{lemma}\label{lem:max}
For any state $\vec{c}$ and any species $S$, if the amount of $S$ present in any state reachable from $\vec{c}$ is bounded above, there is a state $\vec{d}$ reachable from $\vec{c}$ such that for any state $\vec{a}$ reachable from $\vec{d}$, we know that $\vec{a}(S) \le \vec{d}(S)$. 
\end{lemma}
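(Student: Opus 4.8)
The plan is to exploit the fact that, although a single ``maximal'' state need not be reachable by one trajectory, the supremum $M = \sup\{\vec{a}(S) \mid \vec{c} \to \vec{a}\}$ is finite by hypothesis, and we can get arbitrarily close to it. First I would pick a sequence of states $\vec{a}_1, \vec{a}_2, \ldots$ reachable from $\vec{c}$ with $\vec{a}_k(S) \to M$. The key structural tool is Lemma~\ref{lem:spliting-reordering}(3): informally, reachability can be ``recentered'' --- if two sequences of reactions are available from a common state and are mutually independent, one can be replayed after the other. Using independence arising from the fact that the segments producing more $S$ involve species not consumed by a previously-chosen trajectory (or, more carefully, by a dimension/Dickson-type argument on which reactions can still fire), I would argue that we can combine finitely many of these near-optimal trajectories, or a suitable limit of them, into a single reachable state $\vec{d}$ with $\vec{d}(S) = M$. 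Concretely, I expect to show that from $\vec{a}_k$ one can reach a state achieving at least $\vec{a}_{k+1}(S)$ as well, so that along a well-chosen subsequence the values $\vec{a}_k(S)$ are nondecreasing and their reachable ``closure'' contains a state hitting the supremum.

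The main technical step is a compactness/closure argument: the set $\mathrm{Post}(\vec{c})$ need not be closed in general, so one cannot simply take a limit of the $\vec{a}_k$ and declare it reachable. I would handle this by bounding the number of segments needed. Since each segment changes the set $[\vec{c}]$ of present species, and there are only finitely many subsets of $\Lambda$, one can show there is a uniform bound $L$ on the number of ``qualitatively distinct'' segments, reducing near-optimal reachability to reachability by at most $L$ segments with flux vectors ranging over a closed region. Within a fixed segment-pattern of length $\le L$, the set of endpoints is the image of a closed polyhedral region under a continuous (piecewise-linear) map intersected with $\Rp^\Lambda$, and the constraint $\vec{a}(S) = M$ is closed; so a limit of endpoints with the right segment-pattern is itself a valid endpoint. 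Taking the pattern that appears infinitely often in the sequence $(\vec{a}_k)$ then yields the desired $\vec{d}$.

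Once $\vec{d}$ with $\vec{d}(S) = M$ is in hand, the conclusion is immediate: any $\vec{a}$ with $\vec{d} \to \vec{a}$ is also reachable from $\vec{c}$ (concatenate the trajectories), so $\vec{a}(S) \le M = \vec{d}(S)$, which is exactly output-maximality of $\vec{d}$ for $S$. I expect the hardest part to be making the ``finitely many qualitatively distinct segments'' bound rigorous --- in particular, ruling out pathologies where infinitely many tiny segments each add a species back or chip away at a reactant --- since the footnote's example ($X \to Z$, $X + Z \to S + Z$) shows that the naive maximum over single trajectories genuinely fails to be attained, and the argument must locate precisely where finiteness is recovered (namely in the boundedness hypothesis on $\vec{a}(S)$ combined with the finiteness of the species set). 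This is presumably why the authors defer the proof to the appendix.
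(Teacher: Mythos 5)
There is a genuine gap: your plan is to produce a reachable state $\vec{d}$ attaining the supremum $M=\sup\{\vec{a}(S)\mid \vec{c}\to\vec{a}\}$, but this supremum is in general \emph{not} attained, even under the boundedness hypothesis. The paper's own footnote example ($X \to Z$, $X+Z \to S+Z$, starting from $X$ alone) has $\sup \vec{a}(S)$ equal to the initial amount of $X$, yet every reachable state has strictly less $S$ than that, because a strictly positive amount of $X$ must first be sacrificed to $Z$ before the second reaction becomes applicable. So no compactness argument can yield $\vec{d}(S)=M$. The obstruction is not the number of segments but the applicability constraints: requiring $[\bfr]\subseteq[\vec{c}]$ is a \emph{strict} positivity condition, so the set of valid (segment-pattern, flux-vector) tuples is not closed, and the limit of your near-optimal trajectories is a ``trajectory'' that fires a reaction with zero catalyst. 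This is exactly where your claim that the endpoints form the image of a closed polyhedral region fails. Note also that the lemma does not ask for $\vec{d}(S)=M$; it only asks for a reachable $\vec{d}$ that cannot be improved \emph{from $\vec{d}$ onward}, and in the footnote example the state with $Z=\epsilon$ and all remaining $X$ converted to $S$ works for any $\epsilon>0$ even though it misses the supremum.

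The paper's proof exploits precisely this weaker requirement. It first moves (Lemma~\ref{lem:all-pos-exist}, using closure of the reachable set under convex combinations) to a state $\vec{c}_1$ in which every species producible from $\vec{c}$ is already present. From such a state every reachable state is reachable in a \emph{single} segment (Lemma~\ref{lem:all-pos-prop}), and the one-segment-reachable set is a closed polyhedron (Lemma~\ref{lem:polyhedron}); the bounded linear functional $\vec{x}\mapsto\vec{x}(S)$ therefore attains its maximum at some $\vec{d}$ in this polyhedron, and anything reachable from $\vec{d}$ lies back in the same polyhedron, so $\vec{d}$ is locally maximal as required. The step missing from your argument is this preliminary move to a state where all producible species are present, which renders the open applicability constraints vacuous and makes the relevant reachable set closed.
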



\begin{lemma}\label{lem:comp-implies-not-using-out}
If a CRC $\calC$ stably computes $f$ and is composable, then we can remove all reactions where the output species appears as a reactant, and the resulting output-oblivious CRC will still stably compute $f$.
\end{lemma}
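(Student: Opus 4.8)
The plan is to derive everything from a single use of composability. Let $R'\subseteq R$ be the reactions of $\calC$ not having $Y$ as a reactant, and let $\calC'=(\Lambda,R',\Sigma,Y)$ be the resulting output-oblivious CRC; I must show $\calC'$ stably computes $f$. Compose $\calC$ downstream with the trivial ``copy'' module $\calC_{\mathrm{id}}=(\{Y,Z\},\{Y\to Z\},\{Y\},Z)$, which stably computes the identity; by composability of $\calC$, the composition $\calD=(\Lambda\cup\{Z\},\,R\cup\{Y\to Z\},\,\Sigma,\,Z)$ stably computes $\mathrm{id}\circ f=f$. Two structural facts about $\calD$ will be used throughout: $Z$ is never a reactant, so its concentration is nondecreasing along any trajectory; and since $R'\subseteq R\subseteq R\cup\{Y\to Z\}$, anything reachable in $\calC'$ is reachable in $\calD$.

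Step 1, no overshoot: for every input $\vx$ and every $\vc$ with $\vx\to_{\calC'}\vc$, one has $\vc(Y)\le f(\vx)$. Otherwise $\vc(Y)>f(\vx)\ge 0$; inside $\calD$, follow the same reachability to $\vc$ and then run $Y\to Z$ to exhaustion, reaching a state with $Z=\vc(Y)>f(\vx)$. Since $Z$ can never decrease, no $\calD$-output-stable state with $Z$-value $f(\vx)$ is reachable from there, contradicting that $\calD$ stably computes $f$. Since no reaction of $R'$ consumes $Y$ (so $Y$ is also nondecreasing under $\calC'$), it follows that along every sequence of $\calC'$-reactions the concentration of $Y$ is monotone and bounded above by $f(\vx)$.

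Step 2 and Step 3, the maximal state and the contradiction: fix $\vc$ with $\vx\to_{\calC'}\vc$. By Step 1 the $Y$-concentration is bounded on all states $\calC'$-reachable from $\vc$, so Lemma~\ref{lem:max} applied within $\calC'$ (with $S=Y$) gives a state $\vd$, $\calC'$-reachable from $\vc$, from which $Y$ can never increase in $\calC'$; since $Y$ also never decreases in $\calC'$, $\vd$ is $\calC'$-output-stable, and $\vd(Y)\le f(\vx)$. It remains to rule out $\vd(Y)<f(\vx)$. Assuming it, work in $\calD$: from $\vd$ run $Y\to Z$ to exhaustion to reach $\vd^Z$, which agrees with $\vd$ off the $Y$-coordinate and has $\vd^Z(Y)=0$ and $\vd^Z(Z)=\vd(Y)$. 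The crux is the claim that $\vd^Z$ is $\calD$-output-stable; since $Y\to Z$ is the only $Z$-producing reaction, it suffices to show that no state $\calD$-reachable from $\vd^Z$ has positive $Y$-concentration. If some did, then in the first segment along which $Y$ becomes positive, and in every segment before it, the starting state has $Y=0$, so those segments use only reactions of $R'$ (any reaction with $Y$ as a reactant, $Y\to Z$ included, is inapplicable when $Y=0$); hence some sequence of $R'$-reactions leads from $\vd^Z$ to a state with $Y>0$. That sequence uses only reactions of $R'$, none of which has $Y$ as a reactant, and $\vd^Z$ agrees with $\vd$ on $\Lambda\setminus\{Y\}$, so the very same flux vectors are applicable starting from $\vd$ and produce the same final state with $\vd(Y)$ added back into the $Y$-coordinate --- a state $\calC'$-reachable from $\vd$ with strictly more $Y$ than $\vd$, contradicting $\calC'$-output-stability of $\vd$. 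Therefore $\vd^Z$ is $\calD$-output-stable with $\vd^Z(Z)=\vd(Y)<f(\vx)$ and is $\calD$-reachable from $\vx$, contradicting that $\calD$ stably computes $f$. Hence $\vd(Y)=f(\vx)$; and since $\vd$ is $\calC'$-output-stable and $\calC'$-reachable from $\vc$, it is exactly the stable output state required by the definition of stable computation, so $\calC'$ stably computes $f$.

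The main obstacle is the claim that $\vd^Z$ is $\calD$-output-stable: I have to convert ``$\calC'$ never consumes $Y$'' together with ``$\vd$ is maximal for $Y$ in $\calC'$'' into ``the full network $R\cup\{Y\to Z\}$ can never regenerate $Y$ once it has all been dumped into $Z$.'' The delicate point is that $\calC$ itself need not be output-oblivious, so a reaction of $R\setminus R'$ could in principle net-produce $Y$; the resolution is that any such reaction requires $Y$ to be present in order to fire, so the first regeneration of $Y$ must come from an $R'$-reaction, which maximality of $\vd$ forbids. Everything else --- transplanting the $R'$-flux sequence from $\vd^Z$ to $\vd$ while keeping all intermediate states nonnegative, using that applicability of $R'$-reactions does not depend on the $Y$-coordinate --- is routine bookkeeping; one should also keep in mind that Lemma~\ref{lem:max} needs its boundedness hypothesis, which is exactly what Step 1 provides.
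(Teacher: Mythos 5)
Your proof is correct and takes essentially the same route as the paper's: compose with the identity module $Y\to Z$, use the no-overshoot argument to bound $Y$ and invoke Lemma~\ref{lem:max}, and then show that converting all remaining $Y$ into $Z$ yields an output-stable state of the composition whose $Z$-value forces $\vd(Y)=f(\vx)$. Your Step 3 merely spells out in more detail the paper's one-line observation that $Y$ cannot be regenerated from that state (the first regeneration would have to use only reactions of $R'$, contradicting the maximality of $\vd$).
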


\begin{proof}
Let $\calC_1 = (\Lambda_1,R_1,\Sigma_1,Y_1)$ be a composable CRC stably computing some function $f$. Let $\calC_0 \subseteq \calC_1$ be the CRN obtained by removing all of the reactions that consume $Y_1$ from $\calC_1$. We would like to show that $\calC_0$ stably computes $f$. Suppose we compose $\calC_1$ with $\calC_2$ consisting of only the reaction $Y_1 \to Y_2$ with input species $Y_1$ and output species $Y_2$.
Since $\calC_1$ is composable and $\calC_2$ stably computes the identity function, the resulting CRN $\calC_{2\circ1}$ must stably compute $f$. For any input vector $\vec{x}$ consider a state $\vec{c}$ reachable from $\vec{x}$.

Assume that $\calC_0$ could reach a state $\vec{d}$ from $\vec{c}$ where $\vec{d}(Y_1) > f(\vec{x})$. Then $\calC_1$ would not be composable because $\calC_{2\circ1}$ can also reach $\vec{d}$ and then applying the reaction in $\calC_2$ to convert all $Y_1$ into $Y_2$ gives us a state $\vec{d'}$ with $\vec{d'}(Y_2) > f(\vec{x})$. Since there is no reaction in $\calC_{2\circ1}$ that consumes $Y_2$ there is no output stable state reachable from $\vec{d'}$ that computes $f(\vec{x})$. Therefore there is no state $\vec{d}$ such that $\vec{d}(Y_1) \geq f(\vec{x})$ and 
$\vec{d}$ is reachable from $\vec{c}$ via reactions of $\calC_0$. 

Since this implies that the amount of $Y_1$ in any state reachable from $\vec{c}$ is bounded, we can apply Lemma~\ref{lem:max} to say that there is a state $\vec{d}$ such that $\vec{c} \to \vec{d}$ and for any state $\vec{a}$ reachable from $\vec{d}$ we know $\vec{a}(Y_1) \le \vec{d}(Y_1) $. Since $\calC_0$ has no reactions that consume $Y_1$, this is an output stable state of $\calC_0$. Now, consider the state $\vec{b}$ in $\calC_{2\circ1}$ obtained by converting all $Y_1$ in $\vec{d}$ into $Y_2$. Observe that if there were a way to produce $Y_1$ from $\vec{b}$, then there would be a state in $\calC_0$ reachable from $\vec{d}$ that contained more $Y_1$. Since there are no reactions in $\calC_{2\circ 1}$ that consume $Y_2$ and no reactions that produce $Y_1$, we know that $\vec{b}$ is an output stable state. Since $\calC_{2 \circ 1}$ stably computes $f(\vec{x})$, we know $\vec{b}(Y_2) = f(\vec{x})$. Thus we can conclude that $\vec{d}(Y_1) = f(\vec{x})$ and $\calC_0$ stably computes $f(\vec{x})$.
\end{proof}

To allow composition of multiple downstream CRCs, we can use the reaction $Y \rightarrow Y_1 + \ldots + Y_n$ to generate $n$ ``copies'' of the output species $Y$, such that each downstream module uses a different copy as input.
Additionally, if the downstream module is output-oblivious, then the composition is also output-oblivious and thus the composition is composable.
These observations allow complex compositions of modules, and will be used in our constructions in Section~\ref{sec:thm_if}.

\section{Functions Computable by Composable CRNs}\label{sec:functions}
Here we give a complete characterization of the functions computable by composable CRNs as \emph{superadditive}, \emph{positive-continuous}, and \emph{piecewise rational linear}.

\begin{definition}
A function $f : \R^n \rightarrow \R^l$ is \emph{superadditive} iff $\forall \vec{a}, \vec{b} \in \R^n, f(\vec{a}) + f(\vec{b}) \leq f(\vec{a} + \vec{b})$.
\end{definition}

Note that superadditivity implies monotonicity in our case, since the functions computed must be nonnegative.
As an example, we show that the $\max$ function is not superadditive:

\begin{lemma}
The function $\max(x_1, x_2)$ is not superadditive.
\end{lemma}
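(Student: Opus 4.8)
The plan is to exhibit an explicit pair of input vectors witnessing the failure of superadditivity. Recall that superadditivity of $\max$ would require $\max(\va) + \max(\vb) \leq \max(\va + \vb)$ for all $\va, \vb \in \R^2$. The natural choice is to take $\va$ and $\vb$ so that the maximum is achieved in different coordinates, so that summing does not combine the two ``winning'' components.

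First I would set $\va = (1,0)$ and $\vb = (0,1)$. Then $\max(\va) = 1$ and $\max(\vb) = 1$, so the left-hand side is $2$. On the other hand, $\va + \vb = (1,1)$, so $\max(\va + \vb) = 1$, which is the right-hand side. Since $2 \not\leq 1$, the defining inequality fails for this pair, so $\max(x_1,x_2)$ is not superadditive.

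This argument is essentially a one-line counterexample, so there is no real obstacle; the only thing to be mildly careful about is making sure the chosen vectors lie in the domain (they do, being in $\R^2$, in fact in $\Rp^2$) and that the inequality is stated in the correct direction. I would phrase the conclusion so it is clear this single counterexample suffices to negate the universally quantified condition in the definition of superadditivity.

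\begin{proof}
Take $\vec{a} = (1, 0)$ and $\vec{b} = (0, 1)$. Then $\max(\vec{a}) + \max(\vec{b}) = 1 + 1 = 2$, while $\max(\vec{a} + \vec{b}) = \max(1, 1) = 1$. Since $2 > 1$, the inequality $f(\vec{a}) + f(\vec{b}) \leq f(\vec{a} + \vec{b})$ fails for $f = \max$, so $\max(x_1, x_2)$ is not superadditive.
\end{proof}
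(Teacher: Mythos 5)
Your proof is correct and takes essentially the same approach as the paper: the paper uses a generic pair $(x_1,0)$ and $(0,x_2)$ with $x_1,x_2>0$, and your argument is simply the concrete instance $x_1=x_2=1$ of the same counterexample.
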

\begin{proof}
Pick any $x_1, x_2 > 0$. Observe that $\max(x_1,0) + \max(0,x_2) = x_1 + x_2$. But since $x_1$ and $x_2$ are both positive, we know that $x_1 + x_2 > \max(x_1,x_2)$. Thus max is not superadditive and by Lemma~\ref{lem:superadditive} there is no composable CRN which stably computes max.
\end{proof}

\begin{definition}
A function $f : \R^n_{\geq0} \rightarrow \R^l$ is \emph{positive-continuous} if for all $U \subseteq [n]$, $f$ is continuous on the domain $D_U = \{\ x\in\R^n_{\geq0}\ |\ (\forall i \in [n])$,  $x(i) > 0 \iff i \in U \}$.
I.e., $f$ is continuous on any subset $D \subset \R^n_{\geq0}$ that does not have any coordinate $i\in[n]$ that takes both zero and positive values in $D$.
\end{definition}

Next we give our definition of piecewise rational linear.
One may (and typically does) consider a restriction on the domains selected for the pieces, however this restriction is unneccesary in this work, particularly because the additional constraint of positive-continuity gives enough restriction.

\begin{definition}\label{def:pw-rational-linear}
A function $f : \R^n \rightarrow \R$ is \emph{rational linear} if there exists $a_1,\ldots, a_n \in \mathbb{Q}$ such that $f(x) = \sum_{i=1}^n a_ix(i)$.
A function $f : \R^n \rightarrow \R$ is \emph{piecewise rational linear} if there is a finite set of partial rational linear functions $f_1,\ldots,f_p : \R^n \rightarrow \R$ with $\bigcup_{j=1}^p \dom f_j = \R^n$, such that for all $j \in [p]$ and all $x \in \dom f_j$, $f(x) = f_j(x)$. We call $f_1,\ldots,f_p$ the \emph{components} of $f$.
\end{definition}

The following is an example of a superadditive, positive-continuous, piecewise rational linear function:

\begin{IEEEeqnarray}{rCl}
f(\vec{x}) = \begin{cases}
      x_1 + x_2 & x_3 > 0 \\
      \min(x_1,x_2) & x_3 = 0.
   \end{cases}\label{eq:example}
\end{IEEEeqnarray}

The function is superadditive since for all input vectors $\vec{a} = (a_1, a_2, a_3)$, $\vec{b} = (b_1, b_2, b_3)$, there are three cases: \textbf{(1)} $a_3 = b_3 = 0$, in which case both input vectors compute $\min$ which is a superadditive function; \textbf{(2)} $a_3, b_3 \neq 0$, in which case both input vectors compute $x_1 + x_2$, which is a superadditive function; \textbf{(3)} without loss of generality, $a_3 = 0$ and $b_3 \neq 0$, in which case $f(\vec{a}) + f(\vec{b}) = \min(a_1, a_2) + b_1 + b_2 \leq a_1 + a_2 + b_1 + b_2 = f(\vec{a} + \vec{b})$.
The function is positive-continuous, since the only points of discontinuity are when $x_3$ changes from zero to positive.
The function is piecewise rational linear, since $\min$ is piecewise rational linear.

\begin{theorem}\label{main-theorem}
A function $f : \mathbb{R}^n_{\geq0} \rightarrow \mathbb{R}_{\geq0}$ is computable by a composable CRC if and only if it is superadditive positive-continuous piecewise rational linear.
\end{theorem}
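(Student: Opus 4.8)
The plan is to prove the two directions separately, exploiting that by Lemma~\ref{lem:comp-implies-not-using-out} and Lemma~\ref{lem:not-using-out-imp-comp} a function is computable by a composable CRC if and only if it is stably computed by some \emph{output-oblivious} CRC. So throughout I work with output-oblivious CRCs.

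For the ``only if'' direction, let $\calC$ be output-oblivious and stably compute $f$. An output-oblivious CRC is in particular a rate-independent CRC, so the characterization of \cite{rate-indep} immediately gives that $f$ is positive-continuous and piecewise rational linear; it remains to prove superadditivity (this is Lemma~\ref{lem:superadditive}). Fix inputs $\va,\vb$. By stable computation applied to input $\va$ with $\vec{c}=\va$, there is an output-stable state $\vo_a$ with $\va\to\vo_a$ and $\vo_a(Y)=f(\va)$, and similarly $\vo_b$ with $\vb\to\vo_b$ and $\vo_b(Y)=f(\vb)$. The key observation is that straight-line reachability is preserved under adding a fixed nonnegative vector: if $\vec{p}\to_{\vec{u}}\vec{q}$ then $\vec{p}+\vec{s}\to_{\vec{u}}\vec{q}+\vec{s}$ for every $\vec{s}\ge 0$, since adding $\vec{s}$ cannot disable an applicable reaction and cannot make a concentration negative. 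Applying this segment by segment, from $\va+\vb$ we first mimic $\va\to\vo_a$ to reach $\vo_a+\vb$, then mimic $\vb\to\vo_b$ to reach $\vo_a+\vo_b$; hence $\va+\vb\to\vo_a+\vo_b$ and $(\vo_a+\vo_b)(Y)=f(\va)+f(\vb)$. Since $\calC$ stably computes $f$, from $\vo_a+\vo_b$ we can reach an output-stable state of $Y$-value $f(\va+\vb)$; since $\calC$ is output-oblivious, $Y$ is never a reactant, so the $Y$-concentration is nondecreasing along that run, giving $f(\va+\vb)\ge f(\va)+f(\vb)$.

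For the ``if'' direction (carried out in Section~\ref{sec:thm_if}), I would first put $f$ into a normal form. Because each rational linear piece is positively homogeneous and $f$ is continuous on each sign-domain $D_U$, $f$ is positively homogeneous; superadditive plus positively homogeneous implies concave, so on each $D_U$ the finitely many pieces of $f$ combine into a single minimum of finitely many rational linear forms, each of which, being nonnegative on the orthant, has nonnegative rational coefficients. Positive-continuity and superadditivity then make these families consistent across the lattice of supports (a higher-dimensional piece dominates the lower ones on their common face). From this I would show that $f$ can be expressed using only $\min$, $+$, and ``gated linear forms'' $\big(\sum_{i\in S}a_i x_i\big)\cdot\prod_{j\in T}[x_j>0]$ with $a_i\ge 0$ rational. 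Each gated form is itself superadditive, and $\min$ and $+$ preserve both superadditivity and output-obliviousness. Each gated form is realized by an output-oblivious CRC (fan out the inputs into copies; route each $x_i$ through catalytic reactions $X_j+\cdot\to X_j+\cdot$ for $j\in T$ so it proceeds only when $X_j$ is present; scale by clearing denominators; sum into a dedicated species), $+$ is realized by producing into a common species, and $\min$ by a reaction $F_1+\dots+F_k\to Z$; composing these (using a fan-out reaction on the output to drive multiple downstream modules) yields an output-oblivious, hence composable, CRC stably computing $f$, and the usual decomposition of multi-reactant/product reactions shows unimolecular and bimolecular reactions suffice.

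The main obstacle is the normal-form step in the ``if'' direction. The naive idea of writing $f$ as ``value on a lower-dimensional face $+$ a boundary correction'' fails, because the corrections (for instance a maximum restricted to a sub-domain) need not be superadditive and so are not individually computable by any output-oblivious module; instead the corrections must be supplied as gated terms that deliberately overshoot and are then clipped by an outer minimum (as in $\min\big(x_1+x_2,\ \min(x_1,x_2)+(x_1+x_2)[x_3>0]\big)$ for the running example). One must prove that such a $\min$/$+$/gated-form expression exists for \emph{every} superadditive positive-continuous piecewise rational linear $f$, and that the assembled CRC computes $f$ under every reaction order, never producing more than $f(\vx)$.
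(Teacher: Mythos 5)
Your ``only if'' direction is complete and is essentially the paper's own argument: positive-continuity and piecewise rational linearity are imported from~\cite{rate-indep} (Lemma~\ref{lem:pos-cont-pw-rl}), the additivity of reachability ($\va\to\vb$ implies $\va+\vc\to\vb+\vc$) is exactly Lemma~\ref{lem:atob_actobc}, and the run $\va+\vb\to\vo_a+\vb\to\vo_a+\vo_b$ followed by the output-oblivious monotonicity of $Y$ is exactly the proof of Lemma~\ref{lem:superadditive}. The reduction to output-oblivious CRCs via Lemmas~\ref{lem:not-using-out-imp-comp} and~\ref{lem:comp-implies-not-using-out} is also the paper's route. (One small point you gloss over: deducing positive homogeneity of $f$ from the homogeneity of its linear pieces requires first showing that the pieces' domains can be taken to be cones, which needs a continuity-along-rays argument; this is Lemma~\ref{lem:domains-are-cones}.)

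In the ``if'' direction you have the right architecture --- on each support class write $f$ as a minimum of nonnegative rational linear forms (your concavity argument is a valid alternative to the paper's large-ball argument in Lemma~\ref{lem:fn-is-a-min}), then handle the discontinuities at the boundary of the orthant with presence-gated correction terms clipped by an outer minimum --- but the central identity that makes this work is stated as an obstacle rather than proved. This is a genuine gap: without an explicit normal form, there is no construction. The paper closes it with Lemma~\ref{lem:multiple-domains}: writing $g_S$ for the continuous extension of $f$ from the support class $D_S$ and $P_K$ for the presence predicate of Definition~\ref{def:s-prediate},
\[ f(\vx) \;=\; \min_{S\subseteq [n]}\Bigl[\, g_S(\vx) + \sum_{K\not\subseteq S} P_K(\vx)\, g_K(\vx) \Bigr], \]
and the proof is short given precisely the ``lattice consistency'' fact you already identified (Lemma~\ref{lem:domain-inequality}: $g_S\ge g_I$ on $D_I$ whenever $S\supseteq I$). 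Indeed, if $I$ is the true support of $\vx$, every gate $P_K$ with $K\not\subseteq I$ is off, so the $S=I$ term equals $g_I(\vx)=f(\vx)$; if $S\not\supseteq I$ the sum contains the term $P_I(\vx)g_I(\vx)=f(\vx)$, so that term of the minimum is at least $f(\vx)$; and if $S\supseteq I$ then already $g_S(\vx)\ge g_I(\vx)=f(\vx)$. Your running example $\min\bigl(x_1+x_2,\ \min(x_1,x_2)+(x_1+x_2)[x_3>0]\bigr)$ is the simplified instance of this formula, so you were one lemma away from the paper's construction; everything downstream (gated forms via catalysis, $\min$ via a joint consumption reaction, fan-out copies, bimolecular decomposition) matches Lemma~\ref{lem:full-construction}.
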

We prove each direction of the theorem independently in Sections~\ref{sec:thm_onlyif}~and~\ref{sec:thm_if}.

\subsection{Computable Functions are Superadditive Positive-Continuous Piecewise Rational Linear}\label{sec:thm_onlyif}

Here, we prove that a stably computable function must be superadditive positive-continuous piecewise rational linear.
The constraints of positive-continuity and piecewise rational linearity stem from previous work:

\begin{lemma}\label{lem:pos-cont-pw-rl}[Proven in~\cite{rate-indep}]
If a function $f : \mathbb{R}^n_{\geq0} \rightarrow \mathbb{R}_{\geq0}$ is stably computable by a CRC, then $f$ is positive-continuous piecewise rational linear.
\end{lemma}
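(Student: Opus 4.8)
Since this lemma is credited to~\cite{rate-indep}, I would treat the task as reconstructing that argument. The plan is to first pin down the structure of the reachability relation $\to$. I would show that $\vc \to \vd$ holds exactly when there is a flux vector $\vu \geq \vec{0}$ with $\vd = \vc + \vM\vu$ that admits an ordering of $[\vu]$ witnessing that each reaction becomes applicable once the species produced by the reactions before it are present. Packaging the finitely many possible increasing chains of present-species sets $[\vc] = \Delta_0 \subseteq \Delta_1 \subseteq \cdots \subseteq \Delta_k$ as ``cases'', within each case $(\vc,\vd)$ is cut out by rational linear (in)equalities, so $\{(\vc,\vd): \vc\to\vd\}$ is a finite union of rational polyhedra; the same is true after intersecting with the (again linearly definable) predicate that $\vd$ be output stable, since output stability is a statement about which reactions can ever become applicable from $\vd$.

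Next I would derive piecewise rational linearity. Applying Definition~\ref{def:direct-computation} with $\vc=\vx$ gives, for every $\vx$, an output-stable $\vo$ reachable from $\vx$ with $\vo(Y)=f(\vx)$; applying it again starting from any output-stable state reachable from $\vx$ shows that all such states share the $Y$-value $f(\vx)$. Hence $\mathrm{graph}(f) = \{(\vx, v) : \exists\, \vo \text{ output stable},\ \vx\to\vo,\ \vo(Y)=v\}$, which is the image of a finite union of rational polyhedra under a coordinate projection, hence again a finite union of rational polyhedra $P_1,\dots,P_m \subseteq \R^{n+1}$. Since $f$ is single valued, each $P_j$ is the graph of an affine map over its $\vx$-projection; since scaling a state by $\lambda>0$ scales every segment (so $\vx\to\vc \iff \lambda\vx\to\lambda\vc$), $f$ is positively homogeneous and these affine maps have zero constant term. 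Reading off the rational coefficients and replacing the pieces by their closures yields components as in Definition~\ref{def:pw-rational-linear}.

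For positive-continuity I would fix $U\subseteq[n]$ and argue $f$ is continuous on $D_U$. Given $\vx\in D_U$ and $\vx_k\to\vx$ in $D_U$, fix a witnessing computation $\vx\to\vo$ with $\vo$ output stable and $\vo(Y)=f(\vx)$. Since every $\vx_k$ has exactly the species of $U$ present, the reactions initially applicable at $\vx_k$ coincide with those at $\vx$, and a suitably perturbed flux vector carries $\vx_k$ to a state $\vc_k$ that is close to $\vo$ (in fact from which a state converging to $\vo$ is reachable). Stable computation applied from $\vc_k$ then produces an output-stable $\vo_k$ with $\vo_k(Y)=f(\vx_k)$, and I would conclude $f(\vx_k)\to f(\vx)$ by using a quantitative form of output stability of $\vo$ --- extracted from the polyhedral description of reachability above --- that forbids states near $\vo$ from reaching $Y$-values far from $f(\vx)$.

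The hard part is exactly that last step: converting ``$\vo$ is output stable'' into a robust inequality controlling the $Y$-values reachable from a neighborhood of $\vo$, and simultaneously controlling how the supports of reachable states can move under a small input perturbation. This is the reason continuity is asserted only on the strata $D_U$ where the input support is frozen; crossing such a stratum (some $x_i$ going from $0$ to positive) is precisely where the allowed discontinuities live. Everything else --- the polyhedral characterization of reachability and the quantifier elimination --- is the kind of bookkeeping carried out in~\cite{rate-indep}.
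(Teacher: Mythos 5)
First, note that the paper does not actually prove this lemma---it is imported wholesale from \cite{rate-indep}---so there is no in-paper argument to compare against; your reconstruction can only be judged against the statement itself and the machinery the paper does develop. Your skeleton is the right one and is consistent with how the cited work proceeds: characterize reachability by a flux vector together with an ordering of its support along which reactions become applicable, conclude that the reachability relation and the output-stability predicate are definable by Boolean combinations of rational linear inequalities, deduce that the graph of $f$ is semilinear (using, correctly, that all output-stable states reachable from $\vx$ share the $Y$-value $f(\vx)$), observe that a single-valued piece is simultaneously a min and a max of affine functions of $\vx$ and hence affine, and kill the constant terms via the scaling invariance $\mathrm{Post}(\gamma\vc)=\gamma\,\mathrm{Post}(\vc)$. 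Two small imprecisions: the pieces are semilinear sets rather than closed polyhedra (strict inequalities enter through applicability, i.e.\ through $[\vc]$), and the reachability criterion must also record nonnegativity of the endpoint of every segment, not only the ordering of supports; neither affects the conclusion of piecewise rational linearity.

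The genuine gap is the one you flag yourself: positive-continuity. After the standard scaling trick (for $\vx_k$ near $\vx$ in $D_U$ one has $\vx_k\geq(1-\epsilon)\vx$ coordinatewise, so $\vx_k\to(1-\epsilon)\vo+\vec{\delta}$ by Lemma~\ref{lem:atob_actobc} and scaling), you obtain a \emph{reachable} state whose $Y$-concentration is close to $f(\vx)$; this does not yet bound $f(\vx_k)$, because the CRC in this lemma is \emph{not} assumed output-oblivious, so $Y$ may still be produced or consumed on the way from $(1-\epsilon)\vo+\vec{\delta}$ to an output-stable state. Converting ``$\vo$ is output stable'' into the quantitative statement you need---that every output-stable state reachable from a small perturbation of $\vo$ within the same support class has $Y$-value close to $\vo(Y)$---does not follow from output stability of $\vo$ alone: a tiny surplus of a catalytic species can in general drive an unbounded flux, and it is only the global hypothesis that $\calC$ stably computes a well-defined function that rules this out. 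That quantitative stability lemma (or an alternative route exploiting the already-established semilinear structure of $f$ on each $D_U$) is precisely the content of the continuity argument in \cite{rate-indep}, and it is the part your proposal leaves open. So as written the proposal establishes piecewise rational linearity but not positive-continuity.
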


In addition to the constraints in the above lemma, we show in Lemma~\ref{lem:superadditive} that a function must be superadditive if it is stably computed by a CRC.
To prove this, we first note a useful property of reachability in CRNs. 
\begin{lemma}\label{lem:atob_actobc}
Given states $\vec{a},\vec{b},\vec{c}$, if $\vec{a} \rightarrow \vec{b}$ then $\vec{a} + \vec{c} \rightarrow \vec{b} + \vec{c}$.
\end{lemma}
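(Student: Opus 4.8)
The plan is to prove Lemma~\ref{lem:atob_actobc} by unwinding the definition of reachability into segments and arguing that adding a constant ``inert'' amount $\vec{c}$ of species to every state along a straight-line trajectory cannot destroy applicability or nonnegativity. First I would reduce to the single-segment case: by Definition~\ref{defn-reachable-segment}, $\vec{a} \to \vec{b}$ means there is a finite sequence $\vec{a} = \vec{a}_0 \slto \vec{a}_1 \slto \cdots \slto \vec{a}_l = \vec{b}$, so it suffices to show that $\vec{a}_i \slto \vec{a}_{i+1}$ implies $\vec{a}_i + \vec{c} \slto \vec{a}_{i+1} + \vec{c}$; the general statement then follows by chaining these single segments together (the intermediate states become $\vec{a}_i + \vec{c}$).

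For the single-segment step, suppose $\vec{a} \to_{\vec{u}} \vec{b}$, i.e.\ $\vec{u}$ is an applicable flux vector at $\vec{a}$ with $\vec{a} + \vM\vec{u} = \vec{b}$. I claim the very same flux vector $\vec{u}$ witnesses $\vec{a} + \vec{c} \to_{\vec{u}} \vec{b} + \vec{c}$. There are two things to check. Applicability: $\vec{u}$ is applicable at $\vec{a}$ means every reaction in $[\vec{u}]$ has all its reactants present in $\vec{a}$, i.e.\ $[\vec{a}]$ contains those reactant species; since $\vec{c} \geq \vec{0}$ we have $[\vec{a}] \subseteq [\vec{a} + \vec{c}]$, so those reactants are still present in $\vec{a} + \vec{c}$ and $\vec{u}$ remains applicable. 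Arithmetic and nonnegativity: $(\vec{a} + \vec{c}) + \vM\vec{u} = (\vec{a} + \vM\vec{u}) + \vec{c} = \vec{b} + \vec{c}$, and since $\vec{b} \in \Rp^\Lambda$ and $\vec{c} \in \Rp^\Lambda$, the sum $\vec{b} + \vec{c}$ is again in $\Rp^\Lambda$, so it is a legitimate state. This establishes the single-segment claim.

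Finally I would assemble the pieces: applying the single-segment claim to each link of the chain $\vec{a}_0 \slto \cdots \slto \vec{a}_l$ gives $\vec{a}_0 + \vec{c} \slto \vec{a}_1 + \vec{c} \slto \cdots \slto \vec{a}_l + \vec{c}$, hence $\vec{a} + \vec{c}$ is $l$-segment reachable from $\vec{b} + \vec{c}$ (reading indices correctly: $\vec{a} + \vec{c} \to \vec{b} + \vec{c}$), which is exactly $\vec{a} + \vec{c} \to \vec{b} + \vec{c}$. There is no real obstacle here; the only thing to be mildly careful about is that ``applicable'' is defined purely in terms of which species are \emph{present} (positive concentration), not their magnitudes, so adding more of any species can never disable a reaction — this monotonicity of the ``present'' predicate under addition of nonnegative vectors is the crux, and it is immediate from Definition~\ref{defn-reachable-segment} and the definition of applicability. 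If anything the expository challenge is just being clean about the induction on segment count rather than proving something deep.
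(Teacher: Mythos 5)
Your proof is correct and is essentially the same argument as the paper's, which simply states informally that ``adding species cannot prevent reactions from occurring'' and reuses the same sequence of reactions; you have just written out the segment-by-segment details (applicability via monotonicity of the ``present'' predicate, nonnegativity of $\vec{b}+\vec{c}$) that the paper leaves implicit. The only blemish is the momentary index reversal at the end, which you correct yourself.
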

\begin{proof}Adding species cannot prevent reactions from occurring.
Thus, we can consider the series of reactions where $\vec{c}$ doesn't react to reach the state $\vec{b} + \vec{c}$ from the state $\vec{a} + \vec{c}$.
\end{proof}
\noindent 
We now utilize this lemma to prove that composably computable functions must be superadditive.

\begin{lemma}\label{lem:superadditive}
If a function $f : \mathbb{R}^n_{\geq0} \rightarrow \mathbb{R}_{\geq0}$ is stably computable by a composable CRC, then $f$ is superadditive.
\end{lemma}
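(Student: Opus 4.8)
The plan is to use composability together with Lemma \ref{lem:atob_actobc} to "feed" the output of one copy of the computation into a second copy, so that the second copy effectively computes $f$ on an input that is at least $f(\va) + f(\vb)$, while the composed CRC (by composability) must stably compute $f$ applied to the sum. More concretely, let $\calC$ be a composable CRC stably computing $f$. Compose $\calC$ with a downstream CRC $\calC_2$ computing the identity function via the single reaction $Y \to Y_2$ (as in the proof of Lemma \ref{lem:comp-implies-not-using-out}); by composability $\calC_{2\circ 1}$ stably computes $f$. Now fix inputs $\va, \vb \in \Rp^n$. Starting from $\va$, run the reactions of $\calC$ to reach an output stable state $\vo_a$ with $\vo_a(Y) = f(\va)$; then by Lemma \ref{lem:atob_actobc} applied with $\vec{c} = \vb$, starting from $\va + \vb$ we can reach $\vo_a + \vb$. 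Then, treating $\vb$ as a fresh input to a second "run" of $\calC$ (its species other than $Y$ are disjoint from what was produced, or at least are not prevented from reacting — here I need to be careful), drive $\calC$ again to produce an additional $f(\vb)$ units of $Y$, reaching a state with at least $f(\va) + f(\vb)$ units of $Y$.

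The key subtlety — and the main obstacle — is that $\calC$ is a single CRC, not two disjoint copies, so running it "twice" from $\vo_a + \vb$ is not literally a second independent execution: the leftover species in $\vo_a$ may interfere with processing $\vb$. The clean way around this is the trick already used in Lemma \ref{lem:comp-implies-not-using-out}: first pass to the output-oblivious reduction $\calC_0$ of $\calC$ (legitimate by Lemma \ref{lem:comp-implies-not-using-out}, which says $\calC_0$ still stably computes $f$ and is output-oblivious), and work with $\calC_0$ throughout. In an output-oblivious CRC, $Y$ is never consumed, so once $f(\va)$ units of $Y$ have been produced from $\va$, that amount of $Y$ persists forever. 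Then from $\vo_a + \vb$, use Lemma \ref{lem:atob_actobc} in the form: whatever sequence of reactions takes $\vb$ (as an input) to an output stable state $\vo_b$ with $\vo_b(Y) = f(\vb)$ — viewing the "context" $\vec{c}$ as the non-$Y$ junk in $\vo_a$ plus the $f(\va)$ units of $Y$ — can be replayed, since adding species never blocks reactions. This reaches a state $\vec{e}$ reachable from $\va + \vb$ with $\vec{e}(Y) \ge f(\va) + f(\vb)$ (the $Y$ from the first run is untouched because $\calC_0$ never consumes $Y$, and the second run adds $f(\vb)$ more).

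Finally, since $\calC_0$ stably computes $f$ and $\va + \vb$ is a valid input, from $\vec{e}$ there must be a reachable output stable state $\vo$ with $\vo(Y) = f(\va + \vb)$. But $\calC_0$ is output-oblivious, so $Y$ is never consumed along the path from $\vec{e}$ to $\vo$; hence $\vo(Y) \ge \vec{e}(Y) \ge f(\va) + f(\vb)$. Therefore $f(\va + \vb) = \vo(Y) \ge f(\va) + f(\vb)$, which is exactly superadditivity. I expect the write-up effort to go into justifying the "replay" step carefully: that a reaction sequence witnessing $\vb \to \vo_b$ inside $\calC_0$ can be applied starting from the larger state $\vo_a + \vb$ and still produces (at least) $f(\vb)$ net units of $Y$ — this follows from Lemma \ref{lem:atob_actobc} but one should state precisely which state plays the role of $\vec{c}$ and note that the $Y$-count is additive because $Y$ is only ever produced, never consumed, in $\calC_0$.
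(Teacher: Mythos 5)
Your proposal is correct and follows essentially the same route as the paper's proof: use Lemma~\ref{lem:atob_actobc} twice to reach $\vo_a+\vo_b$ from $\va+\vb$, invoke Lemma~\ref{lem:comp-implies-not-using-out} to assume the CRC is output-oblivious, and conclude that the output-stable value $f(\va+\vb)$ reachable from $\vo_a+\vo_b$ cannot be below $f(\va)+f(\vb)$ since $Y$ is never consumed. The ``replay'' subtlety you worry about is already fully handled by Lemma~\ref{lem:atob_actobc} with $\vec{c}=\vo_a$ (added species never block reactions), so no extra justification beyond what the paper gives is needed, and the auxiliary composition with $Y\to Y_2$ is unnecessary.
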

\begin{proof}
Assume $\mathcal{C}$ stably computes $f$.
By definition of $\mathcal{C}$ stably computing $f$, $\forall$ initial states $\vec{x}_1, \vec{x_2}$, $\exists\ \vec{o}_1, \vec{o}_2$ such that $\vec{x}_1 \rightarrow \vec{o}_1$ with $\vec{o}_1(Y) = f(\vec{x}_1)$ and $\vec{x}_2 \rightarrow \vec{o}_2$ with $\vec{o}_2(Y) = f(\vec{x}_2)$.
Consider $\mathcal{C}$ on input $\vec{x}_1 + \vec{x}_2$.
By Lemma~\ref{lem:atob_actobc}, $\vec{x}_1 + \vec{x}_2 \rightarrow \vec{o}_1 + \vec{x}_2$, and again by Lemma~\ref{lem:atob_actobc}, $\vec{o}_1 + \vec{x}_2 \rightarrow \vec{o}_1 + \vec{o}_2$.
Looking at the concentration of output species $Y$, we have $(\vec{o}_1 + \vec{o}_2)(Y) = f(\vec{x}_1) + f(\vec{x}_2)$.
Since $\calC$ stably computes $f$, there exists an output stable state $\vec{o}'$ reachable from initial state $\vec{x}_1 + \vec{x}_2$ and reachable from state $\vec{o}_1 + \vec{o}_2$, with $\vec{o'}(Y) = f(\vec{x}_1 + \vec{x}_2)$.
Since $\mathcal{C}$ is composable, we can assume that species $Y$ does not appear as a reactant without loss of generality by Lemma~\ref{lem:comp-implies-not-using-out}. Thus the concentration of $Y$ in any state reachable from state $\vec{o}_1 + \vec{o}_2$ cannot be reduced from $f(\vec{x}_1) + f(\vec{x}_2)$, implying $\vec{o'}(Y) = f(\vec{x}_1 + \vec{x}_2) \geq f(\vec{x}_1) + f(\vec{x}_2)$.
This holds for all input states $\vec{x}_1$, $\vec{x}_2$, and thus $f$ is superadditive.
\end{proof}

\begin{corollary}\label{cor:no-max}
No composable CRC computes $f(x_1, x_2) = \max(x_1,x_2)$.
\end{corollary}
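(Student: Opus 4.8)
The plan is to obtain this as an immediate consequence of Lemma~\ref{lem:superadditive}, which states that every function stably computed by a composable CRC is superadditive. Taking the contrapositive, it suffices to exhibit a single pair of inputs witnessing that $f(x_1,x_2)=\max(x_1,x_2)$ violates superadditivity; then no composable CRC can stably compute it.

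First I would fix any $x_1,x_2 > 0$ and set $\va = (x_1,0)$ and $\vb = (0,x_2)$. Then $f(\va) = x_1$, $f(\vb) = x_2$, and $f(\va+\vb) = f(x_1,x_2) = \max(x_1,x_2)$. Since both coordinates are strictly positive, $\max(x_1,x_2) < x_1 + x_2 = f(\va) + f(\vb)$, so $f(\va) + f(\vb) > f(\va+\vb)$, contradicting the superadditivity inequality. This is exactly the content of the earlier lemma that $\max$ is not superadditive, so really the corollary just assembles that fact with Lemma~\ref{lem:superadditive}.

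There is essentially no obstacle here: the substantive work has already been done in establishing Lemma~\ref{lem:comp-implies-not-using-out} (reduction to output-oblivious form) and Lemma~\ref{lem:superadditive} (composable computation forces superadditivity). The only thing to be careful about is that the witness uses inputs with a zero coordinate, which is permitted since the domain is $\R^n_{\geq 0}$; one might also remark that the strict inequality requires both $x_1$ and $x_2$ positive, so the failure is genuine and not an artifact of boundary behavior. Concluding, since $\max$ is not superadditive while any composably computable function is, no composable CRC stably computes $\max(x_1,x_2)$.
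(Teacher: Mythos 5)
Your proposal is correct and matches the paper's argument exactly: the paper establishes the corollary by combining Lemma~\ref{lem:superadditive} with the earlier lemma that $\max$ is not superadditive, whose proof uses the same witness $\max(x_1,0)+\max(0,x_2)=x_1+x_2>\max(x_1,x_2)$ for $x_1,x_2>0$. Nothing further is needed.
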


\subsection{Superadditive Positive-Continuous Piecewise Rational Linear Functions are Computable}\label{sec:thm_if}

It was shown in \cite{ovchinnikov2002max} that every piecewise linear function can be written as a $\max$ of $\min$s of linear functions. This fact was exploited in \cite{rate-indep} to construct a CRN that dual-rail 
computed continuous piecewise rational linear functions. To directly compute a positive-continuous piecewise rational linear function, dual-rail networks were used to compute the function on each domain, take the appropriate $\max$ of $\min$s, and then the reaction $Y^+ + Y^- \to \emptyset$ was used to convert the dual-rail output into a direct output where the output species is $Y^+$. However, this technique is not usable in our case: by Corollary~\ref{cor:no-max}, we cannot compute the $\max$ function, and the technique of converting dual-rail output to a direct output is not output-oblivious. In fact, computing $f(Y^+, Y^-) = Y^+ - Y^-$ is not superadditive, and so by Lemma~\ref{lem:superadditive}, there is no composable CRC which computes this conversion.

Since our functions are positive-continuous, we first consider domains where the function is continuous, and show that it can be computed by composing rational linear functions with $\min$.
Since rational linear functions and $\min$ can be computed without using the output species as a reactant, we achieve composability.
We then extend this argument to handle discontinuities between domains.

\begin{definition}
An \emph{open ray} $\ell$ in $\mathbb{R}^n$ from the origin through a point $\vec{x}$ is the set $\ell = \set{\vec{y} \in \mathbb{R}^n \ |\ \vec{y} = t\cdot\vec{x},\ t \in \mathbb{R}_{>0}}$. Note that $t$ is strictly positive, so the origin is not contained in $\ell$.
\end{definition}

\begin{definition}
We call a subset $D \subseteq \mathbb{R}^n$ a \emph{cone} if for all $\vec{x} \in \mathbb{R}^n$, we know that $\vec{x} \in D$ implies the open ray from the origin through $\vec{x}$ is contained in $D$.
\end{definition}

\begin{lemma}\label{lem:domains-are-cones}
Suppose we are given a continuous piecewise rational linear function $f: \mathbb{R}_{>0}^n \to \mathbb{R}_{\ge 0}$. Then we can choose domains for $f$ which are cones which contain an open ball of non-zero radius.
\end{lemma}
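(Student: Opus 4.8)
The plan is to start from the components $f_1,\dots,f_p$ of Definition~\ref{def:pw-rational-linear}, form their ``agreement sets'' inside the positive orthant, and throw away the ones that are too low-dimensional. Write $\ell_j$ for the (total) rational linear function $\ell_j(\vec{x}) = \sum_{i=1}^n a_i^{(j)} x(i)$ underlying the partial component $f_j$, so that $f = \ell_j$ on $\dom f_j$. The first step is to show that $f$ is positively homogeneous of degree $1$ on $\R_{>0}^n$: fixing $\vec{x}\in\R_{>0}^n$ and restricting $f$ to the ray $\{s\vec{x} : s>0\}$, each point $s\vec{x}$ lies in some $\dom f_j$, so $f(s\vec{x})/s = \ell_j(\vec{x})$ for that $j$; thus $s\mapsto f(s\vec{x})/s$ is continuous on the connected set $(0,\infty)$ with range contained in the finite set $\{\ell_1(\vec{x}),\dots,\ell_p(\vec{x})\}$, hence constant, giving $f(s\vec{x}) = s\,f(\vec{x})$.

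For $j\in[p]$ put $C_j = \{\vec{x}\in\R_{>0}^n : f(\vec{x}) = \ell_j(\vec{x})\}$. Each $C_j$ is closed in $\R_{>0}^n$ (the zero set of the continuous map $f-\ell_j$), is a cone by homogeneity of $f$ and of $\ell_j$, and $f$ restricted to $C_j$ equals the rational linear function $\ell_j$; moreover $\bigcup_j C_j = \R_{>0}^n$, since any $\vec{x}\in\R_{>0}^n$ lies in some $\dom f_j$ and there $f(\vec{x}) = \ell_j(\vec{x})$. Now let $J = \{\, j\in[p] : C_j \text{ has nonempty interior}\,\}$. The claim is that $\bigcup_{j\in J} C_j = \R_{>0}^n$ as well. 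Indeed, a finite union of closed sets each with empty interior again has empty interior: if an open set $V$ were contained in $C_{j_1}\cup C_{j_2}$, then $V\setminus C_{j_1}$ (open, since $C_{j_1}$ is closed, and nonempty, since $\mathrm{int}(C_{j_1})=\emptyset$ forces $V\not\subseteq C_{j_1}$) would be a nonempty open subset of $C_{j_2}$, contradicting $\mathrm{int}(C_{j_2})=\emptyset$, and one iterates. Hence $\bigcup_{j\notin J} C_j$ has empty interior, so its complement in $\R_{>0}^n$---which is contained in $\bigcup_{j\in J} C_j$---is dense; since $\bigcup_{j\in J} C_j$ is closed in $\R_{>0}^n$, it equals $\R_{>0}^n$. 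Taking $\{\ell_j|_{C_j} : j\in J\}$ as the new components then yields the lemma: finitely many partial rational linear functions whose domains are cones with nonempty interior---hence each containing an open ball of non-zero radius---which cover $\R_{>0}^n$ and realize $f$.

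The step I expect to be the main obstacle is this last coverage-preservation argument: restricting the given components already produces cones, but some of them can be lower-dimensional and fail the open-ball requirement, so the real work is checking that discarding all of them leaves no point of $\R_{>0}^n$ uncovered. The homogeneity observation of the first paragraph is the other point needing care, since it is what upgrades ``$C_j$ is a region on which $f$ is linear'' to ``$C_j$ is a cone,'' and it relies on combining continuity of $f$ with finiteness of the component list.
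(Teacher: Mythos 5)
Your proof is correct, and it takes a genuinely different route from the paper's. The paper works domain-by-domain: for each given domain $D_k$ and each $\vec{x} \in D_k$ it shows that $f$ agrees with $g_k$ along the whole open ray through $\vec{x}$ via a supremum/infimum argument combined with pigeonhole applied to sequences, then enlarges $D_k$ to a cone, takes closures, and finally discards interior-less domains using another sequence-plus-pigeonhole argument to check that coverage is preserved. You instead prove once and for all that $f$ is positively homogeneous on $\mathbb{R}_{>0}^n$ (a continuous map of the connected set $(0,\infty)$ into the finite set $\{\ell_1(\vec{x}),\dots,\ell_p(\vec{x})\}$ is constant), which makes the agreement sets $C_j = \{\vec{x} : f(\vec{x}) = \ell_j(\vec{x})\}$ closed cones for free, and you replace the paper's coverage check by the observation that a finite union of closed sets with empty interior has empty interior, so the full-dimensional $C_j$ already form a closed and dense, hence exhaustive, cover. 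Your homogeneity step is a cleaner packaging of the same continuity-plus-finiteness idea the paper deploys along each ray, and your density argument is arguably more robust than the paper's iterative removal of interior-less domains, which needs a little care when several such domains are discarded at once; the paper's version, on the other hand, keeps the originally given domains (enlarged) rather than replacing them by maximal agreement sets, though nothing downstream depends on that distinction.
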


Intuitively, we can consider any open ray from the origin and look at the domains for $f$ along this ray. If the ray traveled through different domains, then there must be boundary points where the function switches domains. But we know that $f$ is continuous, so the domains must agree on their boundaries. Since there is only one line that passes through the origin and any given point, the domains must share the same linear function to be continuous. Thus we can place the ray into one domain corresponding to its linear function.
Applying this argument to all rays gives these domains as cones. This argument is formalized in a proof in the appendix.

\begin{lemma}\label{lem:fn-is-a-min}
Any superadditive continuous piecewise rational linear function $f: \mathbb{R}_{>0}^n \to \mathbb{R}_{\ge 0}$ can be written as the minimum of a finite number of rational linear functions $g_i$.
\end{lemma}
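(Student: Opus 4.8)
The plan is to produce rational linear functions $g_1,\dots,g_p$ and show $f(x) = \min_j g_j(x)$ for every $x \in \mathbb{R}^n_{>0}$. First I would apply Lemma~\ref{lem:domains-are-cones} to choose domains $D_1,\dots,D_p$ for $f$ that are cones, cover $\mathbb{R}^n_{>0}$, and each contain an open ball $B_j$ of positive radius centered at a point $p_j$. On $D_j$ the function $f$ agrees with a rational linear function, which I extend to a total map $g_j : \mathbb{R}^n \to \mathbb{R}$. The first key observation is that $f$ is positively homogeneous of degree one on $\mathbb{R}^n_{>0}$: if $x \in D_j$ then $tx \in D_j$ for $t > 0$ since $D_j$ is a cone, so $f(tx) = g_j(tx) = t\,g_j(x) = t f(x)$.

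One inclusion is immediate: every $x \in \mathbb{R}^n_{>0}$ lies in some $D_j$, whence $f(x) = g_j(x) \ge \min_k g_k(x)$. The substance is the reverse bound $f(x) \le g_j(x)$ for every $j$ and every $x \in \mathbb{R}^n_{>0}$, and this is where superadditivity enters. Fix $j$ and $x$. Since $B_j$ is an open neighborhood of $p_j$, for all sufficiently small $\epsilon > 0$ the point $p_j - \epsilon x$ still lies in $B_j \subseteq D_j \subseteq \mathbb{R}^n_{>0}$. Applying superadditivity to $\epsilon x$ and $p_j - \epsilon x$, whose sum is $p_j$, gives $f(\epsilon x) + f(p_j - \epsilon x) \le f(p_j)$; rewriting $f(\epsilon x) = \epsilon f(x)$ by homogeneity and $f(p_j - \epsilon x) = g_j(p_j) - \epsilon g_j(x)$, $f(p_j) = g_j(p_j)$ by linearity of $g_j$ on $D_j$, this collapses to $\epsilon f(x) \le \epsilon g_j(x)$, i.e. $f(x) \le g_j(x)$. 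Combining the two directions yields $f = \min_j g_j$ on $\mathbb{R}^n_{>0}$.

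The only delicate point is the perturbation used in the upper-bound step: the admissible $\epsilon$ depends on $x$ and on the radius of $B_j$, so it cannot be fixed uniformly over $x$ — but since the resulting inequality $f(x) \le g_j(x)$ does not mention $\epsilon$, this causes no difficulty. The conceptual heart of the argument is the interplay of superadditivity with positive homogeneity (the latter inherited from the cone structure of the domains guaranteed by Lemma~\ref{lem:domains-are-cones}), which together force each linear component to dominate $f$ everywhere and hence make $f$ their pointwise minimum; this is precisely why a non-superadditive function such as $\max$, as in Corollary~\ref{cor:no-max}, cannot be written this way.
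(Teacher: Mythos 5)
Your proof is correct and follows essentially the same route as the paper's: both invoke Lemma~\ref{lem:domains-are-cones} and then apply superadditivity to two points of $D_j$ whose difference is a positive multiple of $x$, concluding $f(x) \le g_j(x)$ for every component $g_j$ and hence $f = \min_j g_j$. The only cosmetic difference is that the paper scales the ball inside the cone $D_j$ up until it contains a translate of $x$ itself, whereas you scale $x$ down by $\epsilon$ to fit inside a fixed ball and then cancel $\epsilon$ via the positive homogeneity of $f$ (a step the paper's variant does not need); both rest on exactly the same cone structure.
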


\begin{proof}
Since $f$ is a continuous piecewise rational linear function, by Lemma~\ref{lem:domains-are-cones}, we can choose domains $\set{D_i}_{i = 1}^N$ for $f$ which are cones and contain an open ball of non-zero radius, such that $f|_{D_i} = g_i|_{D_i}$, where $g_i$ is a rational linear function. Now pick any $\vec{x} \in \mathbb{R}_{>0}^n$ and any $g_j$. Then because $D_j$ is a cone containing an open ball of finite radius, it contains open balls with arbitrarily large radii. In particular, it contains a ball with radius greater than $|\vec{x}|$, so there exist points $\vec{y}, \vec{z} \in D_j$ such that $\vec{y} + \vec{x} = \vec{z}$. By the superadditivity of $f$, the linearity of $g_j$, and the fact that $\vec{y}, \vec{z} \in D_j$, we see: 
\[g_j(\vec{y}) + f(\vec{x}) = f(\vec{y}) + f(\vec{x}) \le f(\vec{z}) = g_j(\vec{x} + \vec{y}) = g_j(\vec{y}) + g_j(\vec{x})\]

\noindent so that $f(\vec{x}) \le g_j(\vec{x})$. Since this is true for all $g_j$, and since we know that $f(\vec{x}) = g_i(\vec{x})$ for some $i$, we see that $f(\vec{x}) = \min_i g_i(\vec{x})$, as desired.
\end{proof}

Lemma~\ref{lem:fn-is-a-min} is particularly useful for us since, as seen in the introduction, CRCs computing $\min$ are easy to construct, and rational linear functions are relatively straightforward as well.
The next lemma gives details on constructing a CRC to compute $f$ by piecing together CRCs which compute the components (rational linear functions) of $f$ and then computing the $\min$ across their outputs.
However, since Lemma~\ref{lem:fn-is-a-min} as given applies to continuous functions with domain $\R^n_{>0}$, so does this lemma; we handle the domain $\R^n_{\geq0}$ later on.

\begin{lemma}\label{lem:linear-fn-construction}
We can construct a composable CRC that stably computes any superadditive continuous piecewise rational linear function $f : \mathbb{R}^n_{>0} \rightarrow \mathbb{R}_{\geq 0}$.
\end{lemma}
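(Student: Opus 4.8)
The plan is to invoke Lemma~\ref{lem:fn-is-a-min} to write $f=\min_{i\in[N]}g_i$ for finitely many rational linear functions $g_i(\vx)=\sum_{j=1}^n a_{ij}x_j$, and then to build an \emph{output-oblivious} CRC computing this expression out of three kinds of simple gadgets: ``copy'' reactions that fan each input out to one private copy per $g_i$, ``scaling'' reactions that realize each $g_i$, and a single ``min'' reaction. Since the assembled CRC will never use its output species as a reactant, Lemma~\ref{lem:not-using-out-imp-comp} gives composability for free, so all the real work is verifying stable computation of $f$.

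First I would record that every coefficient $a_{ij}$ is nonnegative: superadditivity together with $f\ge 0$ makes $f$ monotone, and by Lemma~\ref{lem:domains-are-cones} each domain $D_i$ underlying Lemma~\ref{lem:fn-is-a-min} contains an open ball on which $f=g_i$, so a negative $a_{ij}$ would let us strictly decrease $g_i$ by increasing the $j$-th coordinate within that ball, contradicting monotonicity. Writing $a_{ij}=p_{ij}/q_{ij}$ with $p_{ij},q_{ij}$ positive integers, the CRC I would use consists of: fan-out reactions $X_j \to \sum_{i:\,a_{ij}>0} X_j^{(i)}$ for each $j$; scaling reactions $q_{ij}\,X_j^{(i)} \to p_{ij}\,G_i$ for each $(i,j)$ with $a_{ij}>0$; and the reaction $G_1+\cdots+G_N \to Y$, with output species $Y$. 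Since $Y$ appears only as a product, the CRC is output-oblivious, hence composable. (Its molecularity is bounded by $\max_{i,j}q_{ij}$; reducing to unimolecular and bimolecular reactions is a separate, routine matter.) Note the CRC computes $\min_i g_i$, which by Lemma~\ref{lem:fn-is-a-min} equals $f$ on $\mathbb{R}^n_{>0}$, as required.

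For correctness I would argue with a conserved potential. For each $i$ set $\Phi_i = G_i + \sum_j \tfrac{p_{ij}}{q_{ij}}(X_j + X_j^{(i)})$; a direct check shows fan-out and scaling reactions leave every $\Phi_i$ unchanged, while each firing of the min reaction decreases every $\Phi_i$ by exactly $1$ and increases $Y$ by exactly $1$. Hence in every reachable state $Y+\Phi_i=g_i(\vx)$ and $\Phi_i\ge 0$, which already yields the upper bound $Y\le\min_i g_i(\vx)=f(\vx)$. Conversely, from any reachable state one can run all fan-out and then all scaling reactions to completion (never blocked, since $X_j^{(i)}$ is consumed only by its one scaling reaction), reaching a state with all $X_j,X_j^{(i)}=0$ and therefore $G_i=g_i(\vx)-Y\ge 0$; running the min reaction until some $G_i$ hits $0$ then adds exactly $\min_i G_i = f(\vx)-Y$ units of $Y$, for a total of $f(\vx)$. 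That state is output stable: the min reaction is no longer applicable and nothing else touches $Y$. Thus the CRC stably computes $f$, and composability follows from Lemma~\ref{lem:not-using-out-imp-comp}.

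The step I expect to be the main obstacle, beyond bookkeeping, is handling adversarial interleaving of the three gadget types, e.g.\ the min reaction consuming partially-formed $G_i$ before the scaling reactions have finished. The potential $\Phi_i$ is exactly what neutralizes this: it is insensitive to the order in which fan-out and scaling fire and it decreases in lockstep with $Y$ under the min reaction, so the ``complete everything, then take the min'' schedule above always reaches the correct stable output regardless of the current reachable state. The other point requiring care is the nonnegativity of the $a_{ij}$, without which the scaling gadget would not even be well defined. (One could alternatively build the network by repeatedly composing the gadgets as CRCs and appeal to the multi-downstream composition remark following Lemma~\ref{lem:comp-implies-not-using-out}, but the direct potential argument seems cleanest.)
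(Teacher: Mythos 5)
Your construction is the same as the paper's: invoke Lemma~\ref{lem:fn-is-a-min} to write $f=\min_i g_i$, fan each input out into private copies, realize each rational linear $g_i$ with reactions of the form $q\,X_j^{(i)}\to p\,G_i$, and take the min with $G_1+\cdots+G_N\to Y$; composability then comes from output-obliviousness via Lemma~\ref{lem:not-using-out-imp-comp}. The difference is in the verification. The paper argues correctness modularly: each gadget (copy, linear function, min) is an output-oblivious CRC stably computing its piece, so by the composability machinery the concatenation stably computes the composite. You instead verify the assembled network directly with the conserved quantities $\Phi_i$, showing $Y+\Phi_i=g_i(\vec{x})$ in every reachable state and exhibiting an explicit schedule to an output-stable state; this is more self-contained and handles adversarial interleaving without re-invoking Lemma~\ref{lem:not-using-out-imp-comp}, at the cost of some bookkeeping the modular argument gets for free. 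You also make explicit a point the paper leaves implicit: the reaction scheme $k_iX_i\to a_ik_iY$ only makes sense when the coefficients are nonnegative, which you correctly derive from superadditivity (hence monotonicity) together with the fact that each domain contains an open ball. Both proofs are sound; yours is the more careful write-up of the same idea.
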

\begin{proof}
By Lemma~\ref{lem:fn-is-a-min}, we know that $f$ can be written as the minimum of a finite number of rational linear functions $g_i$. Observe that a general rational linear function $g(\vec{x}) = a_1 x_1 + a_2 x_2 + \ldots a_n x_n$ is stably computed by the reactions $$\forall i,\; k_iX_i \rightarrow a_ik_i Y$$ where $k_i$ is a positive integer such that $k_ia_i$ is also a positive integer. Since $f$ is the minimum of a number of $g_i$'s, we can make a chemical reaction network where we compute each $g_i$ using a copy of the input species, calling the output $Y_i$ (the reaction $X_1 \rightarrow X_1^1 + \ldots + X_1^5$ produces five species with concentrations equal to $X_1$'s initial concentration, effectively copying the input species so that the input may be a reactant in several modules without those modules competing). Next, we use the chemical reaction $$Y_1 + \ldots + Y_n \rightarrow Y$$ to get the minimum of the $Y_i$'s. Since each $Y_i$ obtains the count of the corresponding $g_i$, this CRN will produce the minimum of the $g_i$'s quantity of Y's. Thus, according to Lemma~\ref{lem:fn-is-a-min}, the described CRC stably computes $f$.
Note that each sub-CRC described in this construction is output-oblivious, and thus composable, so the composition of these modules maintains correctness.
\end{proof}

The above construction only handles the domain $\R^n_{>0}$, where we know our functions are continuous by positive-continuity. However, when extended to the domain $\R^n_{\geq0}$, positive-continuity of our functions allows discontinuity where inputs change from zero to positive.
The challenge, then, is to compute the superadditive \emph{continuous} piecewise rational linear function corresponding to which inputs are nonzero.

Surprisingly, Lemma~\ref{lem:multiple-domains} below shows that we can express a superadditive positive-continuous piecewise rational linear function as a $\min$ of superadditive \emph{continuous piecewise rational linear functions}.
The first step towards this expression is to see that, given two subspaces of inputs wherein the species present in one subspace $A$ are a superset of the species present in a subspace $B$, the function as defined on the subspace $A$ must be greater than the function as defined on the subspace $B$; otherwise, the function would disobey monotonicity and thus superadditivity, as proven below:

\begin{lemma}\label{lem:domain-inequality}
Consider any superadditive positive-continuous piecewise rational linear function $f: \mathbb{R}_{\geq0}^n \to \mathbb{R}_{\ge 0}$. Write $N = [n]$, and for each $S \subseteq N$, let $g_S(\vec{x})$ be the superadditive continuous piecewise rational linear function that is equal to $f$ on $D_S$. If $S, T \subseteq N$ and $S \subseteq T$, then for all $\vec{x} \in D_S$ we know $g_S(\vec{x}) \le g_T(\vec{x})$.
\end{lemma}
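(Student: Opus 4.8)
The plan is to fix a point $\vec{x} \in D_S$ and derive $g_S(\vec{x}) \le g_T(\vec{x})$ by perturbing $\vec{x}$ slightly into the larger face $D_T$ and applying superadditivity of $f$ there. First I would pick any witness point $\vec{y} \in D_T$ (for instance the $0/1$ indicator vector of $T$, which has a positive coordinate exactly on $T$), and observe that for every $\epsilon > 0$ the point $\epsilon\vec{y}$ lies in $D_T$ (scaling by a positive number preserves the sign pattern), and the point $\vec{x} + \epsilon\vec{y}$ has a positive coordinate exactly on $S \cup T$, which equals $T$ precisely because $S \subseteq T$ — this is the only place the hypothesis is used — so $\vec{x} + \epsilon\vec{y} \in D_T$ as well.

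Next I would apply superadditivity of $f$ to the pair $\vec{x}, \epsilon\vec{y}$, giving $f(\vec{x}) + f(\epsilon\vec{y}) \le f(\vec{x} + \epsilon\vec{y})$. Since $\vec{x} \in D_S$ we may substitute $f(\vec{x}) = g_S(\vec{x})$, and since both $\epsilon\vec{y}$ and $\vec{x} + \epsilon\vec{y}$ lie in $D_T$ we may substitute $f(\epsilon\vec{y}) = g_T(\epsilon\vec{y})$ and $f(\vec{x} + \epsilon\vec{y}) = g_T(\vec{x} + \epsilon\vec{y})$. This yields $g_S(\vec{x}) + g_T(\epsilon\vec{y}) \le g_T(\vec{x} + \epsilon\vec{y})$, and because $g_T(\epsilon\vec{y}) = f(\epsilon\vec{y}) \ge 0$ (as $f$ is nonnegative) we may drop that term to get $g_S(\vec{x}) \le g_T(\vec{x} + \epsilon\vec{y})$ for every $\epsilon > 0$.

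Finally I would let $\epsilon \to 0^+$. Since $g_T$ is a \emph{continuous} piecewise rational linear function, $g_T(\vec{x} + \epsilon\vec{y}) \to g_T(\vec{x})$, so $g_S(\vec{x}) \le g_T(\vec{x})$; as $\vec{x} \in D_S$ was arbitrary, the lemma follows.

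I do not expect a serious obstacle here; the only point requiring care is the coordinate bookkeeping — checking that adding $\epsilon\vec{y}$ turns on exactly the coordinates of $T$ and none outside, so that the perturbed point genuinely lands in $D_T$ and the replacement $f = g_T$ is legitimate — together with the observation that $g_T$ is defined and continuous at $\vec{x}$ itself (a boundary point of $D_T$ when $S \subsetneq T$), which is exactly what ``continuous piecewise rational linear'' guarantees. Everything else is a direct unwinding of the definitions of $D_S$, positive-continuity, and superadditivity.
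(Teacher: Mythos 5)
Your proof is correct and is essentially the paper's argument: both perturb $\vec{x}$ by $\epsilon$ times an indicator vector so the perturbed point lands in $D_T$, both exploit superadditivity plus nonnegativity (the paper packages this as ``monotonicity,'' you rederive it inline by dropping $f(\epsilon\vec{y})\ge 0$), and both close with continuity of $g_T$ at the boundary point $\vec{x}$. The only cosmetic differences are that you argue directly via a limit $\epsilon\to 0^+$ while the paper argues by contradiction, and you use the indicator of $T$ rather than of $T\setminus S$; neither affects the substance.
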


\begin{proof}
Write $\vec{e}_i$ for the vector of length 1 pointing in the positive direction of the $i$th coordinate axis. Define the vector $\vec{v} = \sum_{i \in T \setminus S} \vec{e}_i$. Then for any $\vec{x} \in D_S$ and any $\epsilon \in \mathbb{R}_{>0}$, we know that $\vec{x} + \epsilon\vec{v} \in D_T$. Since $f$ is superadditive, it is also monotonic. Suppose that $g_T(\vec{x}) < g_S(\vec{x})$. Because $g_T$ is continuous, taking $\delta = g_S(\vec{x}) - g_T(\vec{x}) > 0$, there is some small enough $\epsilon > 0$ such that \[f(\vec{x} + \epsilon\vec{v}) = g_T(\vec{x} + \epsilon\vec{v}) < g_T(\vec{x}) + \delta = g_S(\vec{x}) = f(\vec{x})\]
contradicting the monotonicity of $f$. Our assumption must be false, so $g_S(\vec{x}) \le g_T(\vec{x})$.
\end{proof}

%
%
%

Next we define a predicate for each subset of inputs which is true if all inputs in that subset are nonzero. 
Intuitively, in the CRC construction to follow, this predicate is used by the CRC to determine which inputs are present:

\begin{definition}\label{def:s-prediate}
For any set $S \subseteq [n]$, define the \emph{$S$-predicate} $P_S: \mathbb{R}^n_{\ge 0} \to \set{0,1}$ to be the function given by:

\[P_S(\vec{x}) = \begin{cases}
1 & \vec{x}(i) > 0\ \forall i \in S \\
0 & \textnormal{otherwise.}
\end{cases}\]
\end{definition}

A na\"ive approach might be the following: for each subdomain $D_S$, the function is continuous, so compute it by CRC according to Lemma~\ref{lem:linear-fn-construction}, producing an output $Y_S$.
Then compute the $P_S$ predicate by CRC, and if the predicate is true (e.g., a species representing $P_S$ has nonzero concentration), use that species to catalyze a reaction which changes the $Y_S$ to $Y$, the final output of the system.
However, note that if $T$ is a subset of $S$, $P_S$ and and $P_T$ are both true, so this technique will overproduce $Y$.

The following technique solves this issue by identifying a min which can be taken over the intermediate outputs $Y_S$.
In particular, for each $S$, we compute $g_S(\vec{x}) + \sum_{K \not \subseteq S} P_K(\vec{x}) g_K(\vec{x})$, and then take the min of these terms.
When $S$ corresponds to the set of input species with initially nonzero concentrations, then the summation term in this expression is $0$, since $P_K(x) = 0$ for all $K\not\subseteq S$.
When $S$ does not correspond to the set of input species with initially nonzero concentration, then either \textbf{(1)} it is a superset of the correct set $I$, in which case Lemma~\ref{lem:domain-inequality} says that $g_S(\vec{x}) \geq g_I(\vec{x})$ (thus the min of these is $g_I(\vec{x})$) or \textbf{(2)} the summation term added to $g_S(\vec{x})$ contains at least $g_I(\vec{x})$, and since $g_S(x) + g_I(\vec{x}) \geq g_I(\vec{x})$, the min of these is $g_I(\vec{x})$.
Thus taking the min for all $S$ of $g_S(\vec{x}) + \sum_{K \not \subseteq S} P_K(\vec{x}) g_K(\vec{x})$ is exactly $g_I(\vec{x})$, where $I$ is the correct set of initially present input species.

\begin{lemma}\label{lem:multiple-domains}
Consider any superadditive positive-continuous piecewise rational linear function $f: \mathbb{R}_{\geq0}^n \to \mathbb{R}_{\ge 0}$. Write $N = [n]$, and for each $S \subseteq N$, let $g_S(\vec{x})$ be the superadditive continuous piecewise rational linear function that is equal to $f$ on $D_S$. Then, $f(\vec{x}) = \min\limits_{S \subseteq N}[g_S(\vec{x}) + \sum\limits_{K \not \subseteq S} P_K(\vec{x}) g_K(\vec{x})]$.
\end{lemma}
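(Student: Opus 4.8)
The plan is to fix an arbitrary input $\vec{x} \in \mathbb{R}^n_{\geq 0}$, set $I = I(\vec{x}) = \{\, i \in [n] : \vec{x}(i) > 0 \,\}$ (the set of coordinates on which $\vec{x}$ is strictly positive), and prove the stated identity by showing the right-hand side is simultaneously $\leq f(\vec{x})$ and $\geq f(\vec{x})$. By construction $\vec{x} \in D_I$, so $f(\vec{x}) = g_I(\vec{x})$, which is the value the minimum ought to attain; the $S = I$ term is the ``intended witness'' for the $\le$ direction, and the case analysis below shows every other $S$ only makes the bracketed quantity larger. Two elementary facts will be used repeatedly: $P_K(\vec{x}) \in \{0,1\}$ for every $K$ (Definition~\ref{def:s-prediate}), and each $g_K$ is nonnegative on $\mathbb{R}^n_{\geq 0}$ (built into how these functions were produced in Lemmas~\ref{lem:domains-are-cones} and~\ref{lem:fn-is-a-min}), so that the summation $\sum_{K \not\subseteq S} P_K(\vec{x}) g_K(\vec{x})$ is always a sum of nonnegative reals.

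For the upper bound I would evaluate the $S = I$ term. If $K \not\subseteq I$, then $K$ contains some index $i \notin I$, hence $\vec{x}(i) = 0$ and $P_K(\vec{x}) = 0$ by Definition~\ref{def:s-prediate}; therefore the entire sum $\sum_{K \not\subseteq I} P_K(\vec{x}) g_K(\vec{x})$ vanishes and the $S = I$ term equals $g_I(\vec{x}) = f(\vec{x})$. Consequently $\min_{S \subseteq N}[\,\cdots\,] \leq f(\vec{x})$.

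For the lower bound I would show that for \emph{every} $S \subseteq N$ the bracketed term is at least $f(\vec{x})$, splitting on whether $I \subseteq S$. If $I \subseteq S$, then since $\vec{x} \in D_I$, Lemma~\ref{lem:domain-inequality} (applied with $I$ in the role of the smaller set and $S$ in the role of the larger set) gives $g_I(\vec{x}) \leq g_S(\vec{x})$; adding the nonnegative summation only increases the value, so the term is $\geq g_S(\vec{x}) \geq g_I(\vec{x}) = f(\vec{x})$. If $I \not\subseteq S$, then $K = I$ is one of the indices appearing in the sum, and $P_I(\vec{x}) = 1$ because every coordinate in $I$ is positive by definition of $I$; hence the summation alone is at least $g_I(\vec{x}) = f(\vec{x})$, and adding $g_S(\vec{x}) \geq 0$ keeps the term $\geq f(\vec{x})$. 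These two cases are exhaustive, so $\min_{S \subseteq N}[\,\cdots\,] \geq f(\vec{x})$, and combining with the upper bound yields the identity.

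I do not expect a serious obstacle: once the right quantity $g_S(\vec{x}) + \sum_{K \not\subseteq S} P_K(\vec{x}) g_K(\vec{x})$ is written down, the proof is essentially bookkeeping. The only points requiring care are (i) invoking Lemma~\ref{lem:domain-inequality} in the correct direction in the case $I \subseteq S$, and (ii) ensuring the case split $I \subseteq S$ versus $I \not\subseteq S$ really covers all $S$ and that nonnegativity of the $g_K$ and of $g_S$ on $\mathbb{R}^n_{\geq 0}$ is legitimately available from the earlier development.
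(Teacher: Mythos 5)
Your proposal is correct and is essentially identical to the paper's proof: both fix $\vec{x}$, set $I$ to the support of $\vec{x}$, show the $S=I$ term equals $f(\vec{x})$ because $P_K(\vec{x})=0$ for $K\not\subseteq I$, and lower-bound every other term by splitting on $I\subseteq S$ (via Lemma~\ref{lem:domain-inequality}) versus $I\not\subseteq S$ (via the $K=I$ summand with $P_I(\vec{x})=1$). No further comment is needed.
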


\begin{proof}
For $S \subseteq N$, let $h_S: \R_{\geq 0}^n \to \R_{\geq 0}$ be given by 

\[h_S(\vec{x}) = g_S(\vec{x}) + \sum\limits_{K \not \subseteq S} P_K(\vec{x})g_K(\vec{x}).\]

We want to show that $f(\vec{x}) = \min_{S \subseteq N}h_S(\vec{x})$. To do this, fix $\vec{x} \in \mathbb{R}_{\ge 0}^n$ and define the set
$I = \set{i \in N \ |\ \vec{x}(i) > 0}$. First, let's show that $h_I(\vec{x}) = f(\vec{x})$. By the definition of $I$, for all $K \not \subseteq I$, we know $P_K(\vec{x}) = 0$. Thus, $\sum_{K \not \subseteq I} P_K(\vec{x}) g_K(\vec{x}) = 0$, so $h_I(\vec{x}) = g_I(\vec{x}) = f(\vec{x})$. Now we must show that $h_S(\vec{x}) \ge f(\vec{x})$ for all $S \subseteq N$. There are two cases to consider:
\\\\
Case 1: $S \not\supseteq I$
\\
In this case,
\begin{align}
h_S(\vec{x}) &= g_S(\vec{x}) + \sum_{K \not \subseteq S} P_K(\vec{x}) g_K(\vec{x})\\ &\geq g_S(\vec{x}) + P_I(\vec{x}) g_I(\vec{x})\\ &\geq P_I(\vec{x}) g_I(\vec{x}).
\end{align}
By the definition of $I$, we know $P_I(\vec{x}) = 1$, so $P_I(\vec{x})g_I(\vec{x}) = g_I(\vec{x}) = f(\vec{x})$. Thus we get that $h_S(\vec{x}) \geq f(\vec{x})$.
\\\\
Case 2: $S \supseteq I$
\\
By Lemma~\ref{lem:domain-inequality}, $g_S(\vec{x}) \geq g_I(\vec{x})$. As a result,
\begin{align}h_S(\vec{x}) &= g_S(\vec{x}) + \sum\limits_{K \not \subseteq S} P_K(\vec{x}) g_K(\vec{x})\\ &\ge g_S(\vec{x}) \ge g_I(\vec{x})\\ &= f(\vec{x}).
\end{align}

Since for all $\vec{x} \in \mathbb{R}_{\ge 0}^n$, we know $h_S(\vec{x}) \ge f(\vec{x})$ for all $S \subseteq N$ and $h_I(\vec{x}) = f(\vec{x})$ for some $I \subseteq N$, it follows that $f(\vec{x}) = \min_{S \subseteq N} h_S(\vec{x})$.
\end{proof}

Lemma~\ref{lem:full-construction} takes the above Lemma~\ref{lem:multiple-domains} along with the construction which stably computes on strictly continuous domains from Lemma~\ref{lem:linear-fn-construction} to construct a CRC which stably computes on positive-continuous domains.

\begin{lemma}\label{lem:full-construction}
Given any superadditive positive-continuous piecewise rational linear function $f: \mathbb{R}_{\geq0}^n \to \mathbb{R}_{\geq0}$, there exists a composable CRC which stably computes $f$.
\end{lemma}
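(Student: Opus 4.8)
The plan is to realize $f$ as a composition of output-oblivious modules, using Lemma~\ref{lem:multiple-domains} to reduce to building blocks we already control. By Lemma~\ref{lem:multiple-domains}, $f(\vec{x}) = \min_{S \subseteq [n]} h_S(\vec{x})$ where $h_S(\vec{x}) = g_S(\vec{x}) + \sum_{K \not\subseteq S} P_K(\vec{x}) g_K(\vec{x})$ and each $g_S$ (and each $g_K$) is superadditive continuous piecewise rational linear. So it suffices to construct output-oblivious CRCs for (i) each $g_S$; (ii) the \emph{gated} terms $v_K := P_K g_K$; (iii) addition, to form each $h_S$ from $g_S$ and the relevant $v_K$; and (iv) $\min$, to form $f$ from the $h_S$; and then to wire these together, inserting \emph{copy} reactions so that no two modules compete for a species, so that every reaction of the composite is output-oblivious with respect to the final output $Y$. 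Since the composite is then output-oblivious and (as we will argue) stably computes $\min_S h_S = f$, Lemma~\ref{lem:not-using-out-imp-comp} gives composability.

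For (i), Lemma~\ref{lem:linear-fn-construction} already builds an output-oblivious composable CRC computing $g_S$ on $\mathbb{R}^n_{>0}$. The same construction works on $\mathbb{R}^n_{\geq 0}$: writing $g_S = \min_i g_{S,i}$ as in Lemma~\ref{lem:fn-is-a-min}, each component $g_{S,i}$, one checks, has nonnegative rational coefficients (so the reactions $k_j X_j \to a_j^{(i)} k_j Y_i$ are well-formed), and when $x_j = 0$ the species $X_j$ is simply absent so the $x_j$ term contributes $0$ in both $g_{S,i}$ and the network; since $g_S$ and $\min_i g_{S,i}$ are continuous and agree on the dense set $\mathbb{R}^n_{>0}$, they agree on $\mathbb{R}^n_{\geq 0}$, so the CRC computes $g_S$ there. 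Step (iii) is routine ($A \to C,\ B \to C$ computes a sum; a binary cascade of reactions $A + B \to C$ computes $\min$), and all of these modules are output-oblivious.

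The technical heart, and the step I expect to be the main obstacle, is (ii): gating a value by the simultaneous presence of a set of inputs without ever consuming the output. I would first compute $g_K$ into an internal species $Z_K$ via (i), and then add the single catalytic reaction $Z_K + \sum_{i \in K} X_i' \to V_K + \sum_{i \in K} X_i'$, with $V_K$ the module's output and the $X_i'$ private copies of the inputs (the molecularity can be lowered to two by a standard cascade $Z_K^{(\ell-1)} + X_{i_\ell}' \to Z_K^{(\ell)} + X_{i_\ell}'$). If some $x_i = 0$ for $i \in K$, this reaction is never applicable and $V_K$ stays $0 = P_K(\vec{x}) g_K(\vec{x})$. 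If all $x_i > 0$ for $i \in K$, then $V_K$ is produced only by this reaction and only by consuming $Z_K$, whose total production is at most $g_K(\vec{x})$, while from any reachable state we may complete the pending copy, $g_K$-module, and gating reactions to drive $V_K$ to exactly $g_K(\vec{x}) = P_K(\vec{x}) g_K(\vec{x})$ and make it output-stable; hence the module stably computes $v_K$. Since $V_K$ never appears as a reactant, this module, and its composition downstream of the output-oblivious $g_K$-module, is output-oblivious.

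Finally, for (iv) I would assemble the whole network: a copy machine $X_i \to X_i^{(1)} + X_i^{(2)} + \cdots$ feeding each $g_S$-module, each $g_K$-module, and each gating reaction; a copy machine $V_K \to V_K^{(S_1)} + V_K^{(S_2)} + \cdots$ feeding the addition module for each $h_S$ with $K \not\subseteq S$; the addition modules producing species $H_S$; and a final $\min$ cascade $H_{S_1} + H_{S_2} + \cdots \to Y$. The remaining work is bookkeeping: checking that the private copies make the modules non-competing, that no reaction has $Y$ as a reactant, and that iterating Lemma~\ref{lem:not-using-out-imp-comp} through this module structure yields a CRC that stably computes $\min_S h_S$, which by Lemma~\ref{lem:multiple-domains} is $f$. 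No conceptual difficulty remains once the gating gadget of step (ii) is in hand.
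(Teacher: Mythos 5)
Your proposal is correct and follows essentially the same route as the paper: decompose $f$ via Lemma~\ref{lem:multiple-domains} into a $\min$ over sums of $g_S$ and gated terms $P_K g_K$, realize each piece by an output-oblivious module (linear pieces via Lemma~\ref{lem:linear-fn-construction}, catalytic gating, sum, $\min$, input-copying), and invoke Lemma~\ref{lem:not-using-out-imp-comp}. The only difference is cosmetic---you gate $Z_K$ directly with catalytic copies of the inputs rather than first condensing them into a predicate species $P_K$ as the paper does---and your added care about extending the $g_S$-modules to $\mathbb{R}^n_{\ge 0}$ and the nonnegativity of the component coefficients is a welcome tightening of points the paper leaves implicit.
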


\begin{proof}
The proof follows by identifying that the function can be expressed as a composition of functions (via Lemma~\ref{lem:multiple-domains}) which are computable by output-oblivious CRCs and are thus composable by Lemma~\ref{lem:not-using-out-imp-comp}.
By Lemma~\ref{lem:multiple-domains}, we know that $f(\vec{x}) = \min\limits_{S \subseteq N}[g_S(\vec{x}) + \sum\limits_{K \not \subseteq S} P_K(\vec{x}) g_K(\vec{x})]$. 
The first subroutine copies the input species, e.g. $X_1 \rightarrow X_1^1 +\ldots+ X_1^5$, in order for each sub-CRC to not compete for input species.
This copying is output-oblivious.
Then for any $Q \subseteq [n]$, $P_Q(x)$ is computed using one set of copies via the reaction: $$\sum\limits_{i \in Q} X_i \rightarrow P_Q$$
noting that although the predicate $P_Q(x)$ is defined to be $0$ or $1$, it is sufficient in this construction for the concentration of the species representing $P_Q(x)$ to be zero or nonzero. This CRC is output-oblivious.

 We can also compute each $g_Q(\vec{x})$ (via Lemma~\ref{lem:linear-fn-construction}) using copies of the input molecules. 
 This construction is output-oblivious. To compute $P_Q(x)g_Q(x)$ given the concentration species $P_Q$ as nonzero iff $P_Q(x) = 1$ as shown above, we simply compute the following (assuming $Y_{Q}$ is the output of the module computing $g_Q(x)$):
 $$
 f(P_Q, Y_{Q}) = \begin{cases} 
      Y_{Q} & P_Q \neq 0 \\
      0 & P_Q = 0, \\
   \end{cases}
 $$
 which is computed by this output-oblivious CRC: $$Y_{Q} + P_Q \rightarrow Y + P_Q.$$
The CRC computing min is output-oblivious, as seen in the introduction.
The CRC computing the sum of its inputs is output-oblivious (e.g., $X_1 \rightarrow Y, X_2 \rightarrow Y$ computes $X_1 + X_2$).
Since each CRC shown is output-oblivious and thus composable, we can compose the modules described to construct a CRC stably computing $\min\limits_{S \subseteq N}[g_S(\vec{x}) + \sum\limits_{K \not \subseteq S} P_K(\vec{x}) g_K(\vec{x})]$, which is equal to $f(\vec{x})$ by Lemma~\ref{lem:multiple-domains}.
\end{proof}

\begin{corollary}
Given any superadditive positive-continuous piecewise rational linear function $f: \mathbb{R}_{\geq0}^n \to \mathbb{R}_{\geq0}$, there exists a composable CRC with reactions with at most two reactants and at most two products which stably computes $f$.
\end{corollary}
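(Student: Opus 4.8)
The plan is to take the construction behind Lemma~\ref{lem:full-construction} (which in turn invokes Lemma~\ref{lem:linear-fn-construction}) and replace every reaction with more than two reactants or more than two products by a small gadget of reactions each having at most two reactants and at most two products, in a way that preserves output-obliviousness---hence, by Lemma~\ref{lem:not-using-out-imp-comp}, composability---and does not change the computed function. Inspecting that construction, the reaction schemas used are: fan-out $X \to X^1 + \cdots + X^m$; the predicate reaction $\sum_{i \in Q} X_i \to P_Q$; the rational-linear monomials $pX \to qY$ with $p, q \in \Z_{>0}$; the $\min$ reaction $Y_1 + \cdots + Y_n \to Y$; the gate $Y_Q + P_Q \to Y + P_Q$; and the sum reactions $X_i \to Y$. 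The last two already meet the bound, so only the first four need work.

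First I would dispatch fan-out and $\min$, which are routine. Replace $X \to X^1 + \cdots + X^m$ by the cascade $X \to X^1 + T_1$, $T_1 \to X^2 + T_2$, \ldots, $T_{m-2} \to X^{m-1} + X^m$ with fresh intermediates $T_j$; running these in order to completion from any reachable configuration produces the fan-out with all intermediates cleared, and none of the new species is an output, so output-obliviousness is retained. Replace $Y_1 + \cdots + Y_n \to Y$ by the cascade $Y_1 + Y_2 \to M_2$, $M_2 + Y_3 \to M_3$, \ldots, $M_{n-1} + Y_n \to Y$ of two-input $\min$ modules; by associativity of $\min$ and composability of output-oblivious modules this stably computes $\min(y_1, \ldots, y_n)$. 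The predicate reaction is then subsumed: for the downstream gate $Y_Q + P_Q \to Y + P_Q$ all that is needed is a species whose concentration is positive exactly when every $X_i$ with $i \in Q$ is positive, and $\min_{i \in Q} x_i$ (computed by the two-input $\min$ cascade on a fresh set of input copies) is such a quantity.

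The crux is the rational-linear monomial $pX \to qY$, which must be simulated for arbitrary positive integers $p, q$. A na\"ive bimolecularization such as $X + X \to D$, $D + X \to Y$ fails: under an adversarial rate law all of $X$ can be driven into half-formed bundles (for instance, a configuration containing only $D$), a deadlock from which the correct output is not reachable, so rate-independence is lost. The fix is to split the reaction into a reversible ``bundling'' stage and an irreversible ``unbundling'' stage. Bundling uses the reversible bimolecular reactions $X + X \rightleftharpoons B_2$ and $B_j + X \rightleftharpoons B_{j+1}$ for $2 \le j \le p-1$, together with the irreversible $B_p \to W$, compressing $p$ copies of $X$ into one token $W$; unbundling uses the irreversible cascade $W \to Y + W_2$, $W_j \to Y + W_{j+1}$ for $2 \le j \le q-1$, $W_q \to Y$, expanding one $W$ into $q$ copies of $Y$. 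Every reaction has at most two reactants and at most two products, and the unbundling stage is deliberately kept irreversible so that $Y$ never occurs as a reactant, keeping the gadget output-oblivious. For rate-independent correctness, a routine conservation-law argument bounds $[Y]$ above by $(q/p)x$, and one verifies that this value is reachable from every reachable configuration: first run the unbundling cascade to completion, then use the reverse bundling reactions $B_{j+1} \to B_j + X$ to disassemble every partial bundle back to free $X$, then rebundle in exact stoichiometric proportion---firing each bundling reaction equally often---and run $B_p \to W$ and the unbundling cascade to completion, landing in the unique output-stable configuration with $[Y] = (q/p)x$.

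Performing these substitutions throughout the construction of Lemma~\ref{lem:full-construction} yields a CRC all of whose reactions have at most two reactants and at most two products; every substituted gadget is output-oblivious, so the composed CRC is output-oblivious and hence composable by Lemma~\ref{lem:not-using-out-imp-comp}, and it stably computes $f$. I expect the main obstacle to be exactly the $pX \to qY$ gadget: producing a bimolecular implementation of division by an arbitrary integer that cannot deadlock under any rate law, which is why reversibility of the bundling stage is essential and why one must check reachability of the target state from every reachable configuration rather than only from the input.
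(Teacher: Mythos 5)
Your proposal is correct and follows the same overall route as the paper: replace every reaction with more than two reactants or products by a cascade of reactions obeying the bound, check that each affected module still stably computes its function, and note that output-obliviousness---hence composability via Lemma~\ref{lem:not-using-out-imp-comp}---is preserved. Where you genuinely add something is in the one delicate case. The paper illustrates its decomposition only for a reaction $X_1 + \cdots + X_n \to Y_1 + \cdots + Y_n$ with \emph{distinct} reactant species, where the irreversible chain works because each link is itself a two-input $\min$ module, and then asserts the remaining modules ``can be verified'' similarly. You correctly observe that for the monomial reactions $pX \to qY$ of Lemma~\ref{lem:linear-fn-construction} (a repeated reactant with $p \ge 3$) the analogous irreversible chain is \emph{not} rate-independent: an adversarial schedule can sequester all of $X$ into partially formed bundles, reaching an output-stable deadlock with too little $Y$. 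Your reversible-bundling gadget ($X + X \rightleftharpoons B_2$, $B_j + X \rightleftharpoons B_{j+1}$, $B_p \to W$, followed by an irreversible unimolecular unbundling cascade producing $q$ copies of $Y$) repairs this while keeping $Y$ out of every reactant list, and the conservation-law bound together with the disassemble-and-rebundle reachability argument does establish stable computation of $(q/p)x$. One small nit: from a state containing only free $X$ you cannot fire all bundling reactions ``equally often'' in a single straight-line segment, since $B_j + X \to B_{j+1}$ is inapplicable while $B_j = 0$; you need a short bootstrap of successive segments (convert $X$ to $B_2$, then $B_2$ and $X$ to $B_3$, and so on), which the multi-segment definition of reachability readily permits and which does not affect correctness. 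In short, your argument supplies exactly the verification the paper leaves implicit, and the reversible-bundling fix is the substantive content needed to make the corollary go through for the division-by-$p$ reactions.
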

To deduce this corollary, note that the reactions with more than two reactants and/or products are used to compute the following functions: computation of a rational linear function, copying inputs, min, and predicate computation.
We can decompose such reactions into a set of bimolecular reactions. 
For example, a reaction $X_1 + \ldots + X_n \rightarrow Y_1 +  \ldots + Y_n$ can be decomposed into the reactions
$X_1 + X_2 \rightarrow X_{12}$, $X_{12} + X_3 \rightarrow X_{123}, \ldots, X_{123\ldots n-1} + X_n \rightarrow Y_{12\ldots n-1} + Y_n$, $Y_{12\ldots n-1} \rightarrow Y_{12\ldots n-2} + Y_{n-1}, \ldots, Y_{12} \rightarrow Y_1 + Y_2$.
We can verify that each affected module stably computes correctly with these expanded systems of reactions, and remains composable.




\section{Example}

In this section, we demonstrate the construction presented in the previous section through an example.
Consider the function shown in Equation~\ref{eq:example} in Section~\ref{sec:functions}.
As shown in that section, the function is superadditive, positive-continuous, and piecewise rational linear.
Thus, we can apply our construction to generate a composable CRN stably computing this function. Note that while this CRN is generated from our methodology, we have removed irrelevant species and reactions.
   
\begin{multicols}{2}
\noindent Making copies of input:
\begin{align}  
   X_1 &\rightarrow X_1'' + X_1'''\\
   X_2 &\rightarrow X_2'' + X_2'''\\
   X_3 &\rightarrow X_3'
\end{align}
Using $X_3'$ to make $P_{3}$, which catalyzes reactions for the domain $X_3 > 0$: 
\begin{align}  
   X_3' &\rightarrow P_{\{3\}}
\end{align}
Computing the sum in $Y_{\{3\}}$:
\begin{align}
   X_1'' &\rightarrow Y_{\{3\}}\\
   X_2'' &\rightarrow Y_{\{3\}}
\end{align}
Computing the min in $Y_\emptyset$:
\begin{align}
   X_1''' + X_2''' &\rightarrow Y_\emptyset
\end{align}
Making a copy of $Y_{\{3\}}$ for use in increasing $Y_\emptyset'$:
\begin{align}
   Y_{\{3\}} &\rightarrow Y_{\{3\}}' + Y_{\{3\}, \emptyset}
\end{align}
Increase $Y_\emptyset'$ so that it will not be the min when $x_3$ is present:
\begin{align}
   Y_{\{3\}, \emptyset} + P_{\{3\}} &\rightarrow Y_\emptyset' + P_{\{3\}}
\end{align}
Rename $Y_\emptyset$ to $Y_\emptyset'$ so that it will be summed with the term created by the previous reaction:
\begin{align}
   Y_\emptyset &\rightarrow Y_\emptyset'\\
   Y_\emptyset' + Y_{\{3\}}' &\rightarrow Y
\end{align}
\end{multicols}

\section{Functions Computable by Composable CRNs with Initial Context}\label{sec:initial_context}
So far, our CRCs restrict the concentrations of non-input species in the initial state to be zero.
One may consider some (constant) initial concentration of non-input species, called \emph{initial context}, and how that may affect computation.

\begin{definition}\label{def:initial-context}
A CRC with initial context is denoted $\calC^{I,\vec{i}} = (\Lambda,R,\Sigma,Y,I, \vec{i})$
with $\Lambda, R, \Sigma,$ and $Y$ defined as in Definition~\ref{def:CRC}, and the initial context species $I \subset \Lambda \setminus ( \Sigma \cup Y )$ and initial context concentrations $\vec{i} \in \R^{I}_{\geq0}$.
$\calC^{I,\vec{i}}$ stably computes $f : \Rp^n \to \Rp$ if, for all $\vx \in \Rp^n$ and all $\vc$ such that $\vx + \vec{i} \to \vc$, there exists an output stable state $\vo$ such that $\vc \to \vo$ and $\vo(Y) = f(\vx)$.
\end{definition}\label{def:initial-context}
We will show that such CRCs can compute exactly what we could already compute when we set one of our inputs to one. Note that this results in a larger class of computable functions than superadditive positive-continuous piecewise rational linear. For example, we can now compute the non superadditive function $f(x_1) = 1$. To formalize this idea, we will also need the following definition.
\begin{definition}\label{def:linear-extension}


Consider a piecewise affine (linear with an offset) function $f: \Rp^n \to \Rp$ defined by $g_i$ on domain $D_i$.
The linear extension $f': \Rp^n \times \mathbb{R}_{>0}\to \Rp$ of $f$ is constructed as follows: for all $\vec{x} \in \Rp^n, \gamma \in \mathbb{R}_{>0}$ with $\vec{x}/\gamma \in D_i$, let $f'(\vec{x},\gamma) = g_i(\vec{x}) - g_i(\vec{0}) + g_i(\vec{0}) \gamma$.


\end{definition}
For example, given the piecewise affine function $$f(x_1) = \begin{cases} 1 & x_1 > 1 \\ x_1 & x_1 \leq 1 \end{cases}$$ its linear extension would be $$f'(x_1,\gamma) = \begin{cases} \gamma & x_1 > \gamma \\ x_1 & x_1 \leq \gamma \end{cases}.$$ Observe that $f'$ is just the minimum of $x_1$ and $\gamma$.
Intuitively the linear extension replaces all constant terms with a linear term in the new variable $\gamma$.
We show that initial context for composable CRCs allows only functions whose linear extensions are superadditive, positive-continuous, piecewise rational linear. Intuitively, this gives us the same class of functions that we would have without initial context when we set one of our inputs to the constant value one.

As defined, we allow an arbitrary number of initial context species with differing initial concentrations, but we will focus on the single species case with an initial concentration of one.
This is well motivated: we show one initial context species with concentration one is equivalent in stable computing power to having any number of species with nonnegative rational initial concentrations.

\begin{lemma}\label{lem:equiv-initial-context}
Given a CRC with initial context $\calC^{I,\vec{i}}$ with $\vec{i} \in \mathbb{Q}^{I}_{\geq0}$ (rational initial concentrations) which stably computes $f$, there exists a CRC $\calC^{I',\vec{i'}}$
with $I' = \{S'\}$ and $\vec{i'}(S') = 1$ (one initial species with concentration one) which stably computes $f$. 
\end{lemma}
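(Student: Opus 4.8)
The plan is to simulate a collection of initial-context species with rational concentrations by a single species $S'$ with concentration $1$, by first ``minting'' the right amounts of each context species from $S'$ and then running the original CRC. First I would handle the case of a single context species with rational concentration $q = p/k$ (with $p, k$ positive integers): introduce $S'$ with the reaction $k\,S' \to p\,S$, where $S$ is the original context species. Starting from concentration $1$ of $S'$, this reaction can convert all of $S'$ into exactly $q$ units of $S$, reproducing the intended initial context. For the general case with context species $S_1,\dots,S_m$ having rational concentrations $q_1,\dots,q_m$, I would add reactions $k_j\,S' \to p_j\,S_j$ for each $j$ (one reaction per context species), all consuming the shared species $S'$; choosing a common denominator if desired, these reactions together can consume all of $S'$ while producing exactly $q_j$ of each $S_j$.

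The key steps, in order, are: (1) define $\calC^{I',\vec{i'}}$ as $\calC^{I,\vec{i}}$ augmented with the new species $S'$ and the minting reactions above, with $I' = \{S'\}$ and $\vec{i'}(S') = 1$; (2) show that from the initial state $\vx + \vec{e}_{S'}$ there is a reachable state equal to $\vx + \vec{i}$ (the original initial state of $\calC^{I,\vec{i}}$), obtained by firing each minting reaction to completion; (3) show conversely that \emph{every} state reachable in $\calC^{I',\vec{i'}}$ from $\vx + \vec{e}_{S'}$ can itself reach a state of the form $\vec{c}' $ where $\vec{c}'$ restricted to the original species is reachable in $\calC^{I,\vec{i}}$ from $\vx + \vec{i}$, and $\vec{c}'(S') = 0$; and (4) conclude that output-stable states of $\calC^{I',\vec{i'}}$ with the correct output value are reachable from any reachable state, using that $\calC^{I,\vec{i}}$ stably computes $f$ and that $S'$ does not appear in any reaction of the original CRC (so once $S'$ is exhausted, the dynamics coincide with $\calC^{I,\vec{i}}$, and $S'$ being present only adds the possibility of producing \emph{more} context species, which I must handle). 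For the output-stability direction I would use Lemma~\ref{lem:spliting-reordering}-style reordering (or Lemma~\ref{lem:atob_actobc}) to push all minting reactions to the front of any flux sequence, since the minting reactions only \emph{produce} original species and consume only $S'$, hence are independent of all original reactions.

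The main obstacle is step (3)/(4): in a reachable state of $\calC^{I',\vec{i'}}$ the species $S'$ need not be fully consumed, so we may be at a state with leftover $S'$ and only a partial amount of context species minted; I must argue that from such a state we can still reach an output-stable state with output $f(\vx)$. The resolution is that minting reactions are always applicable when $S'$ is present and are independent of the original reactions, so from any reachable state we can first fire all remaining minting reactions to completion, reaching a state whose projection onto the original species is reachable in $\calC^{I,\vec{i'}}$ \emph{but possibly with context species concentrations larger than $\vec{i}$ if we are unlucky about firing order}. However, since all of $S'$ came from concentration exactly $1$, the total amount of $S_j$ ever produced by minting is at most $q_j$, and if a minting reaction already fired partially, firing the rest still yields total $\le q_j$; to get \emph{exactly} $\vx + \vec{i}$ I use that the minting reaction for $S_j$ has a fixed stoichiometric ratio, so consuming all of $S'$ through these reactions in \emph{any} combination that exhausts $S'$ produces a convex combination of the $q_j$'s — here a subtle point arises if $m > 1$, since the $S'$ could be split among the reactions arbitrarily. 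I would circumvent this by using \emph{one} intermediate species: $S' \to S'_1 + \dots + S'_m$ (copying, as in Lemma~\ref{lem:linear-fn-construction}), then $k_j\,S'_j \to p_j\,S_j$, so that each $S'_j$ has its own ``budget'' of $1$ and is forced (when fully consumed) to produce exactly $q_j$ of $S_j$; since $S'_j$ does not appear elsewhere, any reachable state can complete these conversions, and the resulting projection onto original species is then reachable from $\vx + \vec{i}$ in $\calC^{I,\vec{i}}$. Stable computation of $f$ by $\calC^{I,\vec{i}}$ then finishes the argument.
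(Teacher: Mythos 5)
Your construction is essentially the same as the paper's: the paper also first copies $S'$ into per-species copies via $S' \to S_1' + \cdots + S_k'$ and then converts each copy via $b S_i' \to a S_i$, which is exactly the fix you arrive at for the budget-splitting subtlety when several minting reactions would compete for a single $S'$. Your additional verification --- that any reachable state can complete the minting, that the minting flux can be pushed to the front by independence, and that output stability carries over once $S'$ and the $S_i'$ are exhausted --- is more careful than the paper's proof, which gives only the construction.
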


\begin{proof}
Let $q_i = \frac{a_i}{b_i}$ for $a_i,b_i \in \mathbb{Z}_{\geq 0}$ be the initial (rational) concentrations of the initial context in $\calC^{I,\vec{i}}$.
Observe that the CRN:
$$S' \rightarrow S_1' + S_2' + \ldots + S_k'$$
can be used to produce $k$ species with concentrations equal to $S'$'s initial concentration.
Then the CRN:
$$bS_i' \rightarrow aS_i$$
for each $i$ produces a concentration of $q_i$ for species $S_i$. 
\end{proof}
While this schema cannot be used to generate initial context with irrational concentrations, continued fractions can be used to approximate irrational numbers as rational numbers with arbitrarily small error. Thus our restriction to one initial species with a initial concentration one is reasonable to consider for stable computation in this model.
To characterize the functions stably computable with initial context, we first prove some lemmas.
Recall $\text{Post}(\vec{c})$ is the set of states reachable from $\vec{c}$, i.e., $\{\vec{d} \ | \ \vec{c}\rightarrow \vec{d} \}$.

\begin{lemma}\label{lem:const-scaling-reachable-states}
Given a CRN $\{\Lambda, R\}$, for any $\gamma \in \mathbb{R}_{\geq 0}$ and $\vec{c} \in \Rp^{\Lambda}$, $\mathrm{Post}(\gamma\vec{c}) = \gamma\mathrm{Post}(\vec{c})$.
\end{lemma}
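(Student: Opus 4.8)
The plan is to prove the two inclusions $\gamma\,\mathrm{Post}(\vec c) \subseteq \mathrm{Post}(\gamma\vec c)$ and $\mathrm{Post}(\gamma\vec c) \subseteq \gamma\,\mathrm{Post}(\vec c)$ separately, and to reduce the second to the first by a scaling trick. First I would handle the trivial case $\gamma = 0$: then $\gamma\vec c = \vec 0$, no reaction is applicable at $\vec 0$ (recall every reaction has $\bfr \neq \vec 0$, so $[\bfr] \not\subseteq [\vec 0] = \emptyset$), hence $\mathrm{Post}(\vec 0) = \{\vec 0\}$, and $\gamma\,\mathrm{Post}(\vec c) = \{\vec 0\}$ as well. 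So assume $\gamma > 0$ for the rest.

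For the inclusion $\gamma\,\mathrm{Post}(\vec c) \subseteq \mathrm{Post}(\gamma \vec c)$, I would argue that a single straight-line step scales: if $\vec a \to_{\vec u} \vec b$, i.e. $\vec u$ is applicable at $\vec a$ and $\vec a + \vM\vec u = \vec b$, then $\vec u$ scaled, namely $\gamma\vec u$, is applicable at $\gamma\vec a$ (applicability only depends on which species are present, and $[\gamma\vec a] = [\vec a]$ since $\gamma > 0$), and $\gamma\vec a + \vM(\gamma\vec u) = \gamma(\vec a + \vM\vec u) = \gamma\vec b$; also $\gamma\vec b \in \Rp^\Lambda$. Hence $\gamma\vec a \to_{\gamma\vec u} \gamma\vec b$. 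Then I would induct on the number of segments $l$ in the definition of $l$-segment reachability: a length-$l$ path $\vec c = \vec d_0 \slto \vec d_1 \slto \cdots \slto \vec d_l = \vec d$ scales coordinatewise to a length-$l$ path from $\gamma\vec c$ to $\gamma\vec d$, so $\vec c \to \vec d$ implies $\gamma\vec c \to \gamma\vec d$, giving $\gamma\,\mathrm{Post}(\vec c) \subseteq \mathrm{Post}(\gamma\vec c)$.

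For the reverse inclusion, I would apply what was just proven with the roles reversed: starting from $\gamma\vec c$ and the scalar $1/\gamma$ (which is positive since $\gamma > 0$), we get $\tfrac{1}{\gamma}\,\mathrm{Post}(\gamma\vec c) \subseteq \mathrm{Post}\big(\tfrac{1}{\gamma}\cdot\gamma\vec c\big) = \mathrm{Post}(\vec c)$. Multiplying both sides by $\gamma$ gives $\mathrm{Post}(\gamma\vec c) \subseteq \gamma\,\mathrm{Post}(\vec c)$. Combining the two inclusions yields $\mathrm{Post}(\gamma\vec c) = \gamma\,\mathrm{Post}(\vec c)$.

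The argument is essentially routine; the only point requiring a little care is the key observation that applicability of a flux vector depends only on the \emph{support} (the set of present species), which is invariant under multiplication by a positive scalar — this is what makes the scaled flux vector $\gamma\vec u$ legal at $\gamma\vec a$. The main obstacle, such as it is, is just being careful to state the single-segment scaling claim cleanly and to treat $\gamma = 0$ as a genuinely separate (degenerate) case rather than folding it into the inductive argument, since there $1/\gamma$ is undefined and the support equality $[\gamma\vec a] = [\vec a]$ fails.
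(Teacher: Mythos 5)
Your proposal is correct and follows essentially the same route as the paper's proof: establish that scaling a reachability path by a positive scalar yields a valid path (the paper states this in one line; you spell out the single-segment case), treat $\gamma=0$ as a degenerate case, and obtain the reverse inclusion by applying the scaling claim with $1/\gamma$. Your version is slightly more careful about why the scaled flux vector remains applicable, but the argument is the same.
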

\begin{proof}
If $\vec{c} \to \vec{d}$, then for any $k\in \mathbb{R}_{\geq 0}$, $k\vec{c} \to k\vec{d}$. This can be verified by taking the straight line segments to get from $\vec{c}$ to $\vec{d}$ and scaling them by $k$.

When $\gamma=0$, this lemma is trivial, as the only state reachable from the zero state is the zero state and zero times any state yields the zero state. Thus we only need to consider the case where $\gamma > 0$.

Let $\vec{v} \in \text{Post}(\gamma\vec{c})$. By the definition of $\text{Post}$ this implies that $\gamma\vec{c} \to \vec{v}$. This implies that $\vec{c} \to \frac{1}{\gamma} \vec{v}$. Therefore $\frac{1}{\gamma} \vec{v} \in \text{Post}(\vec{c})$ and $\vec{v} \in \gamma\text{Post}(\vec{c})$. 

Let $\vec{v} \in \gamma\text{Post}(\vec{c})$. By the definition of $\text{Post}$ this implies that $\vec{c} \to \frac{1}{\gamma} \vec{v}$. This implies that $\gamma\vec{c} \to \vec{v}$. Which implies that $\vec{v} \in \text{Post}(\gamma\vec{c})$.
\end{proof}
\begin{lemma}\label{lem:const-scale-ini-ctx-behavior}
Let $\calC^{I,\vec{i}}$ with $I = \{S_1\}$ and $\vec{i}(S_1) = q_1$ stably compute $f$. Then $\calC^{I,\gamma\vec{i}}$ for $\gamma \in \R_{>0}$ stably computes some function $g$.
\end{lemma}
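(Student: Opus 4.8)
The plan is to transfer the stable-computation behavior of $\calC^{I,\vec{i}}$ to $\calC^{I,\gamma\vec{i}}$ by rescaling, using Lemma~\ref{lem:const-scaling-reachable-states}. The central observation is that for $\gamma \in \R_{>0}$ and any input $\vec{x} \in \Rp^n$, the initial state of $\calC^{I,\gamma\vec{i}}$ on $\vec{x}$ is $\vec{x} + \gamma\vec{i} = \gamma(\tfrac{1}{\gamma}\vec{x} + \vec{i})$, i.e.\ exactly $\gamma$ times the initial state of $\calC^{I,\vec{i}}$ on the input $\tfrac{1}{\gamma}\vec{x}$ (this is where positivity of $\gamma$ matters, so that $\tfrac{1}{\gamma}\vec{x}$ remains a valid input in $\Rp^n$). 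The candidate function computed by $\calC^{I,\gamma\vec{i}}$ is then $g(\vec{x}) = \gamma f(\tfrac{1}{\gamma}\vec{x})$.

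First I would fix $\vec{x} \in \Rp^n$ and an arbitrary $\vec{c}$ with $\vec{x} + \gamma\vec{i} \to \vec{c}$ in $\calC^{I,\gamma\vec{i}}$. By Lemma~\ref{lem:const-scaling-reachable-states}, $\mathrm{Post}(\vec{x}+\gamma\vec{i}) = \gamma\,\mathrm{Post}(\tfrac{1}{\gamma}\vec{x}+\vec{i})$, so $\tfrac{1}{\gamma}\vec{c}$ is reachable from $\tfrac{1}{\gamma}\vec{x}+\vec{i}$ in $\calC^{I,\vec{i}}$. Since $\calC^{I,\vec{i}}$ stably computes $f$, there is an output stable state $\vec{o}$ with $\tfrac{1}{\gamma}\vec{c} \to \vec{o}$ and $\vec{o}(Y) = f(\tfrac{1}{\gamma}\vec{x})$. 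Scaling segments by $\gamma$ (the elementary fact established at the start of the proof of Lemma~\ref{lem:const-scaling-reachable-states}) gives $\vec{c} \to \gamma\vec{o}$ with $(\gamma\vec{o})(Y) = \gamma f(\tfrac{1}{\gamma}\vec{x}) = g(\vec{x})$.

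Next I would verify that $\gamma\vec{o}$ is output stable in $\calC^{I,\gamma\vec{i}}$: if $\gamma\vec{o} \to \vec{o}'$ then $\vec{o} \to \tfrac{1}{\gamma}\vec{o}'$, and output stability of $\vec{o}$ forces $(\tfrac{1}{\gamma}\vec{o}')(Y) = \vec{o}(Y)$, hence $\vec{o}'(Y) = (\gamma\vec{o})(Y)$. Thus from every state reachable from $\vec{x}+\gamma\vec{i}$ one can reach an output stable state with output value $g(\vec{x})$, and this value depends only on $\vec{x}$; so by Definition~\ref{def:initial-context}, $\calC^{I,\gamma\vec{i}}$ stably computes $g$. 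I do not expect a genuine obstacle here: the whole argument is bookkeeping around Lemma~\ref{lem:const-scaling-reachable-states}, and the only points needing care are (i) keeping $\tfrac{1}{\gamma}\vec{x}$ inside the domain $\Rp^n$, which uses $\gamma>0$, and (ii) observing that output stability is scale-invariant, which again follows from scaling straight-line segments.
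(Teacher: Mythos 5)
Your proof is correct and follows essentially the same route as the paper's: both reduce the behavior of $\calC^{I,\gamma\vec{i}}$ on input $\vec{x}$ to that of $\calC^{I,\vec{i}}$ on $\tfrac{1}{\gamma}\vec{x}$ via Lemma~\ref{lem:const-scaling-reachable-states}, and both check that output stable states are preserved under scaling. Your write-up is in fact a bit more explicit than the paper's (which leaves $g$ implicit and is loose about the direction of the scaling), but there is no substantive difference in the argument.
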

\begin{proof}
Consider running $\calC$ with an initial concentration of $q_1' = \gamma q_1$ for the species $S_1$. Observe that the initial state $\vec{x'} = \frac{1}{\gamma} \vec{x}$, where $\vec{x}$ is a state that will stably compute $f$ under the definition of $\calC$. By Lemma~\ref{lem:const-scaling-reachable-states}, we know that the set of reachable states from $\vec{x'}$ is equal to the set of reachable states from $\vec{x}$ scaled by $\gamma$. Thus consider some state $\vec{c'}$ reached from $\vec{x'}$. Observe that there exists a state $\vec{c}$ reachable from $\vec{x}$ such that $\vec{c'} = \gamma \vec{c}$. Consider some output stable state $\vec{o}$ reachable from $\vec{c}$. Observe that by Lemma~\ref{lem:const-scaling-reachable-states} the state $\vec{o'} = \gamma \vec{o}$ must be reachable from $\vec{c'}$. Likewise by Lemma~\ref{lem:const-scaling-reachable-states} we know that $\vec{o'}$ must be an output stable state. Thus, we know that $\calC$ must stably compute some function regardless of the initial value for $S_1$.
\end{proof}

We thus know that scaling the value of the initial context retains the fact that $\calC$ stably computes a function in the region where that species has a positive concentration.
We can then view a single species of initial context as an additional input to $\calC$ and claim that this CRN stably computes some function, using lemmas from the case of no initial context to prove properties of that function.
\begin{lemma}\label{lem:linear-extension}
Let $\mathcal{C}^{I,\vec{i}}$ be output-oblivious with $I = \{S_1\}$ and $\vec{i}(S_1) = 1$. $\mathcal{C}^{I,\vec{i}}$ stably computes $f : \mathbb{R}_{\geq 0}^n \to \mathbb{R}_{\geq 0}$ only if $f$ is a piecewise affine function whose linear extension is superadditive positive-continuous piecewise rational linear.
\end{lemma}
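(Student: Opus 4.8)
The plan is to treat the single initial-context species $S_1$ (with concentration $1$) as an extra input to the CRC, reducing to the no-initial-context case. Concretely, given the output-oblivious CRC $\mathcal{C}^{I,\vec{i}}$ with $I = \{S_1\}$ and $\vec{i}(S_1)=1$, I would first define $\mathcal{C}'$ to be the same underlying CRN but with input species $\Sigma \cup \{S_1\}$ (so $S_1$ is now genuinely an input), output $Y$, and no initial context. By Lemma~\ref{lem:const-scale-ini-ctx-behavior} (applied with varying $\gamma$), $\mathcal{C}^{I,\gamma\vec{i}}$ stably computes \emph{some} function for each $\gamma \in \R_{>0}$; packaging these together shows that $\mathcal{C}'$ stably computes a single function $F : \Rp^n \times \R_{>0} \to \Rp$ with $F(\vec{x},1) = f(\vec{x})$. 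The subtlety is that $\mathcal{C}'$ is only guaranteed to stably compute $F$ on the portion of its input domain where the $S_1$-coordinate is positive; but since we only need information about $\gamma>0$, this is enough. Moreover $\mathcal{C}'$ is still output-oblivious (we changed only the labeling of input species, not the reactions), hence composable by Lemma~\ref{lem:not-using-out-imp-comp}.

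Next I would apply the already-established necessity results to $F$. Since $\mathcal{C}'$ stably computes $F$ on a domain that is (a restriction of) $\Rp^{n+1}$, Lemma~\ref{lem:pos-cont-pw-rl} gives that $F$ is positive-continuous and piecewise rational linear there, and Lemma~\ref{lem:superadditive} (applicable because $\mathcal{C}'$ is composable) gives that $F$ is superadditive on that domain. The key remaining step is to identify $F$ with the linear extension $f'$ of $f$ from Definition~\ref{def:linear-extension}. This uses Lemma~\ref{lem:const-scaling-reachable-states}: scaling the whole initial state (including $S_1$) by $\gamma$ scales $\mathrm{Post}$ by $\gamma$, so $F(\gamma\vec{x},\gamma) = \gamma F(\vec{x},1) = \gamma f(\vec{x})$, i.e. $F$ is positively homogeneous of degree $1$ jointly in $(\vec{x},\gamma)$. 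Combined with $F(\vec{x},1)=f(\vec{x})$, homogeneity forces $F(\vec{x},\gamma) = \gamma f(\vec{x}/\gamma)$ for $\gamma>0$; since $F$ is piecewise rational linear, on each piece $g_i$ of $f$ (with domain $D_i$) we get $F(\vec{x},\gamma) = \gamma g_i(\vec{x}/\gamma)$ whenever $\vec{x}/\gamma \in D_i$, and writing $g_i(\vec{y}) = \ell_i(\vec{y}) + c_i$ with $\ell_i$ linear and $c_i = g_i(\vec{0})$ the constant term, this equals $\ell_i(\vec{x}) + c_i\gamma = g_i(\vec{x}) - g_i(\vec{0}) + g_i(\vec{0})\gamma$, which is exactly $f'(\vec{x},\gamma)$. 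In particular $f$ itself must be piecewise affine (each $g_i$ is affine, being a piece of a function $F$ that is linear in the homogenized coordinates), and $f' = F$ inherits superadditivity, positive-continuity, and piecewise rational linearity from $F$.

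The main obstacle I anticipate is handling the domain-of-definition mismatch rigorously: $\mathcal{C}'$ stably computes $F$ only for $\gamma>0$, whereas the cited necessity lemmas (Lemmas~\ref{lem:pos-cont-pw-rl}, \ref{lem:superadditive}) are stated for functions on all of $\Rp^{n+1}$. Two ways around this: either observe that positive-continuity and superadditivity are properties one can check on the open half-space $\{\gamma>0\}$ and its closure's faces as needed (positive-continuity only constrains behavior on sets where the zero/positive pattern is fixed, so the $\gamma=0$ face is handled separately and trivially by the zero-state argument), or extend $F$ to $\gamma=0$ by continuity along cones (mirroring Lemma~\ref{lem:domains-are-cones}) and apply the lemmas to the extension. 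I would also need to double-check the superadditivity argument of Lemma~\ref{lem:superadditive} goes through when $S_1$ is treated as an input: adding two initial states adds their $S_1$-concentrations, which is fine, and the output-obliviousness of $\mathcal{C}'$ is what was needed there. The rest — the homogeneity computation and matching against Definition~\ref{def:linear-extension} — is routine algebra once the framework is set up.
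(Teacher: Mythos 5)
Your proposal is correct and follows essentially the same route as the paper: treat $S_1$ as an additional input species, invoke Lemma~\ref{lem:const-scale-ini-ctx-behavior} and the no-initial-context necessity results to conclude the resulting function is superadditive positive-continuous piecewise rational linear on the $\gamma>0$ domain, and then use the degree-one homogeneity coming from the scaling of reachable states (equivalently, the cone structure of the domains) to identify that function with the linear extension of $f$. Your explicit attention to the $\gamma>0$ domain restriction is a point the paper's proof passes over silently, but it does not change the argument.
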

\begin{proof}
Intuitively, we treat the initial context as an input species. 
For $\mathcal{C}^{I,\vec{i}} = (\Lambda, R,\Sigma, Y, I, \vec{i})$, let $\mathcal{C}' = (\Lambda, R, \Sigma \cup I, Y)$.
By assumption, $\mathcal{C}^{I,\vec{i}}$ is output-oblivious, so $\mathcal{C}'$ is output-oblivious.
Further, by Lemma~\ref{lem:const-scale-ini-ctx-behavior}, $\mathcal{C}'$ stably computes some function for all (positive) initial concentrations of $S_1$.
So $\mathcal{C}'$ must stably compute a superadditive, positive-continuous, piecewise rational linear function $f'$ on all domains with positive (nonzero) initial concentration of $S_1$. Since $\mathcal{C}^{I,\vec{i}}$ and $\mathcal{C}'$ share the same CRN, output species, and input species other than $S_i$, we know that for any $\vec{x} \in \mathbb{R}_{\geq 0}^{\Sigma}$, $f'(\vec{x},1) = f(\vec{x})$.
All that remains is to show that $f'$ is the linear extension of $f$. Pick any $\gamma \in \mathbb{R}_{>0}$, $\vec{x} \in \Rp^n$  such that $\vec{x} / \gamma \in D_i$. Let $g_i$ be the affine function used by $f$ on $D_i$. 
Since $f'(\vec{x}/\gamma,1) = f(\vec{x}/\gamma)$ and the domains of $f'$ are cones by Lemma~\ref{lem:domains-are-cones}, $f'(\vec{x},\gamma) = \gamma g_i(\vec{x} / \gamma)$. Expanding out the right hand side gives us that $ \gamma f'(\vec{x}/\gamma,1) = g_i(\vec{x}) - g_i(\vec{0}) + \gamma g_i(\vec{0})$. By Definition \ref{def:linear-extension} $f'$ is the linear extension of $f$.
\end{proof}

\begin{theorem}
Let $\calC^{I,\vec{i}}$ be output-oblivious with $\vec{i} \in Q^I_{\geq0}$. Then $\calC^{I,\vec{i}}$ stably computes $f : \Rp^n \to \Rp$ if and only if $f$ is a rational affine function whose linear extension is super-additive, positive-continuous, piecewise rational linear.
\end{theorem}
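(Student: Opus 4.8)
The plan is to establish the two directions separately, and in both cases reduce to the normalized situation of a single initial-context species $S'$ with $\vec{i}(S') = 1$, using Lemma~\ref{lem:equiv-initial-context} to pass between that normalized form and an arbitrary rational initial context. Note that the gadget reactions used in Lemma~\ref{lem:equiv-initial-context} (namely $S' \to S_1' + \cdots + S_k'$ and $b_j S_j' \to a_j S_j$) never have the output species as a reactant, so the reduction preserves output-obliviousness in both directions; I would record this small observation first.

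For the \textbf{only if} direction, suppose $\calC^{I,\vec{i}}$ is output-oblivious with $\vec{i} \in \mathbb{Q}^I_{\geq 0}$ and stably computes $f$. Applying Lemma~\ref{lem:equiv-initial-context} I would replace it by an equivalent output-oblivious $\calC^{I',\vec{i}'}$ with $I' = \{S'\}$, $\vec{i}'(S') = 1$. Lemma~\ref{lem:linear-extension} then already gives that $f$ is piecewise affine and that its linear extension $f'$ is superadditive, positive-continuous, and piecewise rational linear. The only thing left is to upgrade ``piecewise affine'' to ``piecewise rational affine'': writing the component of $f$ on domain $D_i$ as $g_i(\vec{x}) = \vec{a}_i \cdot \vec{x} + c_i$, Definition~\ref{def:linear-extension} gives $f'(\vec{x},\gamma) = \vec{a}_i \cdot \vec{x} + c_i \gamma$ on the corresponding conical domain, so the fact (from Lemma~\ref{lem:linear-extension}) that $f'$ is piecewise rational linear forces each coefficient of $\vec{a}_i$ and each offset $c_i$ to be rational.

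For the \textbf{if} direction, suppose $f : \Rp^n \to \Rp$ is piecewise rational affine and its linear extension $f' : \Rp^n \times \mathbb{R}_{>0} \to \Rp$ is superadditive, positive-continuous, and piecewise rational linear. I would apply Lemma~\ref{lem:full-construction} to $f'$, treating the $\gamma$-argument as an $(n+1)$-th input, to obtain a composable CRC stably computing $f'$; by the construction used there (or else by Lemma~\ref{lem:comp-implies-not-using-out}) this CRC may be taken output-oblivious. Then I designate the $(n+1)$-th input species as an initial-context species $S'$ with $\vec{i}'(S') = 1$; relabeling an input as initial context changes no reactions, so the resulting $\calC^{I',\vec{i}'}$ is still output-oblivious and has $\vec{i}' \in \mathbb{Q}^{I'}_{\geq 0}$. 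By Definition~\ref{def:initial-context}, running $\calC^{I',\vec{i}'}$ on $\vec{x}$ is exactly running the CRC for $f'$ on $(\vec{x},1)$, so every output-stable state reached has output value $f'(\vec{x},1)$, which by Definition~\ref{def:linear-extension} with $\gamma = 1$ equals $g_i(\vec{x}) - g_i(\vec{0}) + g_i(\vec{0}) = g_i(\vec{x}) = f(\vec{x})$ for $\vec{x} \in D_i$. Hence $\calC^{I',\vec{i}'}$ stably computes $f$, giving an instance of the desired form.

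I expect the substance here to be mostly bookkeeping rather than a genuine obstacle: the points needing care are checking that output-obliviousness survives the Lemma~\ref{lem:equiv-initial-context} reduction and the input-to-initial-context relabeling, and confirming that the slice $\gamma = 1$ sits inside the continuous part of the domain of $f'$ (the $\gamma$-coordinate is strictly positive there, so positive-continuity of $f'$ applies) so that correctness of the Lemma~\ref{lem:full-construction} construction transfers to the initial-context setting. The one step that is not a direct citation of an earlier result is the rationality upgrade in the ``only if'' direction, and that follows immediately by matching coefficients against Definition~\ref{def:linear-extension} as sketched above.
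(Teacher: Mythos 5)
Your proposal is correct and follows essentially the same route as the paper: the necessity direction is Lemma~\ref{lem:linear-extension} and the sufficiency direction applies Lemma~\ref{lem:full-construction} to the linear extension and then relabels the $\gamma$-input as the initial-context species. You are in fact slightly more careful than the paper's own (very terse) proof, in that you explicitly invoke Lemma~\ref{lem:equiv-initial-context} to normalize an arbitrary rational initial context, check that output-obliviousness survives that reduction, and spell out why the coefficients and offsets of $f$ must be rational.
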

 \begin{proof}
Lemma~\ref{lem:linear-extension} gives us the necessary conditions for composable CRCs with initial context. Now all we need to do is show that there is a composable CRC for any rational affine function $f$ whose linear extension is superadditive, positive-continuous, piecewise rational linear. Since the linear extension of $f$ is superadditive, positive-continuous, piecewise rational linear, Lemma~\ref{lem:full-construction} gives a CRC that computes this function. Observe that changing the input in this CRC representing the linear extension ($\gamma$) to the initial context species will give us a new CRC that computes $f$.
\end{proof}

\section{Discussion}
Instead of continuous concentrations of species, one may consider discrete counts.
This changes which functions are stably computed by CRNs.
Without the composability constraint,~\cite{chen2014deterministic} shows in the discrete model that a function $f : \N^n \rightarrow \N$ is stably computable by a direct CRN if and only if it is semilinear; i.e., its graph $\{(x,y) \in \N^n \times \N \ |\ f(x) = y \}$ is a semilinear subset of $\N^n \times \N$.
The proof that composably computable functions must be superadditive (Lemma~\ref{lem:superadditive}) holds for the discrete model as well.
Additionally, there exist functions which are superadditive and semilinear but are not computable in the discrete model by a composable CRN.
For example (the proof is omitted):
\[ f(x_1, x_2) = \begin{cases} 
       x_1 - 1 & x_1 > x_2 \\
       x_1 & x_1 \leq x_2,
   \end{cases}
\]
so the class of composably computable functions is slightly more restricted.

Three recent works characterized output-oblivious computation in the discrete model.
With initial context, the characterization for functions on two inputs ($f : \N^2 \rightarrow \N$) was given by~\cite{DBLP:conf/opodis/ChuggHC18},
and subsequently extended to arbitrarily many inputs~\cite{severson2019composable}.
More recently, \cite{hashemi2020composable} showed that without initial context, computable functions must be superadditive in addition to the constraints presented in \cite{severson2019composable}.


Our negative and positive results are proven with respect to stable computation, which formalizes our intuitive notion of rate-independent computation.
It is possible to strengthen our positive results to further show that our CRNs converge (as time $t \rightarrow \infty$) to the correct output from any reachable state under \emph{mass-action kinetics} (proof omitted). 
It is interesting to characterize the exact class of rate laws that guarantee similar convergence.

Apart from the dual-rail convention discussed in the introduction, other input/output conventions for computation by CRNs have been studied.
For example, \cite{fractionalEncoding} considers \emph{fractional encoding} in the context of rate-dependent computation.
As shown by dual-rail, different input and output conventions can affect the class of functions stably computable by CRNs. While using any superadditive positive continuous piecewise rational linear output convention gives us no extra computational power---since the construction in this paper shows how to compute such an output convention directly---it is unclear how these conventions change the power of rate-independent CRNs in general.

Finally we can ask what insights the study of composition of rate-independent modules gives for the more general case of rate-dependent computation.
Is there a similar tradeoff between ease of composition and expressiveness for other classes of CRNs?

\bibliography{references.bib}
\bibliographystyle{plain}

\clearpage

\newpage
\section{Appendix}

Most definitions and lemmas in this section work towards proving Lemma~\ref{lem:max}.
We also include a proof of Lemma~\ref{lem:domains-are-cones}.
\begin{definition}
A \emph{polyhedron} is a subset of $\R^n$ of the form $\{\vec{x}\ |\ A\vec{x} \le \vec{b}\}$ for some $m \times n$ matrix $A$ and some vector $\vec{b} \in \R^m$.
\end{definition}

\begin{definition}
A \emph{convex polytope} is the convex hull of a finite set of points in $\R^n$.
\end{definition}

\begin{definition}
A \emph{polyhedral cone} is a set of the form $\{\vec{x} = \lambda_1\vec{x}_1 + \ldots + \lambda_n\vec{x}_n\ |\ \lambda_1,\ldots,\lambda_n \ge 0\}$ for some finite set of points $\set{x_1, \ldots, x_n}$ in $\R^n$
\end{definition}
The following lemma comes from a previous work. Note that this sum is the Minkowski sum.
\begin{lemma}\label{lem:polyhedra} [Proven in \cite{schrijverlinprog}]
A subset $P \subseteq \R^n$ is a polyhedron if and only if $P = Q + C$ for some convex polytope $Q$ and some polyhedral cone $C$.
\end{lemma}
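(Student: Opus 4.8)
The plan is to derive both implications from the Minkowski--Weyl theorem for cones: a cone $C \subseteq \R^m$ is \emph{polyhedral} (an intersection of finitely many halfspaces through the origin) if and only if it is \emph{finitely generated}, i.e.\ of the form $\{\sum_i \lambda_i \vec{v}_i \ :\ \lambda_i \ge 0\}$ for finitely many $\vec{v}_i$. I would prove this cone statement first, since it carries the real content. The cleanest self-contained route is Fourier--Motzkin elimination: given generators $\vec{v}_1,\ldots,\vec{v}_k$, the set $\{(\vec{x},\vec{\lambda}) \ :\ \vec{x} = \sum_i \lambda_i \vec{v}_i,\ \vec{\lambda} \ge \vec{0}\}$ is a polyhedron in $\R^{m+k}$, and eliminating the variables $\lambda_1,\ldots,\lambda_k$ one at a time (each elimination step maps a polyhedron to a polyhedron) yields a halfspace description of $C$; conversely, applying this forward direction to the dual cone $C^{*} = \{\vec{y} \ :\ \langle \vec{y},\vec{x}\rangle \le 0 \text{ for all } \vec{x} \in C\}$, together with biduality $C^{**} = C$ for closed convex cones, shows that a polyhedral cone is finitely generated.

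For the forward direction of the lemma, let $P = \{\vec{x} \in \R^n \ :\ A\vec{x} \le \vec{b}\}$; if $P = \emptyset$ take $Q = C = \emptyset$, so assume $P \neq \emptyset$. Homogenize by setting $\tilde{P} = \{(\vec{x},t) \in \R^{n+1} \ :\ A\vec{x} - t\vec{b} \le \vec{0},\ t \ge 0\}$, which is a polyhedral cone. By the cone theorem, $\tilde{P} = \mathrm{cone}\{(\vec{x}_1,t_1),\ldots,(\vec{x}_r,t_r)\}$, and rescaling each generator lets us assume $t_i \in \{0,1\}$. Since $P \neq \emptyset$ there is at least one generator with $t_i = 1$, and intersecting with the hyperplane $\{t = 1\}$ recovers $P$; reading off the $t$-coordinate of a generic nonnegative combination shows directly that $P = \mathrm{conv}\{\vec{x}_i \ :\ t_i = 1\} + \mathrm{cone}\{\vec{x}_j \ :\ t_j = 0\}$. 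The first summand is a convex polytope $Q$ and the second a polyhedral cone $C$, as required.

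For the reverse direction, suppose $P = Q + C$ with $Q = \mathrm{conv}\{\vec{q}_1,\ldots,\vec{q}_s\}$ ($s \ge 1$; the case $Q = \emptyset$ gives $P = \emptyset$, which is a polyhedron) and, by the cone theorem, $C = \mathrm{cone}\{\vec{d}_1,\ldots,\vec{d}_t\}$. Then $P$ is the image of the polyhedron $\{(\vec{\mu},\vec{\nu}) \ :\ \vec{\mu} \ge \vec{0},\ \sum_i \mu_i = 1,\ \vec{\nu} \ge \vec{0}\}$ under the linear map $(\vec{\mu},\vec{\nu}) \mapsto \sum_i \mu_i \vec{q}_i + \sum_j \nu_j \vec{d}_j$. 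Writing this image as the projection of the graph of the map and eliminating the $(\vec{\mu},\vec{\nu})$ coordinates by Fourier--Motzkin shows $P$ is a polyhedron.

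I expect the main obstacle to be the cone version of Minkowski--Weyl itself: carrying out the Fourier--Motzkin elimination carefully (tracking that the property of being a polyhedron is preserved at each step) and establishing biduality $C^{**} = C$, which requires a separating-hyperplane argument for the closed convex cone $C$. Once that is in hand, the homogenization and projection steps above are routine bookkeeping.
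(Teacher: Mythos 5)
Your proposal is correct: this lemma is the classical Minkowski--Weyl decomposition theorem for polyhedra, which the paper does not prove but simply cites from Schrijver, and your sketch (cone version via Fourier--Motzkin elimination and cone duality, then homogenization for the forward direction and projection for the converse) is essentially the standard argument given there. The only cosmetic point is that when $P=\emptyset$ you should take $C=\{\vec{0}\}$ rather than $C=\emptyset$, since under the paper's definition a polyhedral cone always contains the origin.
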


\begin{lemma}\label{lem:polyhedron}
For a given state $\vec{c}$ of a CRN $\mathcal{C}$, the set of states $\{\vec{d}\ |\ \vec{c} \to^1 \vec{d}\}$ that are straight-line reachable from $\vec{c}$ is a polyhedron.
\end{lemma}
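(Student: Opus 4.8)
The plan is to describe the set $\{\vec{d} : \vec{c} \slto \vec{d}\}$ explicitly as the image of a polyhedral cone under the stoichiometry map $\vM$, translated by $\vec{c}$ and intersected with the nonnegative orthant, and then to invoke Lemma~\ref{lem:polyhedra} together with the standard closure properties of polyhedra under linear images, Minkowski sums with a point, and finite intersections.

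First I would isolate the set $R_{\vec{c}} \subseteq R$ of reactions applicable at $\vec{c}$ (those all of whose reactants lie in $[\vec{c}]$). By the definition of applicability of a flux vector, $\vec{u}$ is applicable at $\vec{c}$ iff $[\vec{u}] \subseteq R_{\vec{c}}$, which is equivalent to the linear conditions $\vec{u}(r) \ge 0$ for all $r \in R$ together with $\vec{u}(r) = 0$ for all $r \notin R_{\vec{c}}$. Hence the set $U_{\vec{c}}$ of flux vectors applicable at $\vec{c}$ is a polyhedral cone---indeed it is exactly the cone generated by the finite set of unit vectors $\{\vec{e}_r : r \in R_{\vec{c}}\}$. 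By the definition of straight-line reachability, $\{\vec{d} : \vec{c} \slto \vec{d}\} = \bigl(\vec{c} + \vM(U_{\vec{c}})\bigr) \cap \Rp^\Lambda$.

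Next I would verify the closure facts. The image $\vM(U_{\vec{c}})$ of the polyhedral cone $U_{\vec{c}}$ under the linear map $\vM$ is again a polyhedral cone, since it is the cone generated by the finite set $\{\vM\vec{e}_r : r \in R_{\vec{c}}\}$ of net-change vectors of the applicable reactions. Then $\vec{c} + \vM(U_{\vec{c}})$ is the Minkowski sum of the convex polytope $\{\vec{c}\}$ (the convex hull of a single point) with a polyhedral cone, so it is a polyhedron by Lemma~\ref{lem:polyhedra}. Finally, $\Rp^\Lambda$ is itself a polyhedron, cut out by the inequalities $-\vec{x}(S) \le 0$ for $S \in \Lambda$, and the intersection of two polyhedra is a polyhedron---one simply concatenates their defining inequality systems. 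Combining these observations gives that $\{\vec{d} : \vec{c} \slto \vec{d}\}$ is a polyhedron.

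The only subtlety worth flagging is that the nonnegativity constraint $\vec{d} \in \Rp^\Lambda$ is genuinely needed (running an applicable reaction with too large a flux can drive a species negative), which is why the intersection with $\Rp^\Lambda$ appears; everything else is bookkeeping with Lemma~\ref{lem:polyhedra} and elementary closure properties. An equivalent alternative route would fold nonnegativity directly into the flux domain by setting $U'_{\vec{c}} = \{\vec{u} \in \Rp^R : \vec{u}(r) = 0\ \forall r \notin R_{\vec{c}},\ \vec{c} + \vM\vec{u} \ge \vec{0}\}$, which is a polyhedron (no longer a cone because of the last constraint), observing that the reachable set equals $\vec{c} + \vM(U'_{\vec{c}})$, and using that the linear image of a polyhedron is a polyhedron via the $Q + C$ decomposition of Lemma~\ref{lem:polyhedra}.
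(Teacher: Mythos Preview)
Your proposal is correct and follows essentially the same approach as the paper: form the polyhedral cone of applicable flux vectors, push it through the stoichiometry matrix to get a polyhedral cone $C'$, translate by $\vec{c}$ and invoke Lemma~\ref{lem:polyhedra} to get a polyhedron, then intersect with the nonnegative orthant. Your write-up is slightly more explicit about why the intersection step preserves polyhedrality, but the argument is the same.
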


\begin{proof}
If $m = |R|$ is the number of reactions in $\mathcal{C}$ and $n = |\Lambda|$ is the number of species in $\mathcal{C}$, then the stoichiometry matrix can be thought of as a linear transformation from the reaction space $\R^m$ to the species space $\R^n$. Let $\mathcal{R}_{\vec{c}}$ be the set of basis vectors corresponding to reactions applicable at $\vec{c}$. Then since $M$ is a linear transformation, it sends the polyhedral cone defined by $\mathcal{R}_{\vec{c}}$ to a polyhedral cone $C'$ in $\R^n$. By Lemma~\ref{lem:polyhedra}, the set $\vec{c} + C'$ is a polyhedron in $\R^n$, and since the set of states that are straight-line reachable from $\vec{c}$ is the intersection of this polyhedron with the set of all vectors with nonnegative components, it is also a polyhedron.
\end{proof}

\begin{definition}
The set of \emph{possible species} produced from a state $\vec{c}$ is 
\[\mathcal{P}(\vec{c}) = \bigcup_{d \in \text{Post}(\vec{c})} [\vec{d}]\]
\end{definition}

\begin{lemma}\label{lem:convex-combination}
For a CRN the set of reachable states is closed under convex combination.
\end{lemma}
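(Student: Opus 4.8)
The plan is to show that for any state $\vec{c}$ of the CRN, the reachable set $\mathrm{Post}(\vec{c})$ is convex, i.e., closed under finite convex combinations, using three ingredients already established: the scaling identity $\mathrm{Post}(\gamma\vec{c}) = \gamma\,\mathrm{Post}(\vec{c})$ of Lemma~\ref{lem:const-scaling-reachable-states}; the ``inert species cannot block reactions'' fact of Lemma~\ref{lem:atob_actobc}; and transitivity of $\rightarrow$, which is immediate from Definition~\ref{defn-reachable-segment} since concatenating an $l$-segment path with an $l'$-segment path gives an $(l+l')$-segment path.

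First I would record the basic move: if $\vec{d} \in \mathrm{Post}(\vec{c})$ and $\lambda \in [0,1]$, then scaling $\vec{c} \rightarrow \vec{d}$ by $\gamma = \lambda$ (Lemma~\ref{lem:const-scaling-reachable-states}) gives $\lambda\vec{c} \rightarrow \lambda\vec{d}$, and then adding the inert amount $(1-\lambda)\vec{c}$ (Lemma~\ref{lem:atob_actobc}) gives $\vec{c} = \lambda\vec{c} + (1-\lambda)\vec{c} \rightarrow \lambda\vec{d} + (1-\lambda)\vec{c}$. For the general statement, take $\vec{d}_1, \dots, \vec{d}_m \in \mathrm{Post}(\vec{c})$ and weights $\lambda_1, \dots, \lambda_m \geq 0$ with $\sum_i \lambda_i = 1$. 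For each $i$, Lemma~\ref{lem:const-scaling-reachable-states} gives $\lambda_i\vec{c} \rightarrow \lambda_i\vec{d}_i$. Starting from $\vec{c} = \sum_{i=1}^m \lambda_i\vec{c}$, I would apply Lemma~\ref{lem:atob_actobc} $m$ times, each time ``peeling off'' one block $\lambda_i\vec{c}$ and converting it to $\lambda_i\vec{d}_i$ while treating the blocks already converted and those not yet converted as inert: $\vec{c} \rightarrow \lambda_1\vec{d}_1 + \sum_{i\geq 2}\lambda_i\vec{c} \rightarrow \lambda_1\vec{d}_1 + \lambda_2\vec{d}_2 + \sum_{i\geq 3}\lambda_i\vec{c} \rightarrow \cdots \rightarrow \sum_{i=1}^m \lambda_i\vec{d}_i$. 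Transitivity then yields $\vec{c} \rightarrow \sum_i \lambda_i\vec{d}_i$, which is exactly convexity of $\mathrm{Post}(\vec{c})$. Degenerate weights $\lambda_i = 0$ cause no trouble, since Lemma~\ref{lem:const-scaling-reachable-states} at $\gamma = 0$ just gives $\vec{0} \rightarrow \vec{0}$.

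I do not expect a real obstacle: the argument is a short composition of facts proved earlier, and notably it works for all real weights, so no rational approximation or closedness of $\mathrm{Post}(\vec{c})$ is needed. The only points needing a word of care are that each intermediate vector is a nonnegative combination of nonnegative vectors, hence a legitimate state, and that the ``inert'' parts genuinely sit idle, which is precisely the content of Lemma~\ref{lem:atob_actobc}. An alternative self-contained route, not invoking Lemma~\ref{lem:const-scaling-reachable-states}, is to take a segment path $\vec{c} = \vec{c}_0 \rightarrow_{\vec{u}_1} \cdots \rightarrow_{\vec{u}_k} \vec{d}$ and run the scaled fluxes $\lambda\vec{u}_1, \dots, \lambda\vec{u}_k$, checking inductively that $(1-\lambda)\vec{c}_0 + \lambda\vec{c}_j$ is nonnegative and that every reaction applicable at $\vec{c}_{j-1}$ remains applicable at $(1-\lambda)\vec{c}_0 + \lambda\vec{c}_{j-1}$ when $\lambda > 0$; but the lemma-based proof is shorter and I would present that one.
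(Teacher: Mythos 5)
Your proof is correct and follows essentially the same route as the paper's: split $\vec{c}$ into the blocks $\lambda_i\vec{c}$, use the scaling lemma (Lemma~\ref{lem:const-scaling-reachable-states}) to get $\lambda_i\vec{c}\rightarrow\lambda_i\vec{d}_i$, and recombine. The only difference is that you explicitly justify the recombination step via Lemma~\ref{lem:atob_actobc} and transitivity, which the paper leaves implicit.
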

\begin{proof}
Consider some state $\vec{c}$ and states $S=\set{\vec{\alpha_1}, \ldots, \vec{\alpha_k}}$ reachable from $\vec{c}$. Let $\vec{d} = \sum_{i=0}^k a_i \alpha_i$ where $\forall i\; a_i>0$ and $\sum_{i=0}^k a_i = 1$. By lemma~\ref{lem:const-scaling-reachable-states} we know that $\forall i\; a_i\vec{\alpha_i}$ is reachable from $a_i \vec{c}$. Since $\sum_{i=0}^k a_i \vec{c} = \vec{c}$, we know that $\vec{c} \to \sum_{i=0}^k a_i \vec{\alpha_i}$, which is equal to $\vec{d}$.
\end{proof}

\begin{lemma}\label{lem:all-pos-exist}
Given a state $\vec{c}$, there is a state $\vec{d}$ reachable from $\vec{c}$ such that $\mathcal{P}(\vec{c}) = [\vec{d}]$. For such a state, $\mathcal{P}(\vec{d}) = \mathcal{P}(\vec{c})$.
\end{lemma}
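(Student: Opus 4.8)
The plan is to exhibit $\vec{d}$ explicitly as a positive convex combination of finitely many reachable states and then appeal to the closure property already established in Lemma~\ref{lem:convex-combination}. Concretely, since $\mathcal{P}(\vec{c}) \subseteq \Lambda$ is finite, I would enumerate $\mathcal{P}(\vec{c}) = \{S_1,\dots,S_k\}$ and, for each $j \in [k]$, pick (directly from the definition of $\mathcal{P}(\vec{c})$) a state $\vec{d}_j \in \mathrm{Post}(\vec{c})$ with $\vec{d}_j(S_j) > 0$. Then I would set $\vec{d} = \frac{1}{k}\sum_{j=1}^k \vec{d}_j$, handling the degenerate case $\mathcal{P}(\vec{c}) = \emptyset$ (which forces $\vec{c} = \vec{0}$, so $\mathrm{Post}(\vec{c}) = \{\vec{0}\}$) separately by taking $\vec{d} = \vec{c}$.

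The next step is to verify the two defining properties of $\vec{d}$. Reachability of $\vec{d}$ from $\vec{c}$ is immediate from Lemma~\ref{lem:convex-combination}. For the support, I would observe that because each $\vec{d}_j$ lies in $\Rp^\Lambda$ and the weights $1/k$ are strictly positive, no cancellation can occur, so $[\vec{d}] = \bigcup_{j=1}^k [\vec{d}_j]$; this set contains every $S_j$, giving $\mathcal{P}(\vec{c}) \subseteq [\vec{d}]$, while $[\vec{d}] \subseteq \mathcal{P}(\vec{c})$ holds trivially since $\vec{d}\in\mathrm{Post}(\vec{c})$. Hence $[\vec{d}] = \mathcal{P}(\vec{c})$, proving the first assertion.

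For the final sentence I would use that reachability is reflexive (the zero flux vector $\vec{u} = \vec{0}$ witnesses $\vec{d}\to\vec{d}$) and transitive (concatenate the segment sequences, per Definition~\ref{defn-reachable-segment}): transitivity yields $\mathrm{Post}(\vec{d}) \subseteq \mathrm{Post}(\vec{c})$ and hence $\mathcal{P}(\vec{d}) \subseteq \mathcal{P}(\vec{c})$, while reflexivity gives $\mathcal{P}(\vec{c}) = [\vec{d}] \subseteq \mathcal{P}(\vec{d})$. Combining the two inclusions gives $\mathcal{P}(\vec{d}) = \mathcal{P}(\vec{c})$.

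I do not anticipate a genuine obstacle here: the only mild subtlety is the ``support of a positive convex combination equals the union of supports'' step, which is valid precisely because states have nonnegative coordinates, together with the empty/zero-state edge case. All the real content is packaged in Lemma~\ref{lem:convex-combination}, which is already available.
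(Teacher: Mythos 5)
Your proof is correct and takes essentially the same route as the paper's: both pick a witness state for each species of $\mathcal{P}(\vec{c})$, average them, invoke Lemma~\ref{lem:convex-combination} for reachability, and close with the same two inclusions for $\mathcal{P}(\vec{d}) = \mathcal{P}(\vec{c})$. Your treatment of the degenerate case and the explicit note that nonnegativity prevents cancellation in the convex combination are slightly more careful than the paper's, but the argument is the same.
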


\begin{proof}
If $\vec{c}$ is the zero vector, observe that $\mathcal{P}(\vec{c}) = [\vec{c}]$, so setting $\vec{d}=\vec{c}$ makes this hold. Otherwise, for each species $S \in \mathcal{P}(\vec{c})$, there is some state $\vec{d}_S$ reachable from $\vec{c}$ with $S \in [\vec{d}_S]$. Then the state $\vec{d} = \frac{1}{|\mathcal{P}(\vec{c})|}\sum_{S \in \mathcal{P}(\vec{c})} \vec{d}_S$ is reachable from $\vec{c}$ by lemma~\ref{lem:convex-combination}. Since $\vec{d}$ contains a positive contribution from each $\vec{d}_S$, $\mathcal{P}(\vec{c}) = [\vec{d}]$. Since $\vec{c} \to \vec{d}$ we know that $\mathcal{P}(\vec{d}) \subseteq \mathcal{P}(\vec{c})$. Since $\mathcal{P}(\vec{c}) = [\vec{d}]$ and $[\vec{d}] \subseteq \mathcal{P}(\vec{d})$ we know that $\mathcal{P}(\vec{c}) \subseteq \mathcal{P}(\vec{d})$. Thus we can conclude that $\mathcal{P}(\vec{d}) = \mathcal{P}(\vec{c})$.

\end{proof}

\begin{lemma}\label{lem:all-pos-prop}
If $\vec{c}$ is a state such that $\mathcal{P}(\vec{c}) = [\vec{c}]$, then any state $\vec{d}$ that is reachable from $\vec{c}$ is straight-line reachable from $\vec{c}$.
\end{lemma}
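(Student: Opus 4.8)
The plan is to exploit the hypothesis $\mathcal{P}(\vec{c}) = [\vec{c}]$ to show that every reaction ever applied on a trajectory starting at $\vec{c}$ is already applicable at $\vec{c}$, and then collapse an arbitrary multi-segment trajectory into a single straight-line segment by summing its flux vectors.

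First I would unwind the definition of reachability: if $\vec{c} \to \vec{d}$, then by Definition~\ref{defn-reachable-segment} there is a sequence $\vec{c} = \vec{c}_0 \to_{\vec{u}_1} \vec{c}_1 \to_{\vec{u}_2} \cdots \to_{\vec{u}_k} \vec{c}_k = \vec{d}$ of straight-line steps, with each flux vector $\vec{u}_i$ applicable at $\vec{c}_{i-1}$. Each intermediate state $\vec{c}_i$ is reachable from $\vec{c}$ and hence lies in $\mathrm{Post}(\vec{c})$, so by the definition of $\mathcal{P}$ we get $[\vec{c}_i] \subseteq \mathcal{P}(\vec{c}) = [\vec{c}]$ for every $i$.

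Next I would turn this into applicability at $\vec{c}$. Fix a segment $i$ and a reaction $r \in [\vec{u}_i]$. Since $\vec{u}_i$ is applicable at $\vec{c}_{i-1}$, all reactants of $r$ lie in $[\vec{c}_{i-1}] \subseteq [\vec{c}]$, so $r$ is applicable at $\vec{c}$. Thus every reaction in $[\vec{u}] := \bigcup_{i=1}^k [\vec{u}_i]$ is applicable at $\vec{c}$, i.e., the flux vector $\vec{u} = \vec{u}_1 + \cdots + \vec{u}_k$ is applicable at $\vec{c}$. Then by linearity of the stoichiometry matrix $\vec{M}$ and a telescoping computation, $\vec{c} + \vec{M}\vec{u} = \vec{c} + \sum_{i=1}^k \vec{M}\vec{u}_i = \vec{c}_k = \vec{d}$; since $\vec{d}$ is a state it lies in $\Rp^\Lambda$, so $\vec{c} \to_{\vec{u}} \vec{d}$ is a single valid straight-line step, as required.

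I do not expect a genuine obstacle: the argument is short, and the only point requiring care is that applicability of a reaction is determined solely by which species are present, so that passing from $\vec{c}_{i-1}$ to the possibly ``larger'' state $\vec{c}$ cannot destroy applicability; one also checks that the target $\vec{d}$ has nonnegative coordinates, which is immediate since it is assumed to be a state. (The degenerate case $\vec{d} = \vec{c}$, or an empty segment sequence, is handled by the zero flux vector.)
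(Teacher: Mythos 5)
Your proof is correct and follows essentially the same approach as the paper's: observe that $\mathcal{P}(\vec{c}) = [\vec{c}]$ forces every reaction used along the trajectory to be applicable already at $\vec{c}$, then sum the segment flux vectors into a single applicable flux vector reaching $\vec{d}$. You simply spell out the details (intermediate states lying in $\mathrm{Post}(\vec{c})$, applicability depending only on which species are present, nonnegativity of $\vec{d}$) that the paper's terser proof leaves implicit.
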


\begin{proof}
Since $\mathcal{P}(\vec{c}) = [\vec{c}]$, by the definition of $\mathcal{P}(\vec{c})$ we know that the set of applicable reactions from $\vec{c}$ is a superset of those applicable from any state reachable from $\vec{c}$. Thus we can take the sum of all the straight-line segments used to reach $\vec{d}$ from $\vec{c}$ and apply them all in a single straight-line segment to get $\vec{c} \to^1 \vec{d}$.
\end{proof}

\noindent\textbf{Lemma~\ref{lem:max}.} 
\emph{For any state $\vec{c}$ and any species $S$, if the amount of $S$ present in any state reachable from $\vec{c}$ is bounded above, there is a state $\vec{d}$ reachable from $\vec{c}$ such that for any state $\vec{a}$ reachable from $\vec{d}$, we know that $\vec{a}(S) \le \vec{d}(S)$.}

\begin{proof}
By Lemma~\ref{lem:all-pos-exist}, there is some $\vec{c}_1$ reachable from $\vec{c}$ such that $[\vec{c}_1] = \mathcal{P}(\vec{c})$. By Lemma~\ref{lem:polyhedron}, we know that the states that are straight-line reachable from $\vec{c}_1$ are a polyhedron $P$. The linear map $\R^n \to \R$ sending $\vec{x} \mapsto \vec{x}(S)$ maps $P$ to some polyhedral subset of $\R$---in particular this is a closed subset. Since we assume that the image of this map is bounded above, we know that this subset attains its maximum $M$, so there is some $\vec{d} \in P$ with $\vec{d}(S) = M$. Any state $\vec{a}$ that is reachable from $\vec{d}$ is also reachable from $\vec{c}_1$, so by Lemma~\ref{lem:all-pos-prop} it is contained in $P$. As a result, $\vec{a}(S) \le M = \vec{d}(S)$.
\end{proof}

\noindent\textbf{Lemma~\ref{lem:domains-are-cones}.} 
\emph{ Suppose we are given a continuous piecewise rational linear function $f: \mathbb{R}_{>0}^n \to \mathbb{R}_{\ge 0}$. Then we can choose domains for $f$ which are cones which contain an open ball of non-zero radius.}

\begin{proof}
Since $f$ is piecewise rational linear, we can pick a finite set of domains $\mathcal{D} = \set{D_i}_{i = 1}^N$ for $f$, such that $f|_{D_i} = g_i|_{D_i}$, where $g_i$ is a rational linear function. Fix a domain $D_k$, and consider any point $\vec{x} \in D_k$. Since the open ray $\ell_{\vec{x}}$ from the origin passing through $\vec{x}$ is contained in $\mathbb{R}_{>0}^n$, it is covered by the domains in $\mathcal{D}$. If we write any point $\vec{y} \in \ell_{\vec{x}}$ in the form $t \cdot \vec{x}$, then, for each $i$, the restriction of $g_i$ to $D_i \cap \ell_{\vec{x}}$ is of the form $g_i(t \cdot \vec{x}) = \alpha_i t$ for some $\alpha_i \in \mathbb{R}$. Since $\vec{x} \in D_k \cap \ell_{\vec{x}}$, we know that $f(1 \cdot \vec{x}) = \alpha_k \cdot 1 = \alpha_k$

Now suppose that for some $s \in \mathbb{R}_{> 0}$ we know that $f(s \cdot \vec{x}) \neq \alpha_k s$. First consider the case where $s > 1$. Then define the set $A = \set{t \in [1,s] \ |\ f(t\cdot\vec{x}) = \alpha_k t}$ and define the set $B = \set{t \in [1,s] \ |\ f(t\cdot\vec{x}) \neq \alpha_k t}$. We know that $A$ is non-empty since $1 \in A$, so $\sup A$ exists - call it $t'$. From the standard properties of the supremum, we know that there exists a sequence of points $\set{t_j}_{j = 1}^\infty$ such that $t_j \in A$ for all $j$ and $\lim_{j \to \infty} t_j = t'$. As a result, from the continuity of $f$, we see that:

\[f(t' \cdot \vec{x}) = \lim_{j \to \infty} f(t_j \cdot \vec{x}) = \lim_{j \to \infty} \alpha_k t_j = \alpha_k t'\]

So $t' \in A$. However, by assumption, $s \in B$, so that $t' < s$. Since $t'$ is an upper bound on $A$, it must then be the case that $(t', s] \subseteq B$, so that there exists a sequence of points $\set{s_j}_{j = 1}^\infty$ such that $s_j \in B$ for all $j$ and $\lim_{j \to \infty} s_j = t'$. Since there are only finitely many domains in $\mathcal{D}$, but infinitely many $s_j$, by the pigeonhole principle infinitely many of the $s_j$ must be contained in a single domain $D_{k'}$. Now write the subsequence of points contained in $D_{k'}$ as $\set{s_{j'}}_{j' = 1}^\infty$. We still know that $\lim_{j' \to \infty} s_{j'} = t'$, so by the continuity of $f$ and the fact that $s_{j'} \in D_{k'}$, we see that:

\[\alpha_{k}t' = f(t'\cdot\vec{x}) = \lim_{j' \to \infty} f(s_{j'} \cdot\vec{x}) = \lim_{j' \to \infty} \alpha_{k'}s_{j'} = \alpha_{k'}t'\]

Since $t' > 0$, this implies that $\alpha_{k'} = \alpha_k$, so that $f(s_{j'}\cdot \vec{x}) = \alpha_k s_{j'}$. However, this contradicts the fact that we were able to choose $s_{j'} \in B$. As a result, our assumption, that there is some $s > 1$ such that $f(s\cdot \vec{x}) \neq \alpha_k s$, must be false. A similar argument, using the infimum instead of the supremum, shows that there can be no $s < 1$ such that $f(s\cdot \vec{x}) \neq \alpha_k s$. As a result, for every point $t \in \ell_{\vec{x}}$, we know $f(t \cdot\vec{x}) = \alpha_k t$. In other words, $f|_{\ell_{\vec{x}}} = g_k|_{\ell_{\vec{x}}}$, so we can replace $D_k$ with $D_k \cup \ell_{\vec{x}}$ without issue. Doing this for every $\vec{x} \in D_k$, we can replace $D_k$ with a cone. By enlarging every domain in $\mathcal{D}$ in this way, we can choose domains for $f$ which are cones.

Since $f$ is continuous, we can replace each $D_i \in \mathcal{D}$ by its closure, which is again a cone. Suppose that for any $D_i \in \mathcal{D}$, there is a point $\vec{x} \in D_i$ not in the interior of $D_i$. Then $\vec{x}$ is in the closure of the complement of $D_i$, so there exists a sequence $\set{\vec{x}_k}_{k = 1}^\infty$ of points in the complement of $D_i$ such that $\lim_{k \to \infty} \vec{x}_k = \vec{x}$. Since the complement of $D_i$ is covered by the $D_j \in \mathcal{D}$, where $j \not= i$, we know that each $\vec{x}_k$ lies in one of the $D_j$. Since there are only finitely many $D_j$ but infinitely many $\vec{x}_k$, we know that infinitely many $\vec{x}_k$ must lie in at least one of the $D_j$. As a result, $\vec{x}$ is in the closure of this $D_j$, and since $D_j$ is closed, we see that $\vec{x} \in D_j$. Because of this, if $D_i$ has no interior points, then it is completely contained in the other $D_j$, so we can remove it from the set of domains. After doing this for every $D_i$ which contains no interior points, we can ensure that the domains we have chosen for $f$ all contain an open ball of non-zero radius.
\end{proof}

\end{document}